\DeclareMathOperator*{\argmax}{argmax}
\DeclareMathOperator*{\Bin}{Bin}
\DeclareMathOperator*{\Bern}{Bern}
\newcommand{\bad}{{\mathcal B}}
\newcommand{\divergence}{\textup{Div}}
\def\Zsdp{\hat{Z}}
\def\SDP{semidefinite programming}
\def\twomodel{two-latent variable stochastic block model}
\def\Twomodel{Two-latent variable stochastic block model}
\def\twocensoredmodel{two-latent variable censored block model}
\def\Twocensoredmodel{Two-latent variable censored block model}
\newcommand{\Figwidth}{\columnwidth}%
\def\twocolbreak{\nonumber\\ &}%
\def\twocolAlignMarker{&}%
\def\onecolAlignMarker{}%
\newcommand{\Figwidth}{4.5in}%
\def\twocolbreak{}%
\def\twocolAlignMarker{}%
\def\onecolAlignMarker{&}%
\newtheorem{Theorem}{Theorem}
\newtheorem{Lemma}{Lemma}
\newtheorem{Remark}{Remark}
\newtheorem{Example}{Example}
\newtheorem{Corollary}{Corollary}
\definecolor{MyRed}{rgb}{139 0 0}
\definecolor{EndMyRed}{rgb}{0 0 0}
\def\qzero{q_0}
\def\qone{q_2}
\def\qtwo{q_1}
\def\qthree{q_3}
\def\Tone{T_1}
\def\Ttwo{T_2}
\def\Tthree{T_3}
\def\qbf{\mathbf{q}}
\def\gbf{\mathbf{g}}
\def\hbf{\mathbf{h}}
\def\Tset{\mathcal{T}}
\def\Hset{\mathcal{H}}
\def\Dist{P}
\def\Prob{{\mathbb P}}
\def\ahat{\hat{a}}
\def\bhat{\hat{b}}
\def\vprim{u}
\def\paaa{p}
\def\pbbb{\hat{p}}
\def\rhoy{\rho}
\newcommand{\lambdavec}[2]{{\lambda_{#1,#2}}}
\newcommand{\lambdavechat}[2]{{\hat{\lambda}_{#1,#2}}}
\newcommand{\lambdascalar}[4]{{\lambda^{(#3,#4)}_{#1,#2}}}
\newcommand{\lambdascalarhat}[4]{{\hat{\lambda}^{(#3,#4)}_{#1,#2}}}
\newcommand{\qij}[2]{{q^{(#1,#2)}}}
\newcommand{\gij}[2]{{g^{(#1,#2)}}}
\newcommand{\hij}[2]{{h^{(#1,#2)}}}
\newcommand{\dij}[2]{{d^{(#1,#2)}}}
\newcommand{\Dij}[2]{{D^{(#1,#2)}}}
\newcommand{\wij}[2]{{w^{(#1,#2)}}}
\newcommand{\Wij}[2]{{W^{(#1,#2)}}}
\newcommand{\qtildeij}[2]{{\Tilde{q}^{(#1,#2)}}}
\newcommand{\gtildeij}[2]{{\Tilde{g}^{(#1,#2)}}}
\newcommand{\htildeij}[2]{{\Tilde{h}^{(#1,#2)}}}
\title{Community Detection with Known, Unknown, or Partially Known Auxiliary Latent Variables}
\author{ Mohammad~Esmaeili and~Aria~Nosratinia,~\IEEEmembership{Fellow,~IEEE}
\thanks{M. Esmaeili and A. Nosratinia are with the Department of Electrical and Computer Engineering, The University of Texas at Dallas, Richardson, TX 75083-0688, USA, Email: esmaeili@utdallas.edu, aria@utdallas.edu. This work was supported in part by the NSF grant CIF-2008684.}
}
\begin{document}
\maketitle

\begin{abstract}
Empirical observations suggest that in practice, community membership does not completely explain the dependency between the edges of an observation graph. 
The residual dependence of the graph edges are modeled in this paper, to first order, by auxiliary node latent variables that affect the statistics of the graph edges but {\em carry no information about the communities of interest.} We then study community detection in graphs obeying the stochastic block model and censored block model with auxiliary latent variables. We analyze the conditions for exact recovery when these auxiliary latent variables are unknown, representing unknown nuisance parameters or model mismatch. We also analyze exact recovery when these secondary latent variables have been either fully or partially revealed. Finally, we propose a semidefinite programming  algorithm for recovering the desired labels when the secondary labels are either known or unknown. We show that exact recovery is possible by semidefinite programming  down to the respective maximum likelihood exact recovery threshold. 
\end{abstract}

\begin{IEEEkeywords}

Community Detection, Latent Variables, Stochastic Block Model (SBM), Censored Block Model (CBM), Graph Inference, Exact Recovery, Semidefinite Programming (SDP), Chernoff-Hellinger Divergence.
\end{IEEEkeywords}

\section{Introduction}
Community detection refers to a clustering of the nodes of a graph based on the observation of the edges. In many applications, this involves identifying groups of nodes that are more densely connected within the group than to nodes outside the group.
Community detection has many applications such as finding like-minded people in social networks~\cite{Girvan2002}, exploration of biomedical networks~\cite{Chen2006}, improving link predictors and recommendation systems~\cite{berahmand2021preference, berahmand2021modified, Xu2014}, and is also relevant to network reconstruction problems~\cite{li2017reconstruction, huang2020sparse, peixoto2019network, wang2011network}. 
Community detection has been widely investigated in the literature from both theoretical and algorithmic perspectives. 
Community detection is based on graph models such as the stochastic block model and the censored block model~\cite{holland1983stochastic, abbe2015community, hajek2015exact, esmaeili2020community, saade2015spectral, fronczak2013exponential, esmaeili2021community}. 
Several metrics are used in this field to characterize the asymptotic behavior of the residual errors as the size of the graph grows, including correlated recovery, weak recovery, almost exact recovery, and exact recovery~\cite{decelle2011inference, mossel2015reconstruction, massoulie2014community, mossel2018proof, Esmaeili.BSSBM.Partially.Revealed, mossel2016density, yun2014community, abbe2017community, abbe2015exact, mossel2015consistency}.  
Among the various detection techniques one can name spectral methods, belief propagation, and semidefinite programming ~\cite{chen2016statistical, esmaeili2019community, mossel2014belief, Javad, amini2018semidefinite, hajek2016achieving}.


In the graph models that have so far been studied for community detection, the graph edges are generated independently conditioned on the community labels. A brief survey of models that are most closely related to the present work will be presented shortly. However, in many practical community detection problems, the community labels do not fully explain the dependence between the graph edges. In other words, in many graphs encountered in practice, the graph edges conditioned on the desired community labels are not statistically independent. This happens when the structure of the graph is also influenced by factors other than the community of interest. For example, one may consider political affiliation communities on a social network in a university campus, where the social network graph is also influenced by other variables that may be unrelated to the community label of interest, such as membership in intramural and extramural activities. The nature and magnitude of the dependence of the graph on these secondary or auxiliary factors can have an effect on the performance of the community detection algorithm for the community label of interest. 
The present study  models and analyzes community detection in this scenario.

Toward that goal, this paper  introduces secondary or auxiliary latent variables in the graph model that are not subject to community detection themselves, but influence the structure of the graph. More specifically, we propose and employ a more general version of the stochastic block model and censored block model in which edges are independent conditioned on both the community labels and a set of secondary latent variables. The secondary or auxiliary latent variables represent a first-order model for the residual dependence of the edges of the graph once the effect of the community labels has been removed. Auxiliary variables are independent of community memberships and may or may not be observable. The auxiliary latent variable model is distinct from side-information model~\cite{esmaeili-journal,saad2018community} where the side information variables are directly observed and carry information about the communities. Side information represents non-graph information about communities, while auxiliary variables model the {\em graph connectivity patterns} that are {\em unrelated} to the communities.

We investigate the exact recovery threshold for community detection in the graphs with secondary latent variables. We also analyze the effect on the performance of community detection when this secondary latent variable is fully or partially known. We also propose and investigate a semidefinite programming  algorithm for community detection with secondary latent variables. Our analysis shows that exact recovery via semidefinite programming  is possible down to the respective maximum likelihood exact recovery threshold, for both unknown or known secondary latent variables.

In addition to addressing a novel problem, this paper also  provides a novel proof for bounding the summation of the minimums of Poisson-distributed values from above and below via Chernoff-Hellinger divergence. Our result (Lemma~\ref{Lemma-Poisson}) eliminates certain technical difficulties that existed in earlier proofs, e.g., does not impose restrictions on the domain of Poisson distributions. 
This result is extended (Lemma~\ref{Lemma-Poisson-two}) for the general censored block model. 
Also, the analysis of exact recovery for a graph generated based on two latent variables involves subtleties in extracting the maximum likelihood estimator and analyzing its semidefinite programming  relaxation, which go beyond earlier works.

To put the model of this paper in perspective, we review several community detection graph models whose nodes are associated, beyond a scalar community detection label, with some other variables too. The latent space model~\cite{hoff2002latent, ke2018information, sarkar2006dynamic} associates with each node a vector, often with small dimension, containing variables that are latent in the model. The graph edges are generated from a distribution that is parameterized based on the distance between the latent vectors of pairs of nodes, and the community is a scalar generated as a function of each latent vector. The  overlapping stochastic block model~\cite{abbe2015community,latouche2011overlapping} recovers multiple independent, identically distributed, binary communities via observing a graph whose edges are drawn independently conditioned on all the community labels of the terminating nodes. An important distinction of overlapped communities from the present work is that all communities must be recovered in the overlapped model, therefore the overlapped model has significant similarity with a multi-community model. In the overlapped model, the multiple communities posses a structure that can be exploited, compared with a general multi-community model. Finally, there exists some work on combining non-graph observation with graph observations~\cite{esmaeili-journal,saad2018community}; these works have a superficial resemblance to the subsection in this paper where the secondary latent variable is revealed. However, the graph and the side information in~\cite{esmaeili-journal,saad2018community} are assumed independent of each other conditioned on community labels, therefore the revealed side information in~\cite{esmaeili-journal,saad2018community} has no direct influence on the graph. Thus,~\cite{esmaeili-journal,saad2018community} model a different phenomenon and also have a different mathematical structure, compared with the present work.
In the interest of brevity, our coverage of various community detection models is limited, and the interested reader is referred to more comprehensive coverage available, e.g., in~\cite{abbe2015community}.

Notation: 
$\mathbf{I}$ is the identity matrix and $\mathbf{J}$ the all-one matrix. 
$S \succeq 0$ indicates a positive semidefinite matrix and  $S \ge 0$ denotes a matrix with non-negative entries. $||S||$ is the spectral norm and $\lambda_{2}(S)$ is the second smallest eigenvalue (for a symmetric matrix).
$[a,b]$ is a vector that is obtained by stacking vectors $a$ and $b$.  
$\langle \cdot,\cdot\rangle$ is the inner product and $*$ is the element-wise product. We abbreviate $[n] \triangleq \{ 1, \cdots,n \}$. 
$\Prob(\cdot)$  indicates the probability operator and $\Dist(\cdot)$ a probability distribution which is identified by the choice of its variables whenever there is no confusion.
Random variables with Bernoulli and Binomial distributions are indicated by $\Bern(p)$ and $\Bin(n,p)$, respectively, with $n$ trails and success probability $p$. Also, random variables with Poisson distribution are indicated by $\mathcal{P}_{\lambda}(n)$ with $n$ trails and parameter $\lambda$. 
\section{System Model}
\label{system-model}
We start by considering a two-latent variable model, and assume the cardinality of both is finite. For notational convenience throughout the paper, $x,y$ are length-$n$ vectors holding latent variable values for the whole graph, while the latent variables for any node $v$ are represented with $x_\nu, y_\nu$. In our model, we aim to discover $x$, therefore nodes that share the same value for $x$ are called a {\em community.} By {\em micro-community}, we refer to the set of nodes in the graph that share the same value for both latent variables $x,y$. The matrix $P$ denotes prior probabilities 
\[
P_{i,j} = \Prob(x_v=i,y_v=j).
\]
For convenience and for avoiding tensor calculations, we further define:
\[
p \triangleq \text{vec}(P).
\]

For both the \twomodel{} and \twocensoredmodel{}, the graph edges are Bernoulli distributed, conditioned on the latent variables of the two nodes terminating the edge.  The conditional Bernoulli parameters for an arbitrary edge are organized in a symmetric matrix $\bar{Q}$, whose rows and columns are ordered in a manner compatible with vector $p$. In other words, assuming the latent variable $x_v$ has $m_x$ outcomes, then the probability of an edge between two nodes with latent variable pairs taking values $(i,j)$ and $(i',j')$ is given by the element of $\bar{Q}$ in row $jm_x+i$ and column $j'm_x+i'$. 

We are interested in a regime where edge probabilities diminish with the size of the graph $n$, in particular, in the context of our model there exist a constant matrix $Q$ such that:
\begin{align*}
    \bar{Q} &= \frac{\log n}{n} Q .
\end{align*}
This assumption asymptotically guarantees a fully connected graph.
\begin{Example}
Consider a two-latent variable stochastic block model with $m_x = 2$ and $m_y = 3$. Then 
\begin{align*}
    P = \begin{bmatrix}
P_{0,0} & P_{0,1} & P_{0,2} \\ 
P_{1,0} & P_{1,1} & P_{1,2}
\end{bmatrix} ,
\end{align*}

\begin{align*}
p = 
\begin{bmatrix}
P_{0,0} & P_{0,1} & P_{0,2} & P_{1,0} & P_{1,1} & P_{1,2}
\end{bmatrix} , 
\end{align*}

\begin{align*}
\bar{Q} = \frac{\log n}{n}
    \begin{bmatrix}
Q_{0,0} & Q_{0,1} & Q_{0,2} & Q_{0,3} & Q_{0,4} & Q_{0,5} \\ 
Q_{1,0} & Q_{1,1} & Q_{1,2} & Q_{1,3} & Q_{1,4} & Q_{1,5} \\
Q_{2,0} & Q_{2,1} & Q_{2,2} & Q_{2,3} & Q_{2,4} & Q_{2,5} \\
Q_{3,0} & Q_{3,1} & Q_{3,2} & Q_{3,3} & Q_{3,4} & Q_{3,5} \\
Q_{4,0} & Q_{4,1} & Q_{4,2} & Q_{4,3} & Q_{4,4} & Q_{4,5} \\
Q_{5,0} & Q_{5,1} & Q_{5,2} & Q_{5,3} & Q_{5,4} & Q_{5,5}
\end{bmatrix}.
\end{align*}
\end{Example}
In addition, we define the columns of weighted versions of the matrix $Q$ as
\begin{align*}
    \qij{i}{j} & \triangleq \text{diag}(p) Q \,e_{jm_x+i} ~, 
\end{align*}
where $e_k$ is the $k$-th canonical coordinate vector, and for convenience our notation of $\qij{i}{j}$ emphasizes dependence on the latent variable outcomes rather than matrix coordinates.
Thus, $\qij{i}{j}$ is the column of $\text{diag}(p)Q$. This vector represents the relative frequency of edges connecting a node from the micro-community $(i,j)$ to all nodes of each micro-community (including the same micro-community).
Also, we define the vector $\qtildeij{i}{j}$ of size $m_x$ with entries
\begin{align*}
    \qtildeij{i}{j}_{i'} &\triangleq \sum_{j'}  P_{i',j'} Q_{j'm_x+i',jm_x+i} ~,
\end{align*}
representing the relative frequency of edges, connecting a node from the micro-community $(i,j)$ to all nodes of micro-communities with similar community latent variable.

For the \twocensoredmodel{}, if an edge exists between a pair of nodes, the sign of the edge (positive or negative) is determined by a random variable drawn from a Bernoulli distribution with a certain parameter.  
The Bernoulli parameters for  the positive sign of an edge are organized in a symmetric matrix $\Xi$, whose rows and columns are also ordered in a manner compatible with vector $p$.
Finally, for the censored block model, we define similarly
\begin{align*}
    \gij{i}{j} & \triangleq \text{diag}(p) (\Xi * Q) \,e_{jm_x+i} ~, \\
    \hij{i}{j} & \triangleq \text{diag}(p) ((1-\Xi) * Q) \,e_{jm_x+i} ~,
\end{align*}
and
\begin{align*}
    \gtildeij{i}{j}_{i'} &\triangleq \sum_{j'}  P_{i',j'} (\Xi * Q)_{j'm_x+i',jm_x+i} ~, \\
    \htildeij{i}{j}_{i'} &\triangleq \sum_{j'}  P_{i',j'} ((1-\Xi) * Q)_{j'm_x+i',jm_x+i} ~.
\end{align*}

\begin{Remark}
The censored block model in~\cite{hajek2016achievingExtensions, esmaeili2019community} with parameters $a$ and $\xi$ is a special case of the general censored model represented in this paper with
\begin{align*}
    Q  = \begin{bmatrix}
a & a\\ 
a & a
\end{bmatrix} , 
\quad \Xi  = \begin{bmatrix}
1-\xi & \xi\\ 
\xi & 1-\xi
\end{bmatrix}.
\end{align*}
\end{Remark}
\section{Exact Recovery under Optimal Detection}
\label{Information-Theoretic Results}
The main results of this part are represented in the context of three scenarios, where the latent variable $x$ is unknown and the latent variable $y$ is either known or unknown (for all nodes in the graph) or partially known (for some nodes in the graph).
Figure~\ref{fig: SBM-known} 
shows graph realizations of a two-latent variable stochastic block model with $m_x=2$ and $m_y=2$. In each node, the community latent variable is indicated by the color of the inner circle, and the auxiliary latent  variable is represented by the color of a ring around the inner circle.

\begin{figure*}
\begin{center}
\begin{subfigure}{0.24\textwidth}
         \centering
         \includegraphics[width=\textwidth]{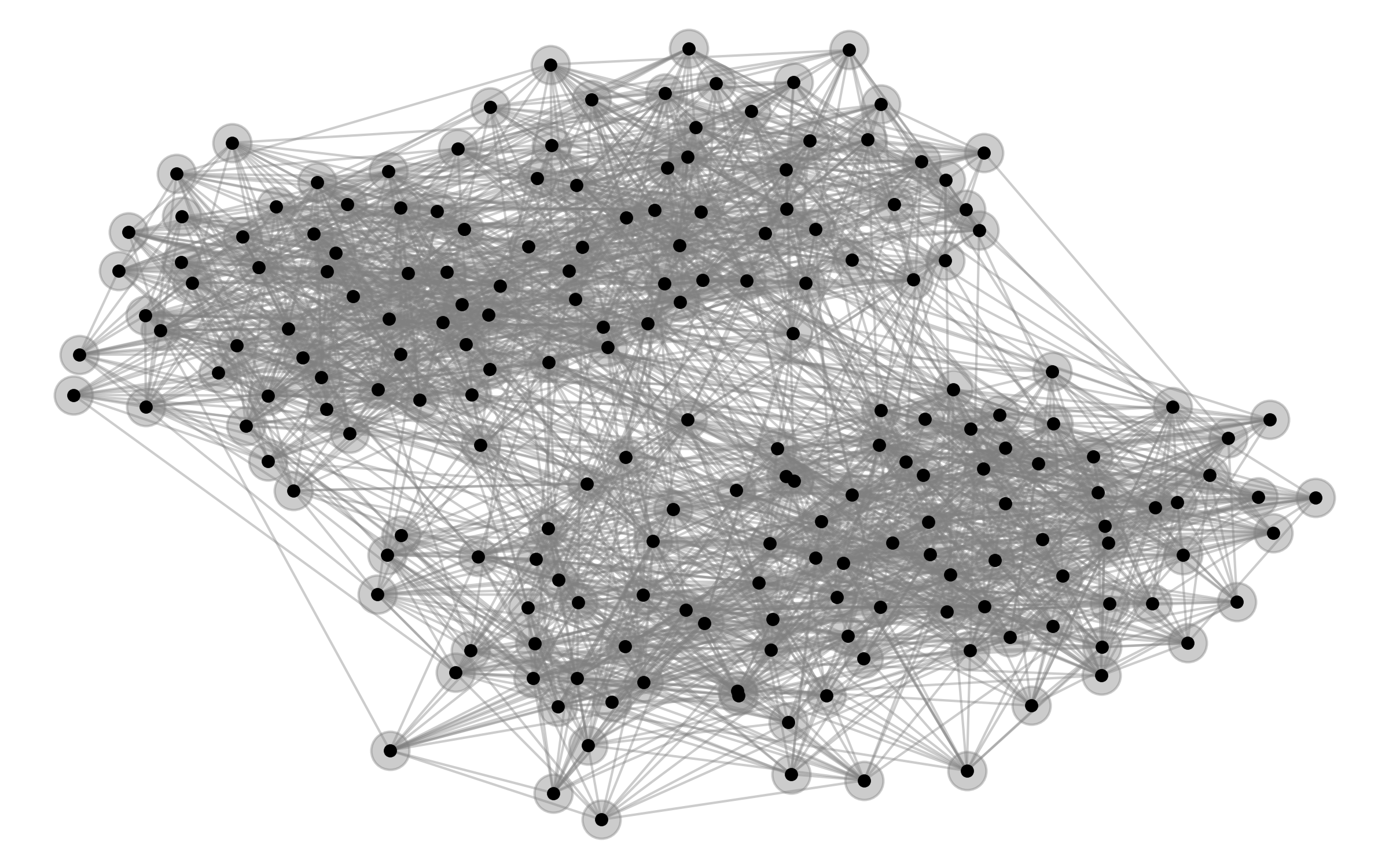}
         \caption{}
     \end{subfigure}
     \hfill
     \begin{subfigure}{0.24\textwidth}
         \centering
         \includegraphics[width=\textwidth]{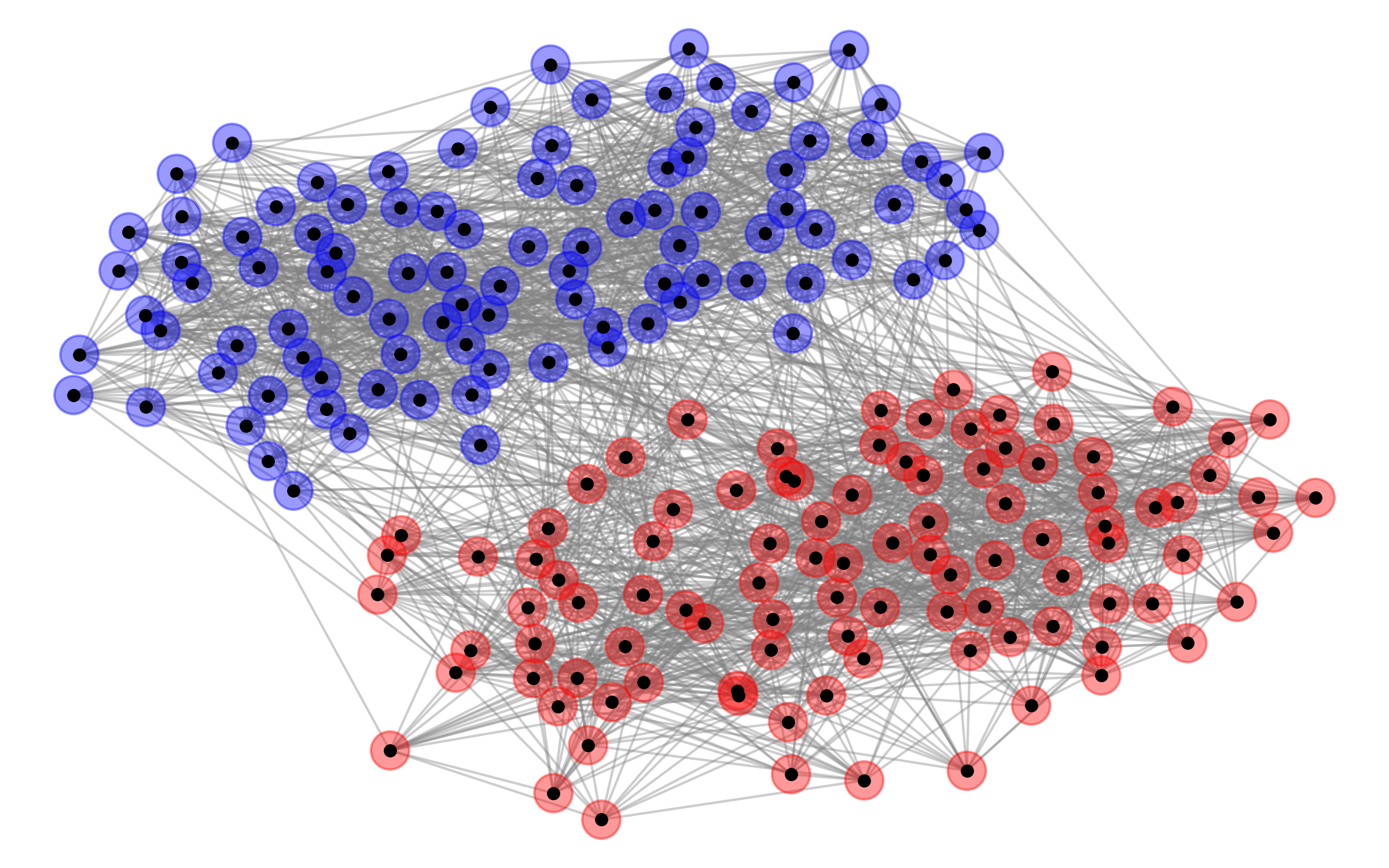}
         \caption{}
     \end{subfigure}
     \hfill
          \begin{subfigure}{0.24\textwidth}
         \centering
         \includegraphics[width=\textwidth]{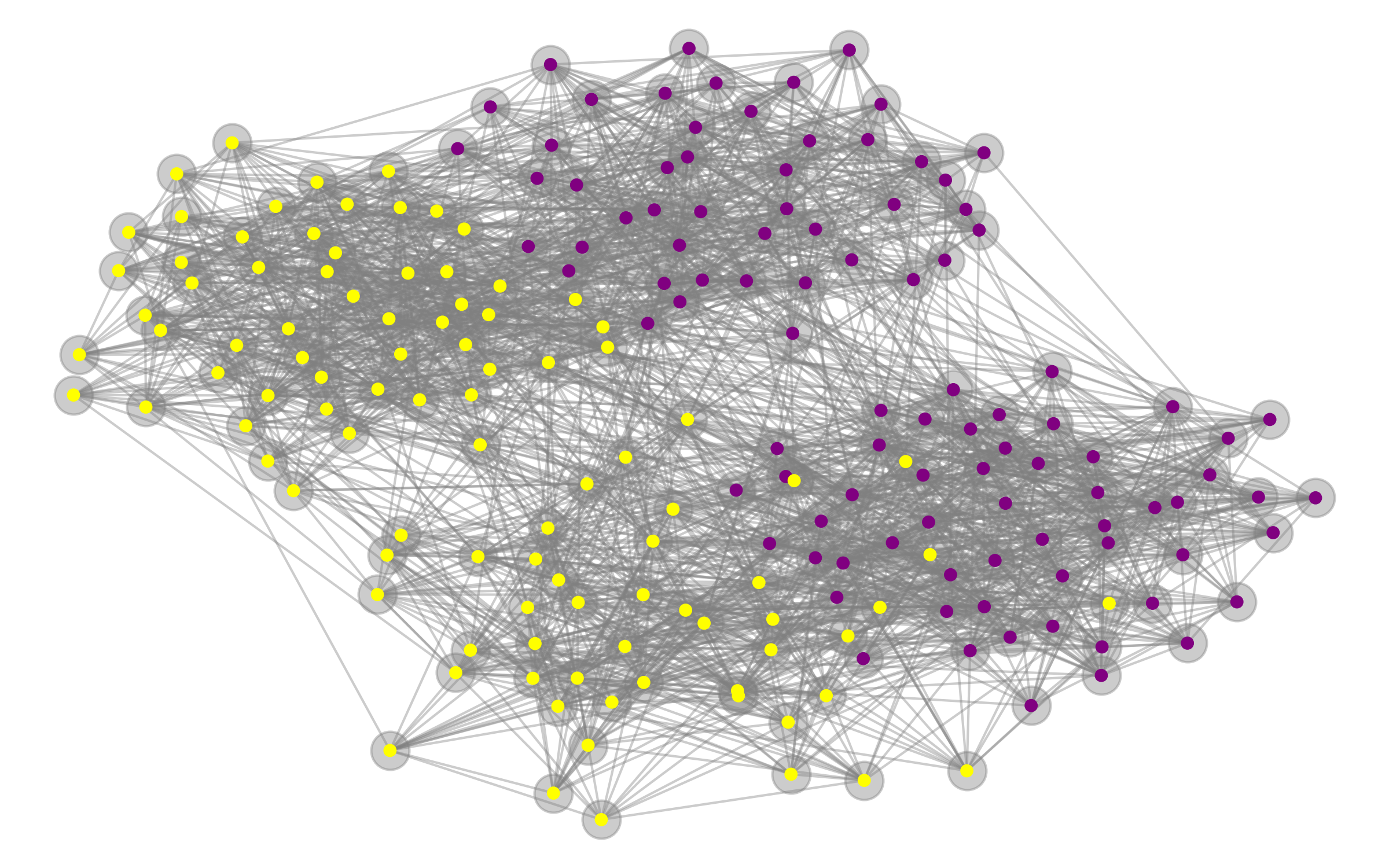}
         \caption{}
     \end{subfigure}
          \hfill
     \begin{subfigure}{0.24\textwidth}
         \centering
         \includegraphics[width=\textwidth]{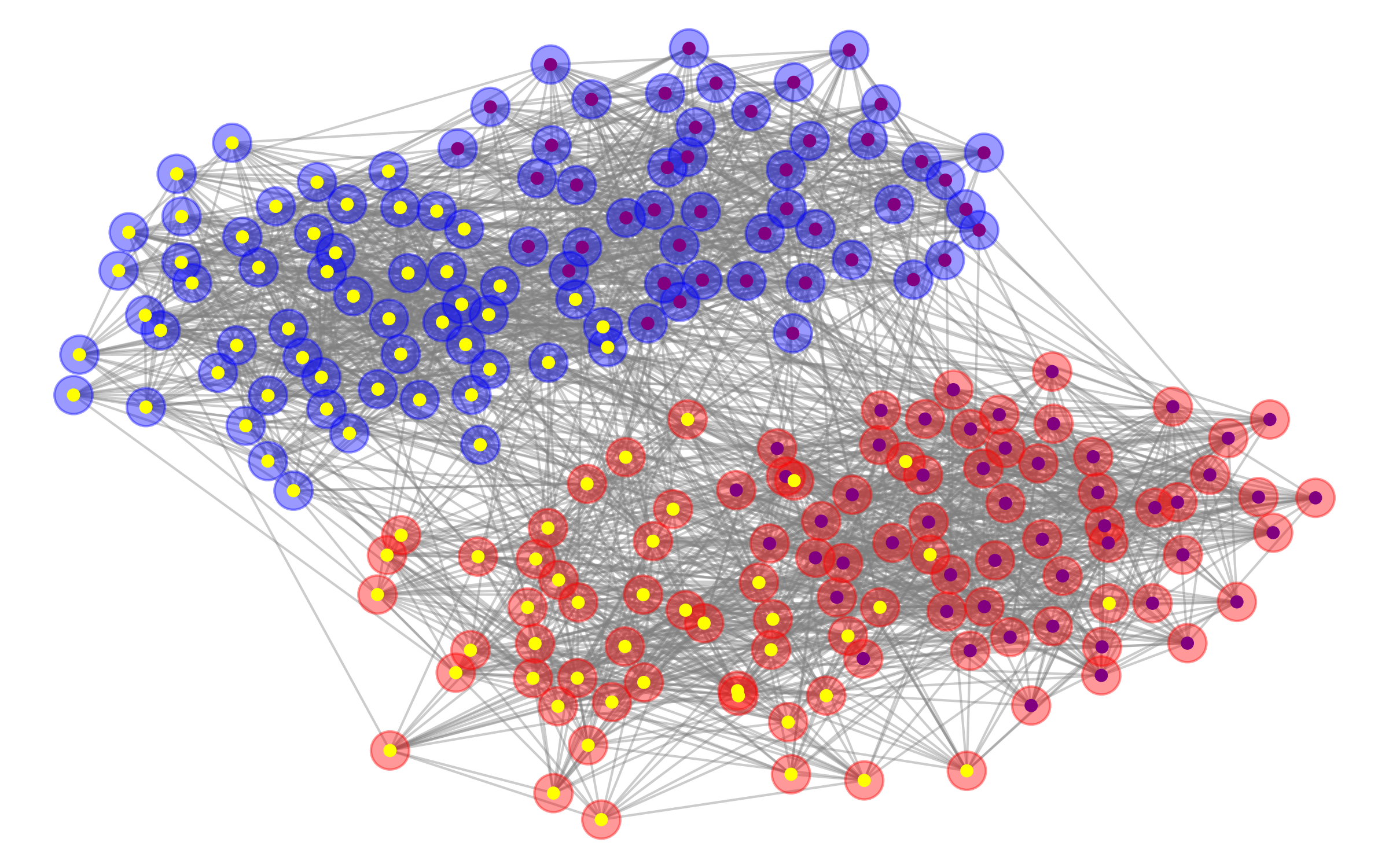}
         \caption{}
     \end{subfigure}
\end{center}
\caption{For each node, (a) both latent variables are unknown, (b) knowing the statistics of the graph, the community latent variable is recovered while the auxiliary latent variable is unknown, (c) the auxiliary latent  variable is known while the first one is unknown, (d) knowing the statistics of the graph, the community latent variable is recovered while the auxiliary latent variable is known.}
\label{fig: SBM-known}
\end{figure*}

The Chernoff-Hellinger divergence is due to Abbe~\cite{abbe2017community} and is defined for two non-negative vectors $a, b$ of the same dimension:
\begin{align}
\divergence(a,b) &\triangleq \max_{t \in [0, 1]} \sum_{i} \big[ta_{i}+(1-t)b_{i}-a_{i}^{t}b_{i}^{1-t} \big].
\label{eq:CH-definition}
\end{align}
This is a generalization of the Hellinger divergence and the Chernoff divergence~\cite{abbe2015community, abbe2017community}. In a manner similar to~\cite{abbe2015community} we present a lemma that bounds a summation of the minimums of Poisson-distributed values.

\begin{Lemma}
\label{Lemma-Poisson}
Let $a,b \in \mathbb{R}_{+}^{m}$, with $a\neq b$, and two positive scalars $\paaa,\pbbb$.
For any Poisson multivariate distributions $\mathcal{P}_{a}(d)$ and $\mathcal{P}_{b}(d)$, define
\begin{align*}
    I(a,b) \triangleq \sum_{d \in \mathbb{Z}_{+}^{m}} \min \{ \mathcal{P}_{a}(d) \paaa , \mathcal{P}_{b}(d) \pbbb \}. 
\end{align*}
Then
\begin{align*}
I(a,b) &\leq \max \{\paaa, \pbbb \} e^{-\divergence(a,b)} , \\
I(a,b) &\geq  \min \{\paaa, \pbbb \} e^{-\divergence(a,b)} \prod_{i=1}^{m} \frac{1}{e} \big(a_{i}^{t^{*}}b_{i}^{1-t^{*}}\big)^{-\frac{1}{2}},
\end{align*}
where $t^{*}$ is the optimal parameter in the definition of Chernoff-Hellinger divergence $\divergence(a,b)$.
\end{Lemma}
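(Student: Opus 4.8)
The plan is to treat the two bounds separately: the upper bound is a short convexity computation, while the lower bound carries all the difficulty.

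For the upper bound I would start from the elementary inequality $\min\{u,v\}\le u^{t}v^{1-t}$, valid for every $t\in[0,1]$ and all $u,v\ge 0$, applied with $u=\mathcal{P}_{a}(d)\paaa$ and $v=\mathcal{P}_{b}(d)\pbbb$. This factors the scalars as $\paaa^{t}\pbbb^{1-t}\le\max\{\paaa,\pbbb\}$ and leaves $\sum_{d}\mathcal{P}_{a}(d)^{t}\mathcal{P}_{b}(d)^{1-t}$. Since the multivariate Poisson law factorizes over coordinates, so does this sum, and in each coordinate the Poisson series $\sum_{k\ge 0}(a_{i}^{t}b_{i}^{1-t})^{k}/k!=e^{a_{i}^{t}b_{i}^{1-t}}$ collapses everything to $\exp\!\big(-\sum_{i}[ta_{i}+(1-t)b_{i}-a_{i}^{t}b_{i}^{1-t}]\big)$. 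Choosing $t=t^{*}$ makes the exponent exactly $-\divergence(a,b)$, which gives the claimed upper bound.

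For the lower bound I would first peel off the scalars via $\min\{\mathcal{P}_{a}(d)\paaa,\mathcal{P}_{b}(d)\pbbb\}\ge\min\{\paaa,\pbbb\}\min\{\mathcal{P}_{a}(d),\mathcal{P}_{b}(d)\}$, reducing everything to a lower bound on the overlap $\sum_{d}\min\{\mathcal{P}_{a}(d),\mathcal{P}_{b}(d)\}$. The key device is the exponentially tilted measure: the normalized version of $\mathcal{P}_{a}(d)^{t^{*}}\mathcal{P}_{b}(d)^{1-t^{*}}$ is itself a product-Poisson law $\mu$ with coordinate means $c_{i}:=a_{i}^{t^{*}}b_{i}^{1-t^{*}}$, and its normalizing constant is exactly $e^{-\divergence(a,b)}$ (the same computation performed above at $t^{*}$). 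Writing $R(d)=\mathcal{P}_{a}(d)/\mathcal{P}_{b}(d)$ for the likelihood ratio gives the exact identity $\sum_{d}\min\{\mathcal{P}_{a},\mathcal{P}_{b}\}=e^{-\divergence(a,b)}\,\mathbb{E}_{\mu}\!\left[\min\{R^{1-t^{*}},R^{-t^{*}}\}\right]$. The first-order stationarity condition defining $t^{*}$ translates precisely into $\mathbb{E}_{\mu}[\log R]=0$, so that $\log R(d)=\sum_{i}(d_{i}-c_{i})\log(a_{i}/b_{i})$ is centered under the tilted law.

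It then remains to show $\mathbb{E}_{\mu}[\min\{R^{1-t^{*}},R^{-t^{*}}\}]\ge\prod_{i}\tfrac1e c_{i}^{-1/2}$. I would lower bound this expectation by concentrating on the modal lattice point of $\mu$ (and if necessary a bounded neighborhood of it): each coordinate contributes its modal Poisson mass, which a Stirling estimate $k!\le e\,k^{k+1/2}e^{-k}$ bounds below by $\tfrac1e c_{i}^{-1/2}$, producing exactly the target product. The reason this can work is the centering $\mathbb{E}_{\mu}[\log R]=0$: near the mode $d_{i}\approx c_{i}$ the quantity $\log R(d)$ is near zero, so the factor $\min\{R^{1-t^{*}},R^{-t^{*}}\}$ stays bounded away from degeneracy rather than being exponentially small. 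The hard part is exactly controlling this residual likelihood-ratio factor uniformly over the relevant summation region. A naive coordinate-wise factorization of the min is too lossy, since the minimum over a product does not factor, so one is forced to exploit the \emph{global} stationarity of $t^{*}$ rather than balancing each coordinate independently. This is the step I expect to be the main obstacle, and it is where the paper's argument claims to improve on earlier treatments by avoiding artificial restrictions on the supports of $a$ and $b$; the upper bound needs none of this, whereas the lower bound lives or dies on keeping $\min\{R^{1-t^{*}},R^{-t^{*}}\}$ from collapsing.
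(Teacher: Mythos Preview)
Your upper-bound argument is the paper's: the inequality $\min\{u,v\}\le u^{t}v^{1-t}$ is precisely the content of the paper's step $\min\{f_{1}(t),f_{2}(t)\}\le 1$, and the Poisson-series computation that follows is identical.

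For the lower bound you have assembled all of the paper's ingredients---the tilted product-Poisson $\mu$ with means $c_{i}=a_{i}^{t^{*}}b_{i}^{1-t^{*}}$, the stationarity condition, and Stirling on the modal mass---but you are anticipating a difficulty that does not arise. The stationarity relation $\sum_{i}c_{i}\log(a_{i}/b_{i})=\sum_{i}(a_{i}-b_{i})$ is not only the statement $\mathbb{E}_{\mu}[\log R]=0$; it is the \emph{pointwise} statement $\log R(c)=0$ at the single point $d=c$. Hence $\min\{R(c)^{1-t^{*}},R(c)^{-t^{*}}\}=1$ exactly, and the paper simply discards every term of the sum except $d^{*}=c$, obtaining
\[
\min\{\mathcal{P}_{a}(d^{*}),\mathcal{P}_{b}(d^{*})\}=\mathcal{P}_{a}(d^{*})=\mathcal{P}_{b}(d^{*})=e^{-\divergence(a,b)}\prod_{i}\frac{c_{i}^{c_{i}}}{c_{i}!}e^{-c_{i}},
\]
after which Stirling $k!\le k^{k+1/2}e^{-k+1}$ gives $\prod_{i}\frac{1}{e}\,c_{i}^{-1/2}$. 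No uniform control of the likelihood-ratio factor over any summation region is needed; a single point does all the work. (Both you and the paper gloss over the fact that $c$ need not lie in $\mathbb{Z}_{+}^{m}$; rounding to the nearest lattice point costs only bounded multiplicative factors that the stated constants can absorb.)
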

\begin{proof}
See Appendix~\ref{Proof-Lemma-Poisson}.
\end{proof}

Let $D$ be a random variable vector representing the number of edges that connect the node $v$ to each micro-community. More specifically, $\Dij{i'}{j'}$ is an element of the $D$ indicating the number of edges connecting the node $v$ to the micro-community $(i',j')$. 
For each node $v$, the proposed detection tests hypotheses
\begin{align*}
    H_i: x_v=i .
\end{align*}
If $v$ belongs to micro-community $(i,j)$, then 
\begin{align*}
    \Dij{i'}{j'} \sim \Bin(nP_{i',j'}, \bar{Q}_{j'm_x+i',jm_x+i}) .
\end{align*}
In the regime where $\bar{Q} = Q \frac{\log n}{n}$, the Binomial distribution can be approximated by a Poisson distribution with the same mean, denoted $\lambdascalar{i}{j}{i'}{j'}$. 
Indeed, using Le Cam's inequality, the total variation distance between $\Bin(nP_{i',j'}, \frac{\log n}{n} Q_{j'm_x+i',jm_x+i} )$ and $\mathcal{P}(P_{i',j'} Q_{j'm_x+i',jm_x+i} \log n)$ asymptotically goes to zero. 
Then
\begin{align*}
     \Prob (D = d|H_i,y_v=j) =\prod_{i'}\prod_{j'} \mathcal{P}_{\lambdascalar{i}{j}{i'}{j'}} (\dij{i'}{j'}) ,
\end{align*}
where $\lambdascalar{i}{j}{i'}{j'} = P_{i',j'} Q_{j'm_x+i',jm_x+i} \log n$.
\begin{Theorem}
Under the \twomodel{}, all micro-communities are exactly recovered if and only if 
\begin{align*}
\min_{(i,j) \neq (k,l)} \divergence(\qij{i}{j},\qij{k}{l}) > 1.
\end{align*}
\end{Theorem}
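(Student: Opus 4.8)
The plan is to characterize exact recovery through the optimal (genie-aided) per-node estimator, which for each node $v$ compares the posterior likelihoods $P_{i,j}\,\Prob(D=d\mid H_i,y_v=j)$ across all candidate micro-communities $(i,j)$ and selects the maximizer; the stated Chernoff--Hellinger condition is precisely the threshold at which the expected number of mislabeled nodes transitions between vanishing and diverging. The key simplifying observation is that $\qij{i}{j}\log n$ is exactly the vector of Poisson parameters governing the degree profile $D$ of a node in micro-community $(i,j)$, and that $\divergence$ is positively homogeneous of degree one: scaling both arguments by $\log n$ scales the divergence by $\log n$ while leaving the optimizer $t^{*}$ unchanged. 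Consequently $e^{-\divergence(\qij{i}{j}\log n,\,\qij{k}{l}\log n)}=n^{-\divergence(\qij{i}{j},\qij{k}{l})}$, so the per-pair confusion probability supplied by Lemma~\ref{Lemma-Poisson} collapses to a single scalar exponent per pair of micro-communities, which is what makes the threshold $1$ emerge.

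For the achievability (``if'') direction, I would fix a node $v$ truly in $(i,j)$ and bound the probability that some competing hypothesis $(k,l)$ has larger posterior likelihood. Using the Poisson factorization $\Prob(D=d\mid H_i,y_v=j)=\prod_{i'}\prod_{j'}\mathcal{P}_{\lambdascalar{i}{j}{i'}{j'}}(\dij{i'}{j'})$ together with the priors, this pairwise error is exactly $I(a,b)$ with $a=\qij{i}{j}\log n$, $b=\qij{k}{l}\log n$, $\paaa=P_{i,j}$, $\pbbb=P_{k,l}$. The upper bound of Lemma~\ref{Lemma-Poisson} then gives a pairwise error of order $n^{-\divergence(\qij{i}{j},\qij{k}{l})}$, the prior factor $\max\{\paaa,\pbbb\}$ being an absorbed constant. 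A union bound over the finitely many competitors $(k,l)$ and over all $n$ nodes yields total error of order $n^{\,1-\min_{(i,j)\neq(k,l)}\divergence(\qij{i}{j},\qij{k}{l})}$, which vanishes exactly when the minimum divergence exceeds $1$.

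For the converse (``only if'') direction, suppose the minimizing pair satisfies $\divergence(\qij{i}{j},\qij{k}{l})<1$. I would show that even the optimal estimator fails with probability tending to one via a second-moment argument on the number of \emph{ambiguous} nodes, namely nodes truly in $(i,j)$ whose degree profile is at least as likely under $(k,l)$. The lower bound of Lemma~\ref{Lemma-Poisson} gives a per-node ambiguity probability of order $n^{-\divergence}$ times a polylogarithmic factor arising from $\prod_{i} \frac{1}{e}\bigl(a_i^{t^{*}}b_i^{1-t^{*}}\bigr)^{-1/2}$, since each coordinate of $a,b$ scales like $\log n$. Hence the expected count of ambiguous nodes is of order $n^{\,1-\divergence}$ times polylog, which diverges; a Chebyshev / Paley--Zygmund step then forces at least one ambiguous node to exist with high probability, and such a node cannot be correctly recovered, precluding exact recovery.

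The main obstacle I anticipate is the converse second-moment estimate: the degree profiles of distinct nodes are not independent because they share edges, so controlling the variance of the ambiguous-node count requires showing this dependence is asymptotically negligible in the sparse regime $\bar{Q}=\frac{\log n}{n}Q$. A secondary technical point is confirming that the polylogarithmic prefactor in the lower bound of Lemma~\ref{Lemma-Poisson} cannot absorb the $n^{\,1-\divergence}$ growth, i.e., that $n^{\,1-\divergence}$ dominates any $(\log n)^{-m/2}$ penalty for fixed $\divergence<1$; this holds but must be stated carefully. The achievability direction, by contrast, is a clean union bound once the homogeneity reduction and Lemma~\ref{Lemma-Poisson} are in hand.
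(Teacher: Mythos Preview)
The paper does not prove this theorem from scratch: its entire proof is the observation that recovering \emph{all} micro-communities in the two-latent-variable SBM is, by definition, exact recovery in a general SBM with $m_x m_y$ communities, prior vector $p$, and edge matrix $Q$, whence the threshold $\min_{(i,j)\neq(k,l)}\divergence(\qij{i}{j},\qij{k}{l})>1$ is simply Abbe--Sandon's theorem for the general (or overlapping) SBM. Your proposal instead re-derives that theorem directly via Lemma~\ref{Lemma-Poisson}; this is a legitimate and more self-contained route, and your homogeneity observation $\divergence(a\log n,b\log n)=\divergence(a,b)\log n$ is exactly what makes the reduction clean.

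There is, however, a genuine gap in your achievability argument. The ``genie-aided per-node estimator'' you invoke computes the degree profile $D$ of node $v$ toward each micro-community, but forming $D$ requires already knowing the micro-community label of every other node---precisely the object you are trying to recover. Your union bound therefore establishes only that a genie who reveals all labels but one can recover the last one; it does not yield an estimator. The standard repair, which the paper invokes explicitly (see the Remark following Theorem~\ref{theorem-general-cencored-no side} and Appendix~\ref{Partial Recavery Algorithm}), is a two-stage procedure: first run a partial-recovery algorithm that correctly classifies all but $o(n)$ nodes, then apply the per-node MAP test using the \emph{estimated} degree profile. One must check that the $o(n)$ contaminated entries in $D$ do not change the outcome of the test, which is where the slack $\divergence>1$ is used a second time. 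Without this step your ``if'' direction is incomplete.

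Your converse sketch is sound in outline and matches the underlying argument in the cited work: the polylog penalty from the lower bound of Lemma~\ref{Lemma-Poisson} is indeed $(\log n)^{-m_xm_y/2}$ and is dominated by $n^{1-\divergence}$ whenever $\divergence<1$, and the edge-sharing dependence between degree profiles in the $\frac{\log n}{n}$ regime contributes only lower-order terms to the variance. You correctly flag both issues; just be aware that the paper sidesteps them entirely by citing the general-SBM converse rather than reproving it.
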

\begin{proof}
It follows from the exact recovery under the general stochastic bock model or the general overlapping stochastic block model. 
\end{proof}

\begin{Theorem}
\label{theorem-general-side}
Under the \twomodel{}, when the latent variable $y$ is revealed, exact recovery of $x$ is possible if and only if
\begin{align*}
\gamma_1 \triangleq \min_j \min_{i\neq k} \divergence(\qij{i}{j},\qij{k}{j}) > 1.
\end{align*}
\end{Theorem}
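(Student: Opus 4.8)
The plan is to read Theorem~\ref{theorem-general-side} as a genie-aided restriction of the micro-community recovery in the preceding theorem. Revealing $y$ fixes the second coordinate of every node, so the optimal detector for $x_v$ at a node $v$ with $y_v=j$ need only decide among the hypotheses $H_i:x_v=i$ with $j$ held fixed; the competing Poisson mean vectors are then $\{\qij{i}{j}\log n\}_i$, all sharing the same $j$, instead of the full family over all $(i',j')$. I would make the per-node analysis exact by further conditioning on the true labels of all nodes other than $v$ (the genie model); under this conditioning the edge-count vector $D$ is, up to the Le~Cam Poissonization already justified above, a vector of independent Poisson variables with mean $\qij{i}{j}\log n$ under $H_i$. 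Since the Chernoff--Hellinger divergence is positively homogeneous of degree one, $\divergence(\qij{i}{j}\log n,\qij{k}{j}\log n)=(\log n)\,\divergence(\qij{i}{j},\qij{k}{j})$, so $e^{-\divergence(\qij{i}{j}\log n,\qij{k}{j}\log n)}=n^{-\divergence(\qij{i}{j},\qij{k}{j})}$ and the threshold will emerge as $\gamma_1=\min_j\min_{i\neq k}\divergence(\qij{i}{j},\qij{k}{j})$.

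For achievability ($\gamma_1>1$) I would bound the maximum a posteriori detector node by node. A detection error at $v$ (true label $(i,j)$) forces some competitor $k\neq i$ to beat the truth, and the pairwise confusion probability between $(i,j)$ and $(k,j)$ is exactly of the form $I(\qij{i}{j}\log n,\qij{k}{j}\log n)$, with the scalars $\paaa,\pbbb$ of Lemma~\ref{Lemma-Poisson} realized by the constant priors $P_{i,j},P_{k,j}$. The upper bound of Lemma~\ref{Lemma-Poisson} together with homogeneity then gives a confusion probability at most $C\,n^{-\divergence(\qij{i}{j},\qij{k}{j})}\le C\,n^{-\gamma_1}$. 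A union bound over the $n$ nodes and the finitely many competing labels yields total error probability at most $C'\,n^{\,1-\gamma_1}\to 0$, so the original (non-genie) estimator recovers $x$ exactly.

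For the converse ($\gamma_1<1$) I would pick $j^\star$ and $i^\star\neq k^\star$ attaining $\gamma_1$ and show that even the genie-aided optimal detector misclassifies some node of micro-community $(i^\star,j^\star)$ with high probability; since revealing extra information only helps, failure of the genie detector implies failure of the actual one. The lower bound of Lemma~\ref{Lemma-Poisson}, again with homogeneity, gives a per-node confusion probability at least $c\,n^{-\gamma_1}$ times the correction factor of the lemma, which is a product of terms each scaling like $(\log n)^{-1/2}$ and is therefore only polylogarithmic in $n$, hence harmless to the exponent. As micro-community $(i^\star,j^\star)$ holds $\Theta(n)$ nodes, the expected number of misclassifications is $\Theta\!\big(n^{\,1-\gamma_1}/\mathrm{polylog}(n)\big)\to\infty$. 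I expect the main obstacle to be upgrading this diverging expectation to a high-probability guarantee, since per-node error events are dependent through shared edges. I would handle this in the usual way: restrict attention to a subset of candidate nodes, condition on all labels and on the edges not incident to that subset so the candidates' errors become conditionally independent, and then apply a second-moment (equivalently, product-form) argument to conclude that at least one misclassification occurs almost surely. It is precisely the sharp, domain-unrestricted lower bound of Lemma~\ref{Lemma-Poisson} that keeps this first-moment estimate usable without any truncation of the Poisson support.
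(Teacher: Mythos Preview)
Your proposal follows essentially the same route as the paper: both bound the per-node MAP error under a genie that reveals the micro-community of every other node, apply Lemma~\ref{Lemma-Poisson} (upper bound for achievability, lower bound for the converse), and use the degree-one homogeneity of the Chernoff--Hellinger divergence to extract the threshold $\gamma_1$. Your converse discussion is in fact more explicit than the paper's about the need to upgrade the per-node lower bound to a high-probability failure statement via conditional independence over a sparse candidate set.

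The one genuine gap is in your achievability direction. You conclude that because the genie-aided detector has vanishing total error, ``the original (non-genie) estimator recovers $x$ exactly''; but no such implication runs in that direction. Extra information can only help---that monotonicity is what justifies the \emph{converse} (failure of the genie detector implies failure of any real detector), not achievability. The genie-aided bound only tells you what an oracle-assisted rule could do; it does not by itself produce a real estimator, since the statistic $D$ (edge counts into each micro-community) is not observable. To close this you must first run a partial-recovery procedure that correctly classifies all but $o(n)$ nodes, then refine each node's label via the MAP rule computed against those approximate micro-community labels, and argue that the $o(n)$ misclassified reference nodes perturb the Poisson means only at lower order and do not change the exponent. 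The paper handles exactly this via a separate Remark and appendix invoking a partial-recovery algorithm adapted from Abbe--Sandon; your proposal does not mention this step, and without it the achievability half is incomplete.
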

\begin{proof}
See Appendix~\ref{proof-theorem-general-side}.
\end{proof}

\begin{Theorem}
\label{theorem-general-no side}
Under the \twomodel{}, when both latent variables are unknown, exact recovery of $x$ is possible if and only if
\begin{align*}
\gamma_2 \triangleq  \min_{j} \min_{i\neq k} \divergence \big( \qtildeij{i}{j}, \qtildeij{k}{j} \big ) > 1.
\end{align*}
\end{Theorem}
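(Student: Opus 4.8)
The plan is to establish both directions through a genie-aided reduction to a per-node hypothesis test, exactly as in the proof of Theorem~\ref{theorem-general-side}, but now with the auxiliary labels of the neighbors marginalized rather than observed. The first step is to identify the statistic that is actually informative about $x_v$ when $y$ is hidden. Revealing the communities (but not the auxiliary labels) of all nodes other than $v$, the only quantity we can form is the number of edges from $v$ to each community $i'$, namely $D_{i'} = \sum_{j'}\Dij{i'}{j'}$. Since the $\Dij{i'}{j'}$ are independent and, in the regime $\bar{Q} = \frac{\log n}{n}Q$, asymptotically Poisson, additivity of independent Poisson variables gives $D_{i'}\sim\mathcal{P}(\qtildeij{i}{j}_{i'}\log n)$ whenever $v$ lies in micro-community $(i,j)$; this is precisely where the aggregated rate vector $\qtildeij{i}{j}$, rather than the full $\qij{i}{j}$ of Theorem~\ref{theorem-general-side}, enters. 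Both directions then reduce to sharp upper and lower bounds on the probability that the data $D$, generated with rate $\qtildeij{i}{j}\log n$, is scored at least as high under a competing community, which is exactly the content of Lemma~\ref{Lemma-Poisson}.

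For sufficiency I would fix a node $v$ in micro-community $(i,j)$ and bound its misclassification probability. After the Le Cam Poissonization, the probability that $D$ is more likely under community $k\neq i$ is, by a union bound over the candidate auxiliary value of $v$ together with the upper bound in Lemma~\ref{Lemma-Poisson}, at most a constant times $\sum_{l} n^{-\divergence(\qtildeij{i}{j},\qtildeij{k}{l})}$. Summing over $k\neq i$ and over the $\Theta(nP_{i,j})$ nodes of each micro-community, the expected number of misclassified nodes is controlled by $n^{1-D_\star}$, where $D_\star$ is the minimum of $\divergence(\qtildeij{i}{j},\qtildeij{k}{l})$ over all relevant indices. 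The crux of this direction is to show that this minimum is attained at $l=j$, so that the exponent collapses to $\gamma_2$; this is where the assumed independence of the auxiliary variable from the community is used, forcing the most confusable competitor to share $v$'s auxiliary label. Granting that reduction, $\gamma_2>1$ makes the expected count $o(1)$, and a final union bound yields exact recovery with high probability.

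For necessity I would argue that the maximum-likelihood estimator of $x$ cannot succeed once $\gamma_2\le 1$. Choose $(i,j)$ and $k\neq i$ achieving $\divergence(\qtildeij{i}{j},\qtildeij{k}{j})\le 1$, and consider the event that relabeling a node $v$ of micro-community $(i,j)$ to community $k$, while leaving its auxiliary value and all other labels unchanged, does not decrease the likelihood. By the lower bound in Lemma~\ref{Lemma-Poisson}, each such node is relabelable with probability at least $n^{-\divergence(\qtildeij{i}{j},\qtildeij{k}{j})}$ up to a subpolynomial factor, hence at least $n^{-1+o(1)}$. Since these events, across the $\Theta(n)$ nodes of micro-community $(i,j)$, are nearly independent, a second-moment (Paley--Zygmund) argument shows that with high probability at least one node is relabelable, so the true labeling is not the unique maximizer and exact recovery of $x$ fails.

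The main obstacle is the treatment of the auxiliary variable as a nuisance. Two points require care. First, one must justify that marginalizing the unknown neighbor auxiliaries leaves a genuinely Poisson community-count vector with rate $\qtildeij{i}{j}$, including the Le Cam total-variation control uniformly over micro-communities. Second, and more delicate, is the reduction of the confusion analysis to competitors sharing node $v$'s auxiliary label: a priori the optimal detector could be fooled by a micro-community $(k,l)$ with $l\neq j$, which would replace $\gamma_2$ by the smaller quantity $\min_{l}\divergence(\qtildeij{i}{j},\qtildeij{k}{l})$. Showing that the auxiliary-independence of the prior forces $l=j$ to be the minimizer, so that the achievability and converse thresholds coincide at $\gamma_2$, is the step I expect to be hardest, together with the boundary case $\gamma_2=1$, where the subpolynomial prefactors in Lemma~\ref{Lemma-Poisson} must be tracked to settle recoverability.
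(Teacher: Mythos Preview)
Your overall framework---reveal the community labels of all other nodes, reduce to a per-node test on the aggregated community-degree vector $D\in\mathbb{Z}_+^{m_x}$ with Poisson rate $\qtildeij{i}{j}\log n$, then invoke Lemma~\ref{Lemma-Poisson} for matching upper and lower bounds---is exactly the paper's strategy. Where you diverge is in the organization of the pairwise error bound, and that is where your proposal acquires a gap.

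The paper never introduces the cross-auxiliary divergence $\divergence(\qtildeij{i}{j},\qtildeij{k}{l})$ with $l\neq j$. It writes the MAP error as $P_e=\sum_{i}\sum_{y_v}\Prob(D\in\mathcal{B}_i\mid H_i,y_v)P_{i,y_v}$, conditioning jointly on the true community \emph{and} the true auxiliary value of $v$. After the union bound over competitors $k$ and the symmetrization $(i,k)\leftrightarrow(k,i)$, the pairwise term at a fixed $y_v$ becomes $\sum_d\min\{I(d,i,y_v)P_{i,y_v},\,I(d,k,y_v)P_{k,y_v}\}$, with the \emph{same} auxiliary value on both sides. Lemma~\ref{Lemma-Poisson} then yields the exponent $\divergence(\qtildeij{i}{y_v},\qtildeij{k}{y_v})$ directly; taking the worst $y_v$ and $i\neq k$ gives $\gamma_2$ by definition, and cross-$l$ terms never appear. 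The lower bound proceeds identically.

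Your route instead union-bounds over the competitor's candidate auxiliary value, producing $\sum_l n^{-\divergence(\qtildeij{i}{j},\qtildeij{k}{l})}$, and then proposes to collapse this to $l=j$ via ``the assumed independence of the auxiliary variable from the community.'' This is the gap. The model in Section~\ref{system-model} allows an arbitrary joint prior $P_{i,j}$ (independence appears only as motivation in the introduction, not as a hypothesis), and even under a product prior the matrix $Q$ is unconstrained, so there is no structural reason for $l\mapsto\divergence(\qtildeij{i}{j},\qtildeij{k}{l})$ to be minimized at $l=j$; one can choose $Q$ so that $\qtildeij{i}{j}=\qtildeij{k}{l}$ for some $l\neq j$, which makes your intermediate bound vacuous while $\gamma_2$ may still exceed $1$. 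The paper's conditioning-on-$y_v$ device avoids this entirely, and you should adopt it rather than attempt the minimizer claim. Your remark about the boundary $\gamma_2=1$ is well taken but orthogonal: the paper, like your proposal, leaves that case open.
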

\begin{proof}
See Appendix~\ref{proof-theorem-general-no side}.
\end{proof}

Now we present the following Lemma which is similar to Lemma~\ref{Lemma-Poisson} and is crucial for the analysis of the censored block model.  
\begin{Lemma}
\label{Lemma-Poisson-two}
Let $a,b,\ahat,\bhat \in \mathbb{R}_{+}^{m}$, with $a\neq b$ or $\ahat \neq \bhat$, and two positive scalars $\paaa,\pbbb$.
For any Poisson multivariate distributions $\mathcal{P}_{a}(d)$, $\mathcal{P}_{b}(d)$, $\mathcal{P}_{\ahat}(w)$, and $\mathcal{P}_{\bhat}(w)$, define
\begin{align*}
    I(a,b, \ahat, \bhat) \triangleq \sum_{d,w \in \mathbb{Z}_{+}^{m}} \min \{ \mathcal{P}_{a}(d) \mathcal{P}_{\ahat}(w) \paaa, \mathcal{P}_{b}(d) \mathcal{P}_{\bhat}(w) \pbbb \} . 
\end{align*}
Then
\begin{align*}
I(a,b, \ahat, \bhat) \leq& \max \{\paaa, \pbbb \} e^{-\divergence([a,\ahat],[b,\bhat])} , \\
I(a,b, \ahat, \bhat) \geq&  \min \{\paaa, \pbbb \}  e^{-\divergence([a,\ahat],[b,\bhat])} \\
&\times \prod_{i} \frac{1}{e^2} \big [ (a_{i} \ahat_{i} )^{t^{*}} (b_{i}\bhat_{i})^{1-t^{*}} \big ]^{-\frac{1}{2}} ,
\end{align*}
where $t^{*}$ is the optimal parameter in the definition of Chernoff-Hellinger divergence $\divergence([a,\ahat],[b,\bhat])$.
\end{Lemma}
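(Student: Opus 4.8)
The plan is to reduce this statement directly to Lemma~\ref{Lemma-Poisson} by concatenating the two coordinate blocks. The key observation is that a product of independent multivariate Poisson pmfs is again a single multivariate Poisson pmf over the concatenated coordinate vector. Concretely, stacking $d$ and $w$ into a single index $[d,w]\in\mathbb{Z}_{+}^{2m}$ and the parameters into $[a,\ahat]$ and $[b,\bhat]$ in $\mathbb{R}_{+}^{2m}$, independence of the Poisson coordinates gives $\mathcal{P}_{a}(d)\mathcal{P}_{\ahat}(w)=\mathcal{P}_{[a,\ahat]}([d,w])$ and similarly for the pair $b,\bhat$. Therefore $I(a,b,\ahat,\bhat)$ is exactly $I([a,\ahat],[b,\bhat])$ in the notation of Lemma~\ref{Lemma-Poisson}, with the ambient dimension raised from $m$ to $2m$.

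Second, I would verify the non-degeneracy hypothesis required by Lemma~\ref{Lemma-Poisson}: the concatenated vectors satisfy $[a,\ahat]\neq[b,\bhat]$ precisely when $a\neq b$ or $\ahat\neq\bhat$, which is exactly the assumption made here. I would also note that the optimizer $t^{*}$ of $\divergence([a,\ahat],[b,\bhat])$ appearing in the statement is literally the $t^{*}$ produced by Lemma~\ref{Lemma-Poisson} when applied to the concatenated pair, so no relabeling of the maximizer is needed.

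Third, applying Lemma~\ref{Lemma-Poisson} to the $2m$-dimensional pair immediately yields the upper bound $I\le\max\{\paaa,\pbbb\}\,e^{-\divergence([a,\ahat],[b,\bhat])}$ together with a lower bound of the same exponential form multiplied by the product $\prod_{k=1}^{2m}\frac{1}{e}\big(([a,\ahat]_{k})^{t^{*}}([b,\bhat]_{k})^{1-t^{*}}\big)^{-1/2}$. The only remaining step is bookkeeping: split this product over the first $m$ and last $m$ coordinates so that it factors as $\prod_{i}\frac{1}{e}(a_{i}^{t^{*}}b_{i}^{1-t^{*}})^{-1/2}\cdot\prod_{i}\frac{1}{e}(\ahat_{i}^{t^{*}}\bhat_{i}^{1-t^{*}})^{-1/2}$, and then regroup the two factors under a common index $i$ to obtain $\prod_{i}\frac{1}{e^{2}}\big[(a_{i}\ahat_{i})^{t^{*}}(b_{i}\bhat_{i})^{1-t^{*}}\big]^{-1/2}$, which is exactly the claimed factor.

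Since the entire argument is a corollary of Lemma~\ref{Lemma-Poisson}, there is no genuine obstacle. The only points demanding care are the concatenation identity for the Poisson pmf and the confirmation that a single maximizing $t^{*}$ for the concatenated divergence is what makes the coordinatewise factorization of the lower bound go through; the latter holds because the Chernoff--Hellinger divergence in~\eqref{eq:CH-definition} is defined by one scalar $t$ shared across all coordinates, so the same $t^{*}$ governs both the $a,b$ block and the $\ahat,\bhat$ block.
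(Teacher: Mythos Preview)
Your reduction is correct: the product $\mathcal{P}_{a}(d)\mathcal{P}_{\ahat}(w)$ is literally the $2m$-variate Poisson pmf $\mathcal{P}_{[a,\ahat]}([d,w])$, so $I(a,b,\ahat,\bhat)=I([a,\ahat],[b,\bhat])$ in the sense of Lemma~\ref{Lemma-Poisson}, and the upper and lower bounds follow immediately, with the factor regrouping you describe giving the stated $\frac{1}{e^{2}}$ per index.

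The paper takes a different route. Rather than invoking Lemma~\ref{Lemma-Poisson} as a black box, it essentially replays the proof of Lemma~\ref{Lemma-Poisson} with the second block carried along: it introduces ratio functions $f_{1}(t),f_{2}(t)$ built from the full product $\mathcal{P}_{a}(d)\mathcal{P}_{\ahat}(w)/\mathcal{P}_{b}(d)\mathcal{P}_{\bhat}(w)$, uses $\min\{f_{1},f_{2}\}\le 1$ together with the normalization of the mixed Poisson pmf to obtain the upper bound, identifies the stationarity condition at $t^{*}$, and then lower-bounds by evaluating at the single point $d_{i}^{*}=a_{i}^{t^{*}}b_{i}^{1-t^{*}}$, $w_{i}^{*}=\ahat_{i}^{t^{*}}\bhat_{i}^{1-t^{*}}$ with Stirling. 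Your approach is more economical and makes transparent that Lemma~\ref{Lemma-Poisson-two} contains no new idea beyond Lemma~\ref{Lemma-Poisson}; the paper's rederivation is self-contained and makes the role of the optimizer $t^{*}$ (governing both blocks simultaneously) explicit in the calculation rather than inherited from the concatenated problem. Both arrive at the same bounds with the same $t^{*}$.
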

\begin{proof}
See Appendix~\ref{Proof-Lemma-Poisson-two}.
\end{proof}

Let $D$ and $W$ be random vectors representing the positive and negative edges that connect the node $v$ to each micro-community, respectively. More specifically, $\Dij{i'}{j'}$ and $\Wij{i'}{j'}$ are elements of $D$ and $W$ indicating the number of positive and negative edges connecting the node $v$ to the micro-community $(i',j')$, respectively.
For each node $v$, the proposed detection tests hypotheses
\begin{align*}
    H_i: x_v=i .
\end{align*}
If $v$ belongs to micro-community $(i,j)$, then 
\begin{align*}
    &\Dij{i'}{j'} \sim \Bin(nP_{i',j'}, (\Xi * \bar{Q})_{j'm_x+i',jm_x+i}) ,\\
    &\Wij{i'}{j'} \sim \Bin(nP_{i',j'}, ((1-\Xi) * \bar{Q})_{j'm_x+i',jm_x+i}) .
\end{align*}
In the regime where $\bar{Q} = Q \frac{\log n}{n}$, the Binomial distribution can be approximated by a Poisson distribution with the same mean. 
The distributions of $D$ and $W$ can be approximated by multivariate Poisson distributions $\mathcal{P}_{\lambdavec{i}{j}}$ and $\mathcal{P}_{\lambdavechat{i}{j}}$
with the vector means $\lambdavec{i}{j}$ and $\lambdavechat{i}{j}$, respectively.
Therefore
\begin{align*}
\Prob (D = d &,W = w |H_i,y_v=j) \\
    &=  \Prob (D = d|H_i,y_v=j)  \Prob (W = w|H_i,y_v=j) \\
    &=\prod_{i'}\prod_{j'} \mathcal{P}_{\lambdascalar{i}{j}{i'}{j'}} (\dij{i'}{j'}) \mathcal{P}_{\lambdascalarhat{i}{j}{i'}{j'}} (\wij{i'}{j'}) ,
\end{align*}
where 
\begin{align*}
    \lambdascalar{i}{j}{i'}{j'} &= P_{i',j'} (\Xi *Q)_{j'm_x+i',jm_x+i} \log n ,\\
    \lambdascalarhat{i}{j}{i'}{j'} &= P_{i',j'} ((1-\Xi) * Q)_{j'm_x+i',jm_x+i} \log n .
\end{align*}

\begin{Theorem}
\label{theorem-general-cencored}
Under \twocensoredmodel{}, all micro-communities are exactly recovered if and only if 
\begin{align*}
\min_{(i,j) \neq (k,l)} \divergence \big ([\gij{i}{j}, \hij{i}{j}],[\gij{k}{l}, \hij{k}{l}] \big) > 1.
\end{align*}
\end{Theorem}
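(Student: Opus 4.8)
The plan is to characterize exact recovery through the genie-aided maximum-likelihood test that, for each node $v$, decides its micro-community label $(i,j)$ from the joint positive/negative degree profile $(D,W)$ while all other node labels are revealed. Since revealing the other labels can only help, this test is optimal, and exact recovery is achievable if and only if the per-node test succeeds simultaneously for all $n$ nodes with high probability. Using the conditional likelihood $\Prob(D=d,W=w\mid H_i,y_v=j)=\prod_{i'}\prod_{j'}\mathcal{P}_{\lambdascalar{i}{j}{i'}{j'}}(\dij{i'}{j'})\,\mathcal{P}_{\lambdascalarhat{i}{j}{i'}{j'}}(\wij{i'}{j'})$ stated above, the error event for a node truly in $(i,j)$ is dominated by its pairwise confusions with each competing $(k,l)$, and the probability that the likelihood of $(k,l)$ overtakes that of $(i,j)$ equals, up to the $\Theta(1)$ prior fractions, the quantity $I\big(\gij{i}{j}\log n,\gij{k}{l}\log n,\hij{i}{j}\log n,\hij{k}{l}\log n\big)$ of Lemma~\ref{Lemma-Poisson-two}. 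The positive homogeneity of the Chernoff--Hellinger divergence, $\divergence(\alpha a,\alpha b)=\alpha\,\divergence(a,b)$, lets me extract the $\log n$ scaling, so that $\divergence([\gij{i}{j},\hij{i}{j}]\log n,[\gij{k}{l},\hij{k}{l}]\log n)=\log n\cdot\divergence([\gij{i}{j},\hij{i}{j}],[\gij{k}{l},\hij{k}{l}])$.

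For the achievability (``if'') direction I would invoke the upper bound of Lemma~\ref{Lemma-Poisson-two}, which after the $\log n$ scaling reads
\[
I\le\max\{\paaa,\pbbb\}\,n^{-\divergence([\gij{i}{j},\hij{i}{j}],[\gij{k}{l},\hij{k}{l}])}.
\]
When the stated minimum divergence exceeds $1$, every pairwise confusion probability is $o(n^{-1})$; a union bound over the $n$ nodes and the finitely many ordered pairs $(i,j)\neq(k,l)$ drives the total misclassification probability to $0$, establishing exact recovery.

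For the converse (``only if'') direction I would use the matching lower bound of Lemma~\ref{Lemma-Poisson-two}, namely $I\ge\min\{\paaa,\pbbb\}\,e^{-\divergence([a,\ahat],[b,\bhat])}\prod_{i}\tfrac{1}{e^{2}}\big[(a_i\ahat_i)^{t^{*}}(b_i\bhat_i)^{1-t^{*}}\big]^{-\frac12}$ with $a=\gij{i}{j}\log n$, $\ahat=\hij{i}{j}\log n$, $b=\gij{k}{l}\log n$, $\bhat=\hij{k}{l}\log n$. The exponential factor becomes $n^{-\gamma}$ with $\gamma=\divergence([\gij{i}{j},\hij{i}{j}],[\gij{k}{l},\hij{k}{l}])$, while the product contributes only a $\mathrm{poly}(\log n)$ factor. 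Selecting the pair attaining the minimum divergence $\gamma<1$, the expected number of nodes for which the competing hypothesis beats the truth is of order $n\cdot I=n^{1-\gamma}\,\mathrm{poly}(\log n)^{-1}\to\infty$. A second-moment (Paley--Zygmund) argument then shows at least one node is genuinely misclassified with high probability, so no estimator achieves exact recovery. This mirrors the stochastic block model counterpart, with Lemma~\ref{Lemma-Poisson-two} replacing the single-edge-type Poisson bound of Lemma~\ref{Lemma-Poisson}.

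The main obstacle I anticipate lies in the converse's second-moment step: the misclassification events of distinct nodes are not independent, since two nodes may attach to a shared neighbor. The resolution rests on sparsity---edge probabilities are $\Theta(\log n/n)$, so a typical pair of nodes shares $O((\log n)^2/n)=o(1)$ common neighbors---which renders the error indicators asymptotically pairwise independent and keeps the second moment within a constant factor of the squared first moment. Absorbing the $\mathrm{poly}(\log n)$ correction is routine, because $n^{1-\gamma}$ dominates any fixed power of $\log n$ whenever $\gamma<1$; the only genuine care is the covariance bound for the correlated node pairs.
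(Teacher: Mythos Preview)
Your proposal follows the same core computation as the paper: reducing the per-node MAP decision to pairwise comparisons and controlling the sum $\sum_{d,w}\min\{\mathcal{P}_{\lambdavec{i}{j}}(d)\mathcal{P}_{\lambdavechat{i}{j}}(w)P_{i,j},\,\mathcal{P}_{\lambdavec{k}{l}}(d)\mathcal{P}_{\lambdavechat{k}{l}}(w)P_{k,l}\}$ via Lemma~\ref{Lemma-Poisson-two}, which produces matching bounds of order $n^{-\divergence([\gij{i}{j},\hij{i}{j}],[\gij{k}{l},\hij{k}{l}])+o(1)}$. That part, and your homogeneity observation for extracting the $\log n$ scaling, are correct and coincide with the paper.

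There is, however, a genuine gap in your achievability direction. The genie-aided test you describe is not a legitimate estimator: it uses the true labels of the other $n-1$ nodes, which are precisely what you are trying to recover. Your sentence ``since revealing the other labels can only help, this test is optimal'' justifies only the converse (if the genie fails, everything fails); it does \emph{not} justify the forward implication. Showing that the genie-aided test succeeds for every node w.h.p.\ does not by itself produce an algorithm. The paper addresses this through a two-stage procedure (see the Remark following Theorem~\ref{theorem-general-cencored-no side} and Appendix~\ref{Partial Recavery Algorithm}): first a partial recovery algorithm, adapted from Abbe--Sandon to the censored setting, correctly classifies all but $o(n)$ nodes; then the MAP refinement is run using these approximately correct labels in place of the genie's. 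You must invoke this partial-recovery step for the ``if'' direction to go through.

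For the converse, your second-moment/Paley--Zygmund outline is sound and is in fact more explicit than the paper's own argument, which simply exhibits the per-node lower bound $P_e\ge n^{-\divergence(\cdot,\cdot)+o(1)}$ and asserts the conclusion. Your identification of the correlation issue (shared neighbors) and its resolution via the $\Theta(\log n/n)$ sparsity is exactly the standard way to make that step rigorous.
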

\begin{proof}
See Appendix~\ref{proof-theorem-general-cencored}.
\end{proof}

\begin{Theorem}
\label{theorem-general-cencored-side}
Under the \twocensoredmodel{}, when the latent variable $y$ is revealed, exact recovery of $x$ is possible if and only if
\begin{align*}
\gamma_3 \triangleq \min_j \min_{i\neq k} \divergence \big ([\gij{i}{j}, \hij{i}{j}] , [\gij{k}{j}, \hij{k}{j}] \big)> 1.
\end{align*}
\end{Theorem}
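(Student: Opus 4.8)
The plan is to follow the template of Theorem~\ref{theorem-general-side}, but with the single count vector of the uncensored model replaced by the stacked positive/negative pair $(D,W)$, so that Lemma~\ref{Lemma-Poisson-two} plays the role that Lemma~\ref{Lemma-Poisson} played there. Because $y$ is revealed, the optimal detector for a node $v$ with $y_v=j$ only has to separate the hypotheses $H_i$ that share the known value $j$, and under $H_i$ the likelihood is the product $\prod_{i'}\prod_{j'}\mathcal{P}_{\lambdascalar{i}{j}{i'}{j'}}(\dij{i'}{j'})\,\mathcal{P}_{\lambdascalarhat{i}{j}{i'}{j'}}(\wij{i'}{j'})$ displayed above, whose Poisson mean vectors are $\lambdavec{i}{j}=\gij{i}{j}\log n$ and $\lambdavechat{i}{j}=\hij{i}{j}\log n$. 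This conditioning on $j$ is exactly what distinguishes the statement from Theorem~\ref{theorem-general-cencored}: it deletes every competing micro-community whose second coordinate differs from $j$, leaving only the pairs $(i,j)$ versus $(k,j)$, $i\neq k$, and hence the restricted minimum defining $\gamma_3$.

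For the \emph{if} direction I would assume $\gamma_3>1$ and control the probability that the detector errs on a single node. For $v$ truly in $(i,j)$, the event that $H_k$ is at least as likely as $H_i$ has probability at most $\sum_{d,w}\min\{\Prob(D{=}d,W{=}w\mid H_i,j),\Prob(D{=}d,W{=}w\mid H_k,j)\}$, which is precisely $I([\gij{i}{j}\log n,\hij{i}{j}\log n],[\gij{k}{j}\log n,\hij{k}{j}\log n])$ with the two scalars $\paaa,\pbbb$ of Lemma~\ref{Lemma-Poisson-two} set to the relevant prior probabilities. Invoking the upper bound of that lemma together with the degree-one homogeneity of the divergence in~\eqref{eq:CH-definition}, namely $\divergence(c\,a,c\,b)=c\,\divergence(a,b)$, turns the $\log n$ scaling into an exponent and produces a pairwise bound of order $n^{-\divergence([\gij{i}{j},\hij{i}{j}],[\gij{k}{j},\hij{k}{j}])}\le n^{-\gamma_3}$. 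A union bound over the $n$ nodes and the $O(1)$ alternatives $k\neq i$ then yields a total misclassification probability of order $n^{1-\gamma_3}\to 0$.

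For the \emph{only if} direction I would argue that $\gamma_3\le 1$ forces at least one misclassified node with high probability, even for the optimal detector. Let $(i^\star,j^\star)$ and $k^\star$ attain $\gamma_3$, and use a genie that reveals every label except $x_v$ for a node $v$ truly in $(i^\star,j^\star)$; the decision then collapses to the binary Poisson test between mean pairs $[\gij{i^\star}{j^\star},\hij{i^\star}{j^\star}]\log n$ and $[\gij{k^\star}{j^\star},\hij{k^\star}{j^\star}]\log n$, whose error probability is bounded below by the lower bound of Lemma~\ref{Lemma-Poisson-two}, that is, of order $(\log n)^{-m_x m_y}\,n^{-\gamma_3}$. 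Since micro-community $(i^\star,j^\star)$ holds about $nP_{i^\star,j^\star}$ nodes, the expected number of such errors is of order $(\log n)^{-m_x m_y}\,n^{1-\gamma_3}$, which diverges for $\gamma_3<1$; a second-moment argument then converts this diverging mean into the existence of an error with high probability, ruling out exact recovery.

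The hardest part is the threshold case $\gamma_3=1$, where the polylogarithmic prefactor $(\log n)^{-m_x m_y}$ coming from the lower bound of Lemma~\ref{Lemma-Poisson-two} drives the first-moment estimate to zero, so a sharper accounting is needed — for example restricting to the coordinates on which the binary test genuinely separates the two hypotheses, or a Poissonization that avoids these logarithmic losses. A secondary technical point, shared with Theorem~\ref{theorem-general-cencored}, is verifying that the per-node error events are sufficiently weakly dependent for the second-moment method to apply, since nodes in the same micro-community share edges to common neighbors; this is standard but must be checked.
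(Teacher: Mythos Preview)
Your proposal follows the same route as the paper: replace the single Poisson count of Theorem~\ref{theorem-general-side} by the stacked pair $(D,W)$ and invoke Lemma~\ref{Lemma-Poisson-two} in place of Lemma~\ref{Lemma-Poisson}, conditioning on the revealed $y_v=j$ so that only the pairwise tests $(i,j)$ versus $(k,j)$ survive. Both the per-node upper bound $n^{-\divergence([\gij{i}{j},\hij{i}{j}],[\gij{k}{j},\hij{k}{j}])+o(1)}$ and its matching lower bound are exactly what the paper records in Appendix~\ref{proof-theorem-general-cencored-side}, and your use of the homogeneity $\divergence(ca,cb)=c\,\divergence(a,b)$ to convert the $\log n$ scaling of the means into the exponent is the right mechanism.

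There is, however, one genuine gap in your achievability argument. The statistics $D$ and $W$ count signed edges from $v$ to each \emph{micro-community}, so computing them presupposes that the labels $(x_u,y_u)$ of all other nodes $u$ are already known; your union bound over the $n$ nodes therefore analyzes only a genie-aided detector, not any implementable estimator. The paper closes this gap via the two-step device flagged in Remark~4 and Appendix~\ref{Partial Recavery Algorithm}: first run a partial-recovery algorithm that correctly classifies all but $o(n)$ nodes (here $y$ is known, so one clusters within each auxiliary class using the restricted matrices $P^{(k)},\bar Q^{(k)},\Xi^{(k)}$), and then apply the local MAP rule with the estimated micro-community partition in place of the true one. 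The $o(n)$ residual errors perturb the Poisson means by $o(\log n)$, which is absorbed into the $o(1)$ in the exponent, so the per-node bound you wrote down survives. Without this preprocessing step your ``if'' direction is incomplete. Your ``only if'' sketch via the genie lower bound and a second-moment argument is sound and in fact more explicit than the paper's own Appendix~\ref{proof-theorem-general-cencored-side}, which simply reports the matching lower bound on $P_e$; the dependence and boundary issues you flag are real but standard, and the paper does not elaborate on them either.
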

\begin{proof}
See Appendix~\ref{proof-theorem-general-cencored-side}.
\end{proof}

\begin{Theorem}
\label{theorem-general-cencored-no side}
Under the \twocensoredmodel{}, when both latent variables are unknown, exact recovery of $x$ is possible if and only if
\begin{align*}
\gamma_4 \triangleq  \min_{j} \min_{i\neq k} \divergence \big( [\gtildeij{i}{j}, \htildeij{i}{j}], [\gtildeij{k}{j}, \htildeij{k}{j}] \big ) > 1.
\end{align*}
\end{Theorem}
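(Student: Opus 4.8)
The plan is to prove necessity and sufficiency of $\gamma_4>1$ by paralleling the proof of Theorem~\ref{theorem-general-cencored-side}, with the full per-micro-community signatures $[\gij{i}{j},\hij{i}{j}]$ replaced by the community-aggregated signatures $[\gtildeij{i}{j},\htildeij{i}{j}]$. The structural fact underlying this replacement is that, conditioned on the community labels $x$ of all nodes, the auxiliary labels are mutually independent and the edges are conditionally independent; hence marginalizing the neighbors' auxiliary labels factorizes over edges, and the numbers of positive and negative edges from a node $v$ with $x_v=i,y_v=j$ to the whole of community $i'$ are asymptotically independent Poisson variables with means $\gtildeij{i}{j}_{i'}\log n$ and $\htildeij{i}{j}_{i'}\log n$. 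Writing $D'_{i'}$ and $W'_{i'}$ for these aggregated counts, the vector $(D',W')$ is a sufficient statistic for $x_v$ once the community assignment of the remaining nodes is fixed, and its law is exactly the product-Poisson law to which Lemma~\ref{Lemma-Poisson-two} applies; by the degree-one homogeneity of $\divergence$ the induced error exponent is $\divergence([\gtildeij{i}{j},\htildeij{i}{j}],[\gtildeij{k}{j},\htildeij{k}{j}])\log n$.

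For the converse I would bound the error of the optimal estimator from below by a genie-aided test in which the community labels of all nodes other than $v$ are revealed. Given this information the optimal detector of $x_v$ works with the aggregated counts $(D',W')$ above, and I will take the auxiliary label $y_v=j$ to be revealed as well (extra information only lowers the error, so a failure bound obtained under it remains valid without it). The detector then faces the binary problem of distinguishing the product-Poisson law with means $\gtildeij{i}{j}\log n,\htildeij{i}{j}\log n$ from the one with means $\gtildeij{k}{j}\log n,\htildeij{k}{j}\log n$, and the lower bound of Lemma~\ref{Lemma-Poisson-two} gives a per-node misclassification probability of order $n^{-\divergence([\gtildeij{i}{j},\htildeij{i}{j}],[\gtildeij{k}{j},\htildeij{k}{j}])}$ up to a polylogarithmic factor. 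Choosing $(i,j,k)$ as the minimizer of $\gamma_4$ and summing over the $\Theta(n)$ nodes of micro-community $(i,j)$, the expected number of misclassified nodes grows like $n^{1-\gamma_4}$; a second-moment argument, valid because the tests for distinct nodes depend on essentially disjoint edge sets, upgrades this to failure with high probability whenever $\gamma_4<1$.

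For achievability I would use a two-round estimator. The first round produces an almost-exact community labeling $\hat x$ with $o(n)$ errors; since $\gamma_4>1$ renders the communities distinguishable at the aggregate level, such a labeling is obtained from a spectral decomposition of the signed adjacency matrix and needs no knowledge of $y$. In the second round each node is reclassified by the aggregated likelihood-ratio test against $\hat x$, comparing the Poisson scores built from its positive- and negative-edge counts to each estimated community. Because $v$ has only $\Theta(\log n)$ incident edges, the $o(n)$ first-round errors perturb every aggregated count by $o(\log n)$, so by the local Lipschitz continuity and homogeneity of $\divergence$ the per-node exponent is still $\gamma_4-o(1)$; the upper bound of Lemma~\ref{Lemma-Poisson-two} then yields a misclassification probability at most $n^{-(\gamma_4-o(1))}$ per node, and a union bound over the $n$ nodes forces the total error to zero when $\gamma_4>1$.

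The main obstacle is the unknown auxiliary label of $v$ in the achievability direction. The exponent $\gamma_4$ measures separation only against alternative communities carrying the \emph{same} auxiliary label $j$, whereas an estimator ignorant of $y_v$ must also reject every alternative $(k,j')$ with $j'\neq j$. A priori a cross-auxiliary pair could have smaller divergence and hence larger misclassification probability, which would depress the achievable exponent below $\gamma_4$. Reconciling this with the converse requires showing that, in the regime $\gamma_4>1$, the second-round test may be restricted to same-auxiliary alternatives without loss — for instance by pinning $v$ to its own auxiliary coordinate through a sufficiently accurate joint first round — so that the same-$j$ comparisons are the binding ones and the operative exponent coincides with $\gamma_4$. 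Establishing this reduction, together with the control of first-round error propagation into the aggregated Poisson means, is precisely the subtlety that carries the argument beyond the $y$-revealed analysis of Theorem~\ref{theorem-general-cencored-side}.
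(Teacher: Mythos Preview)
Your overall architecture matches the paper's: reduce to the community-aggregated positive/negative edge counts $\big(\sum_{j'} D^{(i',j')},\sum_{j'} W^{(i',j')}\big)$, use the additive property of Poisson to obtain product-Poisson laws with means $\gtildeij{i}{j}\log n$ and $\htildeij{i}{j}\log n$, and invoke Lemma~\ref{Lemma-Poisson-two} together with the degree-one homogeneity of $\divergence$. The paper's converse analyses the MAP rule $\argmax_i \sum_{y_v} I(d,w,i,y_v) P_{i,y_v}$ directly and lower-bounds its error by the same-$y_v$ quantity $\sum_{d,w}\min\{I(d,w,i,y_v)P_{i,y_v},I(d,w,k,y_v)P_{k,y_v}\}$; your genie-aided reduction (reveal $x_u$ for $u\neq v$ and reveal $y_v$, but \emph{not} the auxiliary labels $y_u$ of other nodes) reaches the same bound more transparently, since under that genie only aggregated counts are observable and only same-$j$ alternatives survive. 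The two-round achievability scheme you describe (partial recovery followed by a local Poisson test) is exactly what the paper mandates via its Remark after the theorems and Appendix~\ref{Partial Recavery Algorithm}.

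Where you diverge from the paper is in what you call ``the main obstacle.'' The paper does \emph{not} address the cross-auxiliary comparisons: mirroring its proof of Theorem~\ref{theorem-general-no side}, it asserts that the MAP error is also \emph{upper}-bounded by the same-$y_v$ sum $\sum_{y_v}\sum_{d,w}\min\{I(d,w,i,y_v)P_{i,y_v},I(d,w,k,y_v)P_{k,y_v}\}$ and then applies Lemma~\ref{Lemma-Poisson-two}. The implicit step --- passing from $\min\{\sum_{j}I(\cdot,i,j)P_{i,j},\sum_{j}I(\cdot,k,j)P_{k,j}\}$ to $\sum_{j}\min\{I(\cdot,i,j)P_{i,j},I(\cdot,k,j)P_{k,j}\}$ --- would require $\min\{\sum_j a_j,\sum_j b_j\}\le\sum_j\min\{a_j,b_j\}$, which is false in general. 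So the difficulty you isolate is genuine, and the paper's proof simply glosses over it rather than resolving it. Your suggested fix (pin $y_v$ by an accurate joint first round, so that only same-$j$ alternatives remain in the second-round test) would go beyond what the paper actually writes; carrying it out would require an additional argument that $\gamma_4>1$ suffices for almost-exact recovery of the micro-communities, which the paper's partial-recovery appendix gestures at but does not prove in this specific form.
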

\begin{proof}
See Appendix~\ref{proof-theorem-general-cencored-no side}.
\end{proof}

\begin{Corollary}
Assume $x$ and $y$ are unknown latent variables for all nodes. We randomly reveal the latent variable $y$ for $(1-\epsilon)n$ nodes, where $\epsilon \in (0,1)$. This is equivalent to erasing the latent variable $y$ which is a known latent variable from a node with erasure probability $\epsilon$. Define 
\begin{align*}
\beta_{1} \triangleq - \lim_{n\rightarrow \infty} \frac{\log (1-\epsilon)}{\log n}, \quad 
\beta_{2} \triangleq - \lim_{n\rightarrow \infty} \frac{\log \epsilon}{\log n}.
\end{align*}
\begin{itemize}
    \item 
    Under the \twomodel{} exact recovery is asymptotically possible for latent variable $x$ if and only if
    \begin{align*}
        \min \big( \gamma_{1}+\beta_{1}, \gamma_{2}+\beta_{2}\big) > 1.
    \end{align*}
    \item 
    Under the \twocensoredmodel{} exact recovery is asymptotically possible for latent variable $x$ if and only if
    \begin{align*}
        \min \big( \gamma_{3}+\beta_{1}, \gamma_{4}+\beta_{2}\big) > 1.
    \end{align*}
\end{itemize}
\end{Corollary}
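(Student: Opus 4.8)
The plan is to view the random erasure as partitioning the nodes into two independent populations and to reduce the analysis to the two ``anchor'' regimes already in hand. Because the erasures are drawn independently of the labels and of the graph, conditioned on the erasure pattern each node $v$ is of exactly one type: either $y_v$ is retained, which happens for $(1-\epsilon)n = n^{\,1-\beta_1+o(1)}$ nodes, or it is erased, which happens for $\epsilon n = n^{\,1-\beta_2+o(1)}$ nodes (note that since $\epsilon\in(0,1)$ at most one of $\beta_1,\beta_2$ can be positive). A retained node is then tested exactly as in the revealed regime of Theorem~\ref{theorem-general-side}, governed by $\gamma_1$, while an erased node is tested as in the fully unknown regime of Theorem~\ref{theorem-general-no side}, governed by $\gamma_2$. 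I will carry out the stochastic block model case; the censored case is identical after replacing the Poisson rate vectors $\qij{i}{j}$ by the stacked pairs $[\gij{i}{j},\hij{i}{j}]$ (and $\qtildeij{i}{j}$ by $[\gtildeij{i}{j},\htildeij{i}{j}]$) and invoking Lemma~\ref{Lemma-Poisson-two} in place of Lemma~\ref{Lemma-Poisson}, which yields $\gamma_3,\gamma_4$.

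For achievability I would use a first-moment (union) bound over the nodes, splitting the sum by type. For a retained node the discriminating statistic still separates the competing community hypotheses at the micro-community level, and using the homogeneity of $\divergence$ under common scaling of its arguments (to absorb the $(1-\epsilon)$ and $\epsilon$ mass carried by the finely- versus coarsely-binned neighbors), together with the crude community-level detector of Theorem~\ref{theorem-general-no side}, its two-point error decays at least as fast as $n^{-\max\{(1-\epsilon)\gamma_1,\,\gamma_2\}+o(1)}$. A short case check over the sign of $\beta_1,\beta_2$ then shows that $n^{1-\beta_1}\,n^{-\max\{(1-\epsilon)\gamma_1,\gamma_2\}+o(1)}\to 0$ whenever $\gamma_1+\beta_1>1$ and $\gamma_2+\beta_2>1$. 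Symmetrically, an erased node is governed by the vectors $\qtildeij{i}{j}$, its two-point error decays like $n^{-\gamma_2+o(1)}$, so the expected number of misclassified erased nodes is at most $n^{1-\beta_2}\,n^{-\gamma_2+o(1)}$, which vanishes as soon as $\gamma_2+\beta_2>1$. Hence $\min(\gamma_1+\beta_1,\gamma_2+\beta_2)>1$ forces the expected total number of errors to zero, and Markov's inequality delivers exact recovery with high probability.

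For the converse I would argue through a genie-aided second-moment estimate, using that revealing the true labels of all remaining nodes only improves the optimal rule, so $\Prob(\text{maximum-likelihood fails})\ge\Prob(\text{genie fails})$. Suppose first $\gamma_1+\beta_1<1$. For a retained node attaining the minimizing micro-community pair in the definition of $\gamma_1$, the genie's two-point confusion event has probability $n^{-\gamma_1-o(1)}$ by the lower bound of Lemma~\ref{Lemma-Poisson}; since a constant fraction of the $n^{1-\beta_1}$ retained nodes attain this pair, the expected number of confusion events grows like $n^{\,1-\beta_1-\gamma_1}\to\infty$. As these events depend on nearly disjoint sets of incident edges, a second-moment (Paley--Zygmund) computation shows at least one occurs with probability bounded away from zero, so recovery fails. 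The case $\gamma_2+\beta_2<1$ is treated the same way with erased nodes and the vectors $\qtildeij{i}{j}$.

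The main obstacle, and the step requiring the most care, is verifying that the \emph{mixed} neighborhood information leaves the leading error exponents unchanged: a node sees its neighbors finely binned only on the retained fraction and coarsely binned on the rest, and one must prove, via the homogeneity and the additivity of $\divergence$ over independent Poisson coordinates, that the minority population and the coarse/fine mixture perturb only the $o(1)$ in the exponent, so that the per-node exponents collapse to exactly $\gamma_1$ (retained) and $\gamma_2$ (erased) as in Theorems~\ref{theorem-general-side} and~\ref{theorem-general-no side}. The most delicate instance is the erased-node converse, where one must confirm that marginalizing the node's own auxiliary label caps its two-point exponent at $\gamma_2$ despite a possibly finely-binned neighborhood; this is precisely the quantity controlled in the proof of Theorem~\ref{theorem-general-no side}. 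A secondary obstacle is the near-independence needed for the second-moment step, which is handled as in the standard exact-recovery converse by restricting attention to edges between the candidate nodes and a fixed reference micro-community.
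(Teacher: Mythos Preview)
The paper offers no proof of this Corollary; it is presented as an immediate consequence of Theorems~\ref{theorem-general-side}, \ref{theorem-general-no side}, \ref{theorem-general-cencored-side}, and \ref{theorem-general-cencored-no side}. The intended argument is the direct one your proposal begins with: the $(1-\epsilon)n=n^{1-\beta_1+o(1)}$ retained nodes each carry per-node error $n^{-\gamma_1+o(1)}$ by Theorem~\ref{theorem-general-side}, the $\epsilon n=n^{1-\beta_2+o(1)}$ erased nodes each carry $n^{-\gamma_2+o(1)}$ by Theorem~\ref{theorem-general-no side}, so a union bound gives expected total error $n^{1-\beta_1-\gamma_1+o(1)}+n^{1-\beta_2-\gamma_2+o(1)}$, and the converse restricts attention to whichever subpopulation violates its inequality. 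Your high-level plan matches this.

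Where you overcomplicate is the ``mixed neighborhood'' obstacle, which does not in fact arise. The statistic $D$ in all of these theorems is the vector of edge counts from $v$ to each \emph{micro-community}, computed from the output of the partial-recovery preprocessing (see the Remark immediately following the Corollary and the appendix on the partial recovery algorithm). That step clusters \emph{all} nodes into micro-communities purely from the graph structure, independently of which nodes had their $y$ revealed. Hence every neighbor is finely binned regardless of its erasure status; the erasure pattern enters only at the node $v$ being tested, selecting whether the hypothesis test of Theorem~\ref{theorem-general-side} (if $y_v$ is retained) or of Theorem~\ref{theorem-general-no side} (if $y_v$ is erased) is applied. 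Consequently the per-node exponent for a retained node is exactly $\gamma_1$, not $(1-\epsilon)\gamma_1$, and the homogeneity-of-$\divergence$ detour, the $n^{-\max\{(1-\epsilon)\gamma_1,\gamma_2\}}$ bound, and the case analysis on the signs of $\beta_1,\beta_2$ are all superfluous. Once you drop them, your achievability collapses to the two-line union bound above, and your converse outline is already correct as stated.
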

The results of this part generalize to $M$ latent variables without difficulty.

\begin{Remark}
To prove the “if” part of all theorems in Section~\ref{Information-Theoretic Results}, a partial recovery algorithm is required before applying a MAP estimator. 
For that purpose, the partial recovery algorithm in~\cite{abbe2015community} is adopted and modified to match the scenarios in this paper. Please see Appendix~\ref{Partial Recavery Algorithm}.
\end{Remark}
\section{Semidefinite Programming Results}
\label{Semidefinite Programming Results}
This section describes a semidefinite programming algorithm for recovering the desired latent variable. 
The main results of this part are represented in the context of two scenarios, where the latent variable $x$ is unknown and the latent variable $y$ is either known or unknown (for all nodes in the graph).
We consider $x, y \in \{\pm 1\}^{n}$ such that $x^{T}\mathbf{1} = 0$. Thus, the latent variable $x$ represents two equal-sized communities. The sample size of the latent variable $y$, represented by $\rhoy \triangleq  \frac{1}{n}|\{v \in [n]: y_{v} = 1\}|$, is an unknown quantity.\footnote{Note that semidefinite programming results in this section are obtained for binary equal-sized communities, while the results of Section~\ref{Information-Theoretic Results} were more general.}

\subsection{\Twomodel{}}
We highlight the specifics of a \twomodel{} for the purposes of upcoming calculations.
The probability of an edge drawn between two nodes $v, \vprim$ is characterized by four constants, $\qzero, \qtwo, \qone, \qthree$ such that:
\begin{align*}
A_{ij} \sim  \begin{cases}
\Bern( \qzero \frac{\log n}{n}) & \text{if} \quad x_{v}=x_{\vprim}, y_{v}=y_{\vprim}\\
\Bern( \qtwo \frac{\log n}{n}) & \text{if} \quad x_{v}\neq x_{\vprim}, y_{v}=y_{\vprim}\\
\Bern( \qone \frac{\log n}{n}) & \text{if} \quad x_{v}=x_{\vprim}, y_{v}\neq y_{\vprim}\\
\Bern( \qthree \frac{\log n}{n}) & \text{if} \quad x_{v}\neq x_{\vprim}, y_{v}\neq y_{\vprim}
\end{cases} .
\end{align*}
The corresponding matrix $Q$, as defined earlier, in this case will be:
\begin{align}
\label{equ:Q}
Q =\begin{bmatrix}
\qzero & \qtwo  & \qone  & \qthree\\ 
\qtwo  & \qzero & \qthree & \qone \\ 
\qone  & \qthree & \qzero & \qtwo \\ 
\qthree & \qone  & \qtwo  & \qzero
\end{bmatrix} .
\end{align}

\subsubsection{Recovering $x$ when $y$ is known}
\label{subsec:sbm-one unknown}
In the first scenario, given an observation of the graph $A$ and $y$ which corresponds to the observed graph, the latent variable $x_{v}$ is recovered exactly for each node $v \in [n]$. In this part, $y$ is considered as an observation which helps the estimator to recover the desired latent variable $x$.
Let $W \triangleq yy^{T}$ and $B \triangleq W * A$. Since $x$ is chosen uniformly over $\{x\in\{\pm 1\}^{n} : x^{T}\mathbf{1}=0\}$, the maximum likelihood estimator gives the optimal solution. For this configuration, the log-likelihood is
\begin{equation*}
\log \Prob (A|x, y) = \frac{\Tone}{8} x^{T}Bx +\frac{\Ttwo}{8} x^{T}Ax + c, 
\end{equation*}
where $\Tone  \triangleq \log  (\frac{\qzero \qthree}{\qone \qtwo } )$ and $\Ttwo \triangleq  \log  (\frac{\qzero \qone }{\qtwo \qthree} )$, as $n \rightarrow \infty$ and $c$ is a constant.
Considering the constraints, the maximum likelihood estimator is,
\begin{align}
\label{oublv-ml-equ1}
\hat{x} =&  \underset{x}{\arg\max} ~\Tone x^{T}Bx +\Ttwo x^{T}Ax  \nonumber\\
&\text{subject to} \quad x_{i} \in \{\pm  1 \}, \quad i\in[n] \nonumber\\
&\quad \quad \quad \quad \quad x^{T}\mathbf{1}=0,
\end{align}
which is a non-convex optimization problem. 
Let $Z = xx^{T}$. Reorganizing~\eqref{oublv-ml-equ1}, 
\begin{align}
\label{oublv-relax}
\Zsdp =&  \underset{Z}{\arg\max} ~\langle Z, \Tone  B+\Ttwo A \rangle \nonumber\\
&\text{subject to} \quad Z = xx^{T} \nonumber\\
&\quad \quad \quad \quad \quad Z_{ii}=1, \quad i\in[n] \nonumber\\
&\quad \quad \quad \quad \quad \langle Z, \mathbf{J}\rangle = 0.
\end{align}
By relaxing the rank-one constraint on $Z$, we obtain the following semidefinite programming relaxation of~\eqref{oublv-relax}:
\begin{align}
\label{oublv-sdp-equ2}
\Zsdp =&  \underset{Z}{\arg\max} ~ \langle Z, \Tone  B+\Ttwo A \rangle \nonumber\\
&\text{subject to} \quad Z \succeq 0 \nonumber\\
&\quad \quad \quad \quad \quad Z_{ii}=1, \quad i\in[n] \nonumber\\
&\quad \quad \quad \quad \quad \langle Z, \mathbf{J}\rangle = 0.
\end{align}
For convenience define 
\begin{align*}
    \eta_1 (\qbf,\rho) \triangleq  \frac{\rho}{2} (\sqrt{\qzero }-\sqrt{\qtwo } )^{2} + \frac{1-\rho}{2} (\sqrt{\qone }-\sqrt{\qthree} )^{2} ,
\end{align*}
where $\qbf \triangleq [\qzero, \qtwo, \qone, \qthree]$.

\begin{Theorem}
\label{Theorem-SDP-oublv}
Under the \twomodel{} with binary alphabet where the latent variable $y$ has been revealed, if
\begin{align*}
\begin{cases}
\eta_1 (\qbf,\rhoy) >1 & \text{when} \quad \rhoy \leq 0.5 \\
\eta_1 (\qbf,1-\rhoy) >1 & \text{when} \quad \rhoy > 0.5
\end{cases} 
\end{align*}
then the semidefinite programming estimator is asymptotically optimal, i.e., $\Prob(\Zsdp=Z^{*})\geq 1-o(1)$.
Also, if
\begin{align*}
\begin{cases}
\eta_1 (\qbf,\rhoy) < 1 & \text{when} \quad \rhoy \leq 0.5 \\
\eta_1 (\qbf,1-\rhoy) < 1 & \text{when} \quad \rhoy > 0.5
\end{cases} 
\end{align*}
then for any sequence of estimators $\hat{Z}_{n}$, $\Prob(\hat{Z}_{n}=Z^{*})  \rightarrow 0$.
\end{Theorem}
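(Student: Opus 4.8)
The plan is to certify that the planted solution $Z^{*}=x^{*}(x^{*})^{T}$, where $x^{*}$ denotes the ground-truth community vector, is with high probability the \emph{unique} maximizer of the semidefinite program~\eqref{oublv-sdp-equ2}, and to inherit the converse from the information-theoretic analysis already established. Write $M \triangleq \Tone B + \Ttwo A$ for the objective matrix. Lagrangian duality for~\eqref{oublv-sdp-equ2} shows that $Z^{*}$ is the unique optimum once one exhibits a diagonal matrix $D^{*}$ and a scalar $\lambda^{*}$ for which the dual certificate
\begin{align*}
S^{*} \triangleq D^{*} + \lambda^{*}\mathbf{J} - M
\end{align*}
satisfies (i) $S^{*}x^{*}=0$, (ii) $S^{*}\succeq 0$, and (iii) $\lambda_{2}(S^{*})>0$. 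Indeed, for any feasible $Z$ one has $\langle Z,M\rangle = \mathrm{tr}(D^{*}) - \langle Z,S^{*}\rangle \le \mathrm{tr}(D^{*})$, with equality at $Z^{*}$ by (i), while (iii) forces $\ker S^{*}=\mathrm{span}(x^{*})$ and thereby upgrades optimality to uniqueness.

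First I would pin down the dual variables. Since $\mathbf{J}x^{*}=0$ (because $x^{*\,T}\mathbf{1}=0$), condition (i) forces $D^{*}_{ii}=x^{*}_{i}(Mx^{*})_{i}=\sum_{j}M_{ij}x^{*}_{i}x^{*}_{j}$. Using $M_{ij}=(\Tone y_{i}y_{j}+\Ttwo)A_{ij}$ and splitting the sum over $j$ according to the four micro-communities of the terminating node, $D^{*}_{ii}$ becomes a signed combination of edge counts whose weights collapse, via $\Tone+\Ttwo=2\log(\qzero/\qtwo)$ and $\Ttwo-\Tone=2\log(\qone/\qthree)$, into exactly the log-likelihood-ratio weights separating $x_{v}=i$ from its flip at the node's own value of $y_{v}$. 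The scalar $\lambda^{*}$ is then left free to control the $\mathbf{1}$-direction.

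Next I would reduce (ii)--(iii) to two ingredients. A key structural fact is that $\mathbb{E}[M]$ lies in the span of the four rank-one matrices formed from the vectors $\mathbf{1}$, $x^{*}$, $y$, and $x^{*}*y$; hence on the complement $w\perp\{\mathbf{1},x^{*},y,x^{*}*y\}$ it vanishes and
\begin{align*}
w^{T}S^{*}w = w^{T}D^{*}w - w^{T}(M-\mathbb{E}M)w \ge \big(\min_{i}D^{*}_{ii}-\|M-\mathbb{E}M\|\big)\|w\|^{2}.
\end{align*}
The fluctuation is controlled by the standard spectral bound $\|A-\mathbb{E}A\|=O(\sqrt{\log n})$ for inhomogeneous Erd\H{o}s--R\'enyi graphs in the $\Theta(\log n/n)$ regime, together with $\|M-\mathbb{E}M\|\le(|\Tone|+|\Ttwo|)\,\|A-\mathbb{E}A\|$ since conjugation by $\mathrm{diag}(y)$ is an isometry. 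On the remaining four-dimensional structured subspace the quadratic form is finite-dimensional and verified directly: $x^{*}$ is the designated kernel direction, the $\mathbf{1}$-direction is made strictly positive by taking $\lambda^{*}$ large, and the two extra directions $y,\,x^{*}*y$ (the genuinely new feature relative to single-latent-variable models) are checked to have strictly positive curvature. Everything then rests on the scalar bound $\min_{i}D^{*}_{ii} > C\sqrt{\log n}$ holding with high probability.

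The crux, and the step I expect to be hardest, is this large-deviation estimate for $\min_{i}D^{*}_{ii}$. For a node with $y_{v}=j$, the quantity $D^{*}_{ii}$ is a weighted sum of asymptotically independent Poisson edge counts to the four micro-communities, and $\Prob\big(D^{*}_{ii}\le C\sqrt{\log n}\big)$ decays like $n^{-\divergence(\qij{i}{j},\qij{k}{j})+o(1)}$, the upper tail being supplied by Lemma~\ref{Lemma-Poisson} (the $\log n$ scaling of the Poisson means times one-homogeneity of $\divergence$ produces the exponent). A direct evaluation, in which the optimal Chernoff parameter is $t^{*}=\tfrac12$ by the $\qzero\!\leftrightarrow\!\qtwo,\ \qone\!\leftrightarrow\!\qthree$ symmetry of the true and flipped profiles, identifies this exponent with $\eta_1(\qbf,\rhoy)$ when $y_{v}=1$ and with $\eta_1(\qbf,1-\rhoy)$ when $y_{v}=-1$; one must also verify that shifting the threshold from $0$ to $C\sqrt{\log n}$ perturbs the exponent only by $o(1)$, so the $\Theta(\sqrt{\log n})$ spectral noise does not erode the $\Theta(\log n)$-scale margin. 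A union bound over the $n$ nodes then requires both exponents to exceed $1$, i.e. the stated condition, the case split by $\rhoy\lessgtr0.5$ recording the binding node type. For the converse, this same exponent is exactly $\gamma_1$ of Theorem~\ref{theorem-general-side} specialized to the binary model; hence when it falls below $1$ the optimal estimator already fails to recover $x$, and in particular $\Prob(\hat{Z}_{n}=Z^{*})\to 0$ for every estimator sequence.
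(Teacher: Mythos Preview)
Your achievability argument follows the paper's route almost exactly: the same dual certificate $S^{*}=D^{*}+\lambda^{*}\mathbf{J}-M$, the same choice $D^{*}_{ii}=x^{*}_{i}(Mx^{*})_{i}$, the same spectral-noise bound $\|M-\mathbb{E}M\|=O(\sqrt{\log n})$, and the same large-deviation control of $\min_{i}D^{*}_{ii}$ followed by a union bound. Two points of divergence are worth flagging. First, where you say the structured directions $y$ and $x^{*}\!*y$ are ``checked to have strictly positive curvature,'' the paper does not treat this as a finite-dimensional deterministic verification; instead it expands $v^{T}\mathbb{E}[M]v$ in terms of $v^{T}Wv$, $v^{T}(Z*W)v$, $v^{T}\mathbf{J}v$ and uses a Chernoff-type concentration (Lemma~\ref{Lemma-vWv}) to show the first two are $o(\log n)$ once the $\mathbf{J}$-part is absorbed into $\lambda^{*}$. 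Your subspace framing is equivalent in spirit, but the step is probabilistic, not a direct check, and your decomposition as written does not address the cross terms between the structured and unstructured pieces---the paper sidesteps this by bounding $v^{T}S^{*}v$ for arbitrary $v\perp x^{*}$ in one stroke. Second, for the converse the paper does \emph{not} appeal to Theorem~\ref{theorem-general-side}; it gives a self-contained maximum-likelihood failure argument via node-swap events $F_{1},F_{2}$ and the auxiliary events $E_{1},E_{2}$, with Lemma~\ref{Lemma_converse_prob_1} supplying the lower tail. Your shortcut through Theorem~\ref{theorem-general-side} is legitimate (Remark~3 confirms the exponents coincide) and is more economical, but it is a genuinely different route from the paper's standalone SDP converse.
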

\begin{proof}
See Appendix~\ref{Proof-Theorem-SDP-oublv}.
\end{proof}

\subsubsection{Recovering $x$ when $y$ is unknown}
\label{sec:sbm-two unknown}
Given an observation of the graph $A$, the aim is to exactly recover $x$ while both latent variables $x$ and $y$ are unknown latent variables.
It is assumed that the estimator does not know anything about the auxiliary latent  variable $y$, which its prior distribution is uniform over $\{y: y \in \{\pm 1\}^{n}\}$.
Notice that $x$ is drawn uniformly from $\{x\in\{\pm 1\}^{n} : x^{T}\mathbf{1}=0\}$.
The log-likelihood of $A$ given $x$ and $y$ is
\begin{align*}
\log \Prob  (A|x, y ) =& \frac{\Tone }{8}  y^{T} (A * xx^{T} )y +\frac{ \Ttwo}{8} x^{T}Ax +\frac{\Tthree }{8} y^{T}Ay+ c, 
\end{align*}
where $\Tone  \triangleq  \log  (\frac{\qzero \qthree}{\qone \qtwo } )$, $\Ttwo \triangleq  \log  (\frac{\qzero \qone }{\qtwo \qthree} )$, and $\Tthree  \triangleq \log  (\frac{\qzero \qtwo }{\qone \qthree} )$, as $n  \rightarrow \infty$ and $c$ is a constant. Then
\begin{align*}
\log  \twocolAlignMarker \Prob  (A|x ) \propto \onecolAlignMarker \log \sum_{\mathcal{Y}} \Prob  (A| x, y ) \\
\propto& \log \sum_{\mathcal{Y}} e^{\frac{\Tone }{\Tthree } y^{T} (A * xx^{T} )y +\frac{\Ttwo}{\Tthree } x^{T}Ax + y^{T}Ay} \\
=& \frac{\Tone +\Ttwo}{\Tthree } x^{T}Ax + \sum_{i} \sum_{j} A_{ij} \twocolbreak +\log \sum_{\mathcal{Y}} e^{\frac{\Tone }{\Tthree } y^{T} (A * xx^{T} )y +y^{T}Ay -\frac{\Tone }{\Tthree }x^{T}Ax -\sum_{i} \sum_{j} A_{ij}} .
\end{align*}
Applying the log-sum-exp approximation, the maximum likelihood estimator is
\begin{align}
\label{tublv-map-equ3}
\hat{x} =&  \underset{x}{\arg\max} ~ x^{T}Ax  \nonumber\\
&\text{subject to} \quad x_{i} \in \{\pm  1 \}, \quad i\in[n] \nonumber\\
&\quad \quad \quad \quad \quad x^{T}\mathbf{1}=0 ,
\end{align}
that is a non-convex optimization problem. 
Let $Z = xx^{T}$. Reorganizing \eqref{tublv-map-equ3} yields
\begin{align}
\label{tublv-relax}
\hat{Z} =&  \underset{Z}{\arg\max} ~\langle Z, A \rangle \nonumber\\
&\text{subject to} \quad Z = xx^{T} \nonumber\\
&\quad \quad \quad \quad \quad Z_{ii}=1, \quad i\in[n] \nonumber\\
&\quad \quad \quad \quad \quad \langle Z, \mathbf{J}\rangle = 0.
\end{align}
Relaxing the rank-one constraint on $Z$, we obtain the following semidefinite programming relaxation of \eqref{tublv-relax}:
\begin{align}
\label{tublv-sdp}
\hat{Z} =&  \underset{Z}{\arg\max} ~\langle Z, A \rangle \nonumber\\
&\text{subject to} \quad Z \succeq 0 \nonumber\\
&\quad \quad \quad \quad \quad Z_{ii}=1, \quad i\in[n] \nonumber\\
&\quad \quad \quad \quad \quad \langle Z, \mathbf{J}\rangle = 0.
\end{align}
For convenience define
\begin{align*}
    \eta_2 &(\qbf,\rho) \triangleq \frac{1}{2}  \bigg( \sqrt{\qzero \rho+\qone (1-\rho)}-\sqrt{\qtwo \rho+\qthree(1-\rho)} \bigg)^2.
\end{align*}

\begin{Theorem}
\label{Theorem-SDP-tublv}
Under the \twomodel{} with binary alphabet, if 
\begin{align*}
\min{\{\eta_2 (\qbf,\rhoy), \eta_2 (\qbf,1-\rhoy)\}} > 1 ,
\end{align*}
then the semidefinite programming estimator is asymptotically optimal, i.e., $\Prob(\Zsdp=Z^{*})\geq 1-o(1)$.
Also, if 
\begin{align*}
\min{\{\eta_2 (\qbf,\rhoy), \eta_2 (\qbf,1-\rhoy)\}} < 1 ,
\end{align*}
then for any sequence of estimators $\hat{Z}_{n}$, $\Prob(\hat{Z}_{n}=Z^{*})  \rightarrow 0$.
\end{Theorem}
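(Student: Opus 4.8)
The plan is to establish the achievability direction (the $\SDP$ recovers $Z^{*}=x^{*}(x^{*})^{T}$ above threshold) by exhibiting a dual certificate for~\eqref{tublv-sdp}, and to obtain the converse by reduction to the information-theoretic result of Theorem~\ref{theorem-general-no side}. Throughout, $x^{*}$ denotes the true community vector and $y$ the auxiliary vector, and I work in the Poisson approximation already justified in the text, so that the same- and different-community edge counts of a node obey the Poisson vector means computed earlier.

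For the converse I would first show that, specialized to $m_{x}=m_{y}=2$ with balanced communities and auxiliary fractions $\rho,1-\rho$, the divergence $\gamma_{2}$ of Theorem~\ref{theorem-general-no side} satisfies $\gamma_{2}=\min\{\eta_{2}(\qbf,\rhoy),\eta_{2}(\qbf,1-\rhoy)\}$. Indeed, for an auxiliary value taken by a fraction $\rho$ of the nodes the marginalized edge-frequency vectors $\qtildeij{0}{j}$ and $\qtildeij{1}{j}$ are of the form $(\alpha,\beta)$ and $(\beta,\alpha)$ with $\alpha=\tfrac12(\rho\,\qzero+(1-\rho)\qone)$ and $\beta=\tfrac12(\rho\,\qtwo+(1-\rho)\qthree)$; by this symmetry the maximizer in the Chernoff--Hellinger divergence is $t^{*}=\tfrac12$, so $\divergence((\alpha,\beta),(\beta,\alpha))=(\sqrt{\alpha}-\sqrt{\beta})^{2}=\eta_{2}(\qbf,\rho)$. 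Taking the minimum over the two auxiliary values gives the identity, and the ``$<1$'' half of the theorem then follows verbatim from the impossibility part of Theorem~\ref{theorem-general-no side}.

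For achievability, the Lagrangian/KKT analysis of~\eqref{tublv-sdp} shows that $Z^{*}$ is the unique optimum once one produces a dual slack $S^{*}=\diag(D^{*})-A+\lambda^{*}\mathbf{J}$ with $D^{*}_{ii}=x^{*}_{i}(Ax^{*})_{i}$, the net count of same- minus different-community edges at node $i$, satisfying $S^{*}\succeq 0$ and $\lambda_{2}(S^{*})>0$. The identity $\mathbf{J}x^{*}=0$ makes $S^{*}x^{*}=0$ automatic, and $\lambda^{*}$ is free to be tuned on the $\mathbf{1}$ direction. I would then reduce $S^{*}\succeq 0$ on $(x^{*})^{\perp}$ to three ingredients: (i) a uniform lower bound $\min_{i}D^{*}_{ii}\gtrsim(\text{const})\log n$, where the probability that a node's net degree falls below this floor has exponent equal to the divergence $\eta_{2}(\qbf,\rho_{y_{i}})$ of the converse and is controlled by Lemma~\ref{Lemma-Poisson}; a union bound over the $n$ nodes then needs exactly $\min\{\eta_{2}(\qbf,\rhoy),\eta_{2}(\qbf,1-\rhoy)\}>1$, which is the hypothesis; (ii) the spectral concentration $\lVert A-\mathbb{E}A\rVert=O(\sqrt{\log n})$ valid for edge probabilities of order $\tfrac{\log n}{n}$, which is of smaller order than the floor in (i); and (iii) control of $v^{T}\mathbb{E}[A]\,v$ for $v\in(x^{*})^{\perp}$.

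I expect ingredient (iii) to be the main obstacle, and it is the feature that distinguishes this problem from the ordinary block model. In the plain stochastic block model $\mathbb{E}[A]$ has rank two with range $\operatorname{span}\{\mathbf{1},x^{*}\}$, so $v^{T}\mathbb{E}[A]\,v=0$ on $(x^{*})^{\perp}\cap\mathbf{1}^{\perp}$ and the certificate is immediate. Here the auxiliary variable enriches $\mathbb{E}[A]$ to rank four: exploiting the $\mathbb{Z}_{2}\times\mathbb{Z}_{2}$ structure of the micro-communities, $\mathbb{E}[A]$ is diagonalized by $\{\mathbf{1},x^{*},y,x^{*}*y\}$, and the two extra eigenvectors $y$ and $x^{*}*y$ carry eigenvalues of order $\log n$ that are \emph{not} annihilated on $(x^{*})^{\perp}$. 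The crux is therefore to show that the diagonal floor from $D^{*}$ dominates these auxiliary eigenvalues, i.e.\ that $\diag(\mathbb{E}D^{*})-\mathbb{E}[A]+\lambda^{*}\mathbf{J}\succeq 0$ on $\operatorname{span}\{y,x^{*}*y\}$ modulo $\mathbf{1}$; equivalently, that the community direction is the spectrally dominant nontrivial signal of $\mathbb{E}[A]$ (for $\rho=\tfrac12$ this reduces to the clean comparisons $\qone\ge\qtwo$ and $\qone\ge\qthree$ in the present notation, and for general $\rho$ to the positivity of an induced $2\times2$ form obtained after projecting out $\mathbf{1}$ and $x^{*}$). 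Once (iii) is secured, combining it with (i) and (ii) yields $\lambda_{2}(S^{*})>0$ with high probability and certifies $\Zsdp=Z^{*}$.
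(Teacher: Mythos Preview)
Your converse via Theorem~\ref{theorem-general-no side} is correct and economical: the identification $\gamma_{2}=\min\{\eta_{2}(\qbf,\rhoy),\eta_{2}(\qbf,1-\rhoy)\}$ goes through exactly as you compute, and impossibility is then inherited. The paper does not take this shortcut; it argues directly that the maximum-likelihood estimator fails by planting events $E_{1},E_{2}$ on a subset $\Tset$ of size $\lfloor n/\log^{2}n\rfloor$ and invoking a tilted lower bound (Lemma~\ref{Lemma_converse_prob_2}). Your route is shorter once Theorem~\ref{theorem-general-no side} is available.

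For achievability, the dual certificate, the choice $d_{i}^{*}=x_{i}^{*}(Ax^{*})_{i}$, and ingredients (i)--(ii) match the paper (minor correction: near threshold the floor is only $\log n/\log\log n$, not $\Theta(\log n)$; this is what Lemma~\ref{Lemma_main_two} delivers). The genuine gap is your resolution of (iii). The spectral-dominance condition you arrive at --- $\qone\ge\qtwo$ and $\qone\ge\qthree$ at $\rho=\tfrac12$, or a $2\times2$ positivity in general --- is \emph{not implied} by $\min\{\eta_{2}(\qbf,\rhoy),\eta_{2}(\qbf,1-\rhoy)\}>1$. A concrete obstruction at $\rho=\tfrac12$: take $\qzero$ large with $\qtwo=\qthree=1$ and $\qone=\tfrac12$; then $\eta_{2}\gg1$, yet $\qone<\qtwo$ and by your own computation $\diag(\mathbb{E}D^{*})-\mathbb{E}[A]$ is strictly negative in the $y$ direction. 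So the deterministic eigenvalue comparison cannot close the argument in the generality the theorem claims.

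The paper handles (iii) by a different mechanism: it treats $y$ as random and applies Lemma~\ref{Lemma-vWv}, a Chernoff bound on the random signs $y_{i}$, to assert that $v^{T}(yy^{T})v$ and $v^{T}(Z^{*}*yy^{T})v$ are at most $\sqrt{\log n}$ plus a $(2\rho-1)^{2}v^{T}\mathbf{J}v$ term absorbed into the free multiplier $\lambda^{*}$. Since these quadratic forms enter $v^{T}\mathbb{E}[A]v$ with coefficients $c_{1},c_{3}=O(\tfrac{\log n}{n})$, their contribution becomes $o(1)$ and is dominated by the $\log n/\log\log n$ floor from (i). In short, the paper's device is probabilistic in $y$ rather than a deterministic spectral comparison; you located the right obstacle but proposed the wrong tool to remove it.
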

\begin{proof}
See Appendix~\ref{Proof-Theorem-SDP-tublv}.
\end{proof}

\begin{Remark}
The results of Theorems~\ref{Theorem-SDP-oublv} and~\ref{Theorem-SDP-tublv} are consistent with Theorems~\ref{theorem-general-side} and~\ref{theorem-general-no side}, respectively.
\end{Remark}

\begin{Remark}
The constraint $x^{T}\mathbf{1}=0$ that has been considered for this part results in a well-defined phase transition threshold for exact recovery of latent variable $x$. In general, $x$ may be a random variable which is drawn uniformly from $\{x \in \{\pm 1\}^{n}: x^{T}\mathbf{1}= (2\rho_{x}-1) n\}$, where $\rho_{x} \triangleq  \frac{1}{n}|\{v \in [n]: x_{v} = 1\}|$. Then $x^{T}\mathbf{1}=0$ is substituted by $x^{T}\mathbf{1}= (2\rho_{x}-1) n$ in semidefinite programming relaxations~\eqref{oublv-sdp-equ2} and~\eqref{tublv-sdp}. Also, due to the robustness of \SDP{}, an approximation of $\rho_{x}$ can be replaced for recovering the latent variable $x$. Investigating the constraint $x^{T}\mathbf{1}= (2\rho_{x}-1)n$ and the robustness of semidefinite programming are beyond the scope of this paper. 
\end{Remark}

\subsection{\Twocensoredmodel{}}
We highlight the specifics of a \twocensoredmodel{} for the purposes of upcoming calculations.
Let $P(k ;\qzero, \xi)$ be a discrete probability density function with parameters $\qzero>0$ and $\xi \in [0, 1]$ as,
\begin{align*}
    P(k ;\qzero, \xi) \triangleq& \xi\qzero \frac{\log n}{n} \delta[k-1] +  (1-\xi)\qzero \frac{\log n}{n} \delta[k+1] \\
    &+ \Big (1-\qzero \frac{\log n}{n} \Big) \delta[k] ,
\end{align*}
where $\delta$ is Dirac delta function. 
The probability of an edge drawn between two nodes $v, \vprim$ is characterized by constants $\qzero, \qtwo, \qone, \qthree$ and $\xi$ such that:
\begin{align*}
A_{ij} \sim  \begin{cases}
P(k ;\qzero, 1-\xi)   & \text{if} \quad x_{v}=x_{\vprim} , y_{v}=y_{\vprim} \\
P(k ;\qtwo, \xi)  & \text{if} \quad x_{v}\neq x_{\vprim},  y_{v}=y_{\vprim} \\
P(k ;\qone, \xi)  & \text{if} \quad x_{v}=x_{\vprim}, y_{v}\neq y_{\vprim}\\
P(k ;\qthree, \xi)  & \text{if} \quad x_{v}\neq x_{\vprim}, y_{v}\neq y_{\vprim}
\end{cases} .
\end{align*}
The corresponding matrix $Q$, as defined earlier, is the same as~\eqref{equ:Q}. 
Also, in this case, the corresponding matrix $\Xi$ will be:
\begin{align}
\label{equ:G}
\Xi =\begin{bmatrix}
(1-\xi) & \xi  & \xi  & \xi \\ 
\xi  & (1-\xi) & \xi & \xi \\ 
\xi  & \xi & (1-\xi) & \xi \\ 
\xi & \xi  & \xi  & (1-\xi)
\end{bmatrix} .
\end{align}

\subsubsection{Recovering $x$ when $y$ is known}
Given an observation of the graph $A$ and $y$ which corresponds to the observed graph, the latent variable $x_{v}$ is recovered exactly for each node $v \in [n]$. In this part, $y$ is considered as an observation which helps the estimator to recover the desired latent variable $x$.
Let 
\begin{align*}
    R \triangleq T A +T (A*W)  +\Tone  (A*A*W) +\Ttwo (A*A) , 
\end{align*}
where $T \triangleq \log \big( \frac{1-\xi}{\xi}\big)$ and $W \triangleq yy^{T}$. 
Since $x$ is chosen uniformly over $\{x\in\{\pm 1\}^{n} : x^{T}\mathbf{1}=0\}$, the maximum likelihood estimator gives the optimal solution. 
Similar to Section~\ref{subsec:sbm-one unknown}, it can be shown that the semidefinite programming relaxation of maximum likelihood estimator for this configuration is
\begin{align}
\label{oublv-sdp-equ2-censored}
\Zsdp =&  \underset{Z}{\arg\max} ~ \langle Z, R \rangle \nonumber\\
&\text{subject to} \quad Z \succeq 0 \nonumber\\
&\quad \quad \quad \quad \quad Z_{ii}=1, \quad i\in[n] \nonumber\\
&\quad \quad \quad \quad \quad \langle Z, \mathbf{J}\rangle = 0 .
\end{align}
For convenience define
\begin{align*}
    \gbf &\triangleq [(1-\xi)\qzero, \xi\qtwo, \xi \qone, \xi \qthree]  , \\
    \hbf &\triangleq [\xi \qzero, (1-\xi) \qtwo, (1-\xi) \qone, (1-\xi) \qthree] .
\end{align*}

\begin{Theorem}
\label{Theorem-SDP-oublv-censored}
Under the \twocensoredmodel{} with binary alphabet where the latent variable $y$ has been revealed, if
\begin{align*}
\begin{cases}
\eta_1(\gbf,\rhoy) +\eta_1(\hbf,\rhoy) >1 & \text{when} \quad \rhoy \leq 0.5 \\
\eta_1(\gbf,1-\rhoy) +\eta_1(\hbf,1-\rhoy) >1 & \text{when} \quad \rhoy > 0.5
\end{cases} 
\end{align*}
then the semidefinite programming estimator is asymptotically optimal, i.e., $\Prob(\Zsdp=Z^{*})\geq 1-o(1)$.
Also, if
\begin{align*}
\begin{cases}
\eta_1(\gbf,\rhoy) +\eta_1(\hbf,\rhoy)  < 1 & \text{when} \quad \rhoy \leq 0.5 \\
\eta_1(\gbf,1-\rhoy) +\eta_1(\hbf,1-\rhoy) < 1 & \text{when} \quad \rhoy > 0.5
\end{cases} 
\end{align*}
then for any sequence of estimators $\hat{Z}_{n}$, $\Prob(\hat{Z}_{n}=Z^{*})  \rightarrow 0$.
\end{Theorem}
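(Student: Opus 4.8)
The plan is to adapt the primal--dual certificate argument behind Theorem~\ref{Theorem-SDP-oublv} to the signed observation matrix, exploiting the fact that the censored graph decomposes into two \emph{independent} weighted graphs carrying the positive and negative edges. Write $x^{*}$ for the ground truth, $Z^{*}=x^{*}(x^{*})^{T}$, and recall that $A\in\{-1,0,+1\}^{n\times n}$ with positive edges governed by the rate vector $\gbf$ and negative edges by $\hbf$, these two processes being independent. The objective matrix $R$ is a fixed linear combination (weights $T,\Tone,\Ttwo$) of $A$, $A*W$, $A*A$, and $A*A*W$; since $A*A$ is the edge-support matrix and $A$ the signed matrix, $x^{T}Rx$ equals, up to constants, the log-likelihood that the maximum-likelihood estimator maximizes, so $x^{*}$ maximizes $x^{T}Rx$ in expectation.

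\emph{Achievability.} I would attach a diagonal multiplier $\diag(d)$ to the constraints $Z_{ii}=1$ and a scalar $\lambda$ to $\langle Z,\mathbf{J}\rangle=0$, and form the dual slack $S\triangleq \diag(d)-R+\lambda\,\mathbf{J}$. Choosing $d_{i}=x^{*}_{i}(Rx^{*})_{i}$ forces $Sx^{*}=0$; by SDP strong duality and complementary slackness it then suffices to prove that, with high probability, $S\succeq 0$ with $\lambda_{2}(S)>0$ and null space exactly $\mathrm{span}(x^{*})$, certifying $Z^{*}$ as the unique optimum. Testing $v^{T}Sv$ on unit vectors $v\perp x^{*}$ and absorbing the all-ones direction into $\lambda$, this reduces to the competition
\[
\min_{i}\,x^{*}_{i}(Rx^{*})_{i}\;>\;\bigl\|\,R-\mathbb{E}[R\mid x^{*},y]\,\bigr\|+o(\log n).
\]
The fluctuation term is controlled by a spectral-norm bound for signed sparse random matrices in the $\tfrac{\log n}{n}$ regime, giving $\|R-\mathbb{E}R\|=O(\sqrt{\log n})$, hence negligible relative to the $\Theta(\log n)$ scale of the diagonal.

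Diagonal concentration is the heart of the argument. The quantity $x^{*}_{i}(Rx^{*})_{i}$ is the correctly-aligned signed degree of node $i$, a sum over its positive and negative edges to the four micro-communities. Because the positive-edge and negative-edge processes are independent Poisson in the limit (via the Le Cam approximation used earlier), a Chernoff bound on the lower tail factorizes, and its exponent is $[\eta_1(\gbf,\rho)+\eta_1(\hbf,\rho)]\log n$ for a node in the $y$-group of density $\rho$; this is precisely where Lemma~\ref{Lemma-Poisson-two} supplies the tight rate and where the additive form $\eta_1(\gbf,\cdot)+\eta_1(\hbf,\cdot)$ originates. A union bound over the $n$ nodes keeps the minimum aligned degree positive (and above the spectral floor) exactly when this exponent exceeds $1$. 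Since a node in the minority $y$-group has the fewest same-auxiliary neighbors and hence the worst rate, the binding constraint uses $\min(\rho,1-\rho)$, giving the two stated cases $\rhoy\le 0.5$ and $\rhoy>0.5$.

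\emph{Converse.} For the ``only if'' direction I would pass to the genie-aided MAP test that knows $y$ and every label except $x_{v}$, and show that when the exponent is below $1$ the probability that at least one node is misclassified tends to one. By the independence of the signed edge processes together with Lemma~\ref{Lemma-Poisson-two}, the per-node error exponent again equals $\eta_1(\gbf,\rho)+\eta_1(\hbf,\rho)$ (resp.\ with $1-\rho$), so a first/second-moment argument over the $n$ nodes forces failure of every estimator, and in particular of the SDP. I expect the main obstacle to be twofold: obtaining the sharp spectral-norm bound for the Hadamard-structured signed matrix $R$ in the sparse regime, and carrying out the combined positive/negative large-deviation computation---establishing that the two independent contributions add to give exactly $\eta_1(\gbf,\rho)+\eta_1(\hbf,\rho)$ and that the extremal node lies in the smaller $y$-group.
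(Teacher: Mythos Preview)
Your proposal is correct and follows essentially the same route as the paper: the achievability constructs the dual certificate $S^{*}=D^{*}+\lambda^{*}\mathbf{J}-R$ with $d_i^{*}=x_i^{*}(Rx^{*})_i$, bounds $\|R-\mathbb{E}[R]\|$ by $O(\sqrt{\log n})$, and controls $\min_i d_i^{*}$ by a Chernoff bound whose exponent is $\eta_1(\gbf,\cdot)+\eta_1(\hbf,\cdot)$; the converse shows the maximum-likelihood rule fails via the standard subset trick ($\Tset\subset C_1^{*}$, events $E_1,E_2$) and a tilted-measure lower bound on the per-node bad event. Two small remarks: the paper does not invoke Lemma~\ref{Lemma-Poisson-two} here but computes the Chernoff rate directly (its Lemmas for $\min_i d_i^{*}$ and for $\Prob(E_2')$), and your ``absorb into $\lambda$ plus $o(\log n)$'' step implicitly requires the bounds on $v^{T}Wv$ and $v^{T}(Z^{*}*W)v$ from Lemma~\ref{Lemma-vWv}, since $\mathbb{E}[R]$ is not purely rank one in the $\mathbf{J}$ direction.
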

\begin{proof}
See Appendix~\ref{Proof-Theorem-SDP-oublv-censored}.
\end{proof}

\subsubsection{Recovering $x$ when $y$ is unknown}
Given an observation of the graph $A$, the aim is to exactly recover $x$ while both latent variables $x$ and $y$ are unknown.
It is assumed that the estimator does not know anything about the auxiliary latent  variable $y$, which its prior distribution is uniform over $\{y: y \in \{\pm 1\}^{n}\}$.
Notice that $x$ is drawn uniformly from $\{x\in\{\pm 1\}^{n} : x^{T}\mathbf{1}=0\}$.
Similar to Section~\ref{sec:sbm-two unknown},
it can be shown that for this configuration the semidefinite programming relaxation of the maximum likelihood estimator is
\begin{align}
\label{tublv-sdp-censored}
\hat{Z} =&  \underset{Z}{\arg\max} ~\langle Z, T A + \Ttwo (A*A) \rangle \nonumber\\
&\text{subject to} \quad Z \succeq 0 \nonumber\\
&\quad \quad \quad \quad \quad Z_{ii}=1, \quad i\in[n] \nonumber\\
&\quad \quad \quad \quad \quad \langle Z, \mathbf{J}\rangle = 0.
\end{align}

\begin{Theorem}
\label{Theorem-SDP-tublv-censored}
Under the \twocensoredmodel{} with binary alphabet, if 
\begin{align*}
\min \big \{ \eta_2(\gbf,\rhoy)+\eta_2(\hbf,\rhoy), \eta_2(\gbf,1-\rhoy)+\eta_2(\hbf,1-\rhoy) \big \} > 1 ,
\end{align*}
then the semidefinite programming estimator is asymptotically optimal, i.e., $\Prob(\Zsdp=Z^{*})\geq 1-o(1)$.
Also, if 
\begin{align*}
\min \big \{\eta_2(\gbf,\rhoy)+\eta_2(\hbf,\rhoy), \eta_2(\gbf,1-\rhoy)+\eta_2(\hbf,1-\rhoy) \big \} < 1 ,
\end{align*}
then for any sequence of estimators $\hat{Z}_{n}$, $\Prob(\hat{Z}_{n}=Z^{*})  \rightarrow 0$.
\end{Theorem}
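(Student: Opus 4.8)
The plan is to follow the template of the two proofs this statement interpolates between: the unknown-$y$ stochastic block model analysis behind Theorem~\ref{Theorem-SDP-tublv} (which supplies the log-sum-exp marginalization and the overall dual-certificate scheme) and the known-$y$ censored analysis behind Theorem~\ref{Theorem-SDP-oublv-censored} (which supplies the reduction of the censored observation to a pair of graphs). The structural fact that makes both pieces fit together is that a censored graph splits into two \emph{independent} ordinary block-model graphs: let $A^{+}$ be the unsigned adjacency matrix of the positive edges and $A^{-}$ that of the negative edges, so that $A=A^{+}-A^{-}$ and $A*A=A^{+}+A^{-}$. Conditioned on $x,y$, $A^{+}$ is a four-group block model with edge-rate vector $\gbf$ and $A^{-}$ one with rate vector $\hbf$, and the two are independent. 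Consequently the objective matrix of~\eqref{tublv-sdp-censored} is $M\triangleq TA+\Ttwo(A*A)=(T+\Ttwo)A^{+}+(\Ttwo-T)A^{-}$, a weighted sum of two independent sparse block-model matrices, and all estimates below are phrased in terms of $M$.

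For the achievability (the ``if'' part) I would build the usual dual certificate for~\eqref{tublv-sdp-censored}. Writing $x^{*}$ for the ground truth and $Z^{*}=x^{*}x^{*T}$, the candidate certificate is $S=D+\beta^{*}\mathbf{J}-M$ with diagonal $D$ fixed by the stationarity and complementary-slackness requirement $Sx^{*}=0$; since $\mathbf{J}x^{*}=0$ (because $x^{*T}\mathbf{1}=0$) this forces $d_{i}=x_{i}^{*}(Mx^{*})_{i}$, i.e. $d_{i}$ is the signed total weight node $i$ sends to its own $x$-community minus the other. The goal is then $S\succeq 0$ with $\lambda_{2}(S)>0$ with high probability, which splits into three ingredients. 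First, a spectral-norm bound $\|M-\mathbb{E}[M\mid x^{*},y^{*}]\|=O(\sqrt{\log n})$, obtained by applying the standard sparse-matrix bound separately to $A^{+}-\mathbb{E}A^{+}$ and $A^{-}-\mathbb{E}A^{-}$ and using the triangle inequality, so the noise is of lower order than the $\Theta(\log n)$ diagonal. Second, control of the deterministic matrix $\mathbb{E}[D]-\mathbb{E}[M]+\beta^{*}\mathbf{J}$ restricted to $x^{*\perp}$: here $\mathbb{E}[M]$ lives, up to lower-order terms, in the span of $\mathbf{1},x^{*},y^{*},x^{*}*y^{*}$ (the characters of the underlying $\mathbb{Z}_{2}\times\mathbb{Z}_{2}$ group of the two latent variables), $\beta^{*}$ is chosen to neutralize the $\mathbf{1}$-direction, and the nuisance directions $y^{*}$ and $x^{*}*y^{*}$ must be shown to keep the form nonnegative. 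Third, a uniform lower bound $\min_{i}d_{i}\ge 0$ with high probability via a union bound over the $n$ events $\{d_{i}<0\}$; because $d_{i}$ is a sum of independent contributions from $A^{+}$ and $A^{-}$, its large-deviation exponent factorizes and the requirement $\Prob(d_{i}<0)=o(1/n)$ becomes exactly $\eta_{2}(\gbf,\rho_{i})+\eta_{2}(\hbf,\rho_{i})>1$, where $\rho_{i}=\rhoy$ if $y_{i}=1$ and $\rho_{i}=1-\rhoy$ otherwise. Taking the worst node type yields the stated $\min$ over $\rhoy$ and $1-\rhoy$.

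For the converse (the ``only if'' part) I would argue against every estimator, so it suffices to defeat the MAP estimator. Using a genie that reveals $y^{*}$ together with all entries of $x^{*}$ except at a single node $v$, recovery of $x_{v}$ reduces to a binary Poisson hypothesis test based on the positive and negative edge counts from $v$ to each micro-community; these counts are the $(D,W)$ statistic of Lemma~\ref{Lemma-Poisson-two}, and for a node with parameter $\rho_{v}\in\{\rhoy,1-\rhoy\}$ that lemma identifies the per-node error probability as $n^{-(\eta_{2}(\gbf,\rho_{v})+\eta_{2}(\hbf,\rho_{v}))+o(1)}$. When $\min\{\eta_{2}(\gbf,\rhoy)+\eta_{2}(\hbf,\rhoy),\,\eta_{2}(\gbf,1-\rhoy)+\eta_{2}(\hbf,1-\rhoy)\}<1$, the expected number of misclassified nodes diverges, and a second-moment computation over these nearly independent single-node events shows at least one node is misclassified with high probability; hence no estimator can return $Z^{*}$, giving $\Prob(\hat{Z}_{n}=Z^{*})\to 0$.

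The main obstacle is the second ingredient of the achievability: controlling the deterministic part of the certificate along the nuisance directions $y^{*}$ and $x^{*}*y^{*}$ \emph{while} the objective mixes the two independent graphs $A^{+},A^{-}$. Because $y$ is never observed, the objective cannot separate out the $y$-structure, so one must verify that these characters of $\mathbb{E}[M]$ do not create a negative eigenvalue of $S$ and, moreover, that the binding constraint is precisely the diagonal event $\{\min_{i}d_{i}<0\}$ rather than one of these directions; this is what makes the SDP threshold coincide with the single-node MAP threshold. The accompanying technical point is to show that the large-deviation rate of $d_{i}$ factorizes cleanly across the positive and negative edge processes, so the exponent emerges as the \emph{sum} $\eta_{2}(\gbf,\cdot)+\eta_{2}(\hbf,\cdot)$; this additivity is exactly what Lemma~\ref{Lemma-Poisson-two} is built to deliver, and reusing it on both the achievability and converse sides is what keeps the two thresholds matched.
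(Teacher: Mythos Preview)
Your achievability argument is essentially the paper's: the dual certificate $S^{*}=D^{*}+\lambda^{*}\mathbf{J}-M$ with $d_{i}^{*}=x_{i}^{*}(Mx^{*})_{i}$, a spectral bound on $M-\mathbb{E}[M]$, control of $\mathbb{E}[M]$ on $x^{*\perp}$ via Lemmas~\ref{Lemma_spectral_norm_1} and~\ref{Lemma-vWv}, and a Chernoff/union bound giving $\min_{i}d_{i}^{*}\ge \delta$ whenever $\min\{\eta_{2}(\gbf,\rhoy)+\eta_{2}(\hbf,\rhoy),\eta_{2}(\gbf,1-\rhoy)+\eta_{2}(\hbf,1-\rhoy)\}>1$. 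Your explicit $A^{+}/A^{-}$ decomposition is a clean way to see why the exponent is additive, but the paper does not need it and works directly with $TA+\Ttwo(A*A)$.

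The converse, however, has a genuine gap. Your genie reveals \emph{all} of $y^{*}$, so the optimal single-node test uses the full micro-community counts $(D,W)$; applying Lemma~\ref{Lemma-Poisson-two} to those $8$-dimensional Poisson vectors yields the \emph{known-$y$} exponent $\divergence([\gij{i}{j},\hij{i}{j}],[\gij{k}{j},\hij{k}{j}])$, i.e.\ an $\eta_{1}$-type quantity (Theorem~\ref{theorem-general-cencored-side}/Theorem~\ref{Theorem-SDP-oublv-censored}), not $\eta_{2}$. Since revealing $y$ can only help, one has $\eta_{1}(\gbf,\cdot)+\eta_{1}(\hbf,\cdot)\ge \eta_{2}(\gbf,\cdot)+\eta_{2}(\hbf,\cdot)$, so your genie converse only establishes failure when the \emph{larger} quantity is below $1$; it says nothing in the regime where $\eta_{2}(\gbf,\cdot)+\eta_{2}(\hbf,\cdot)<1\le \eta_{1}(\gbf,\cdot)+\eta_{1}(\hbf,\cdot)$, which is exactly the interesting gap between the two theorems. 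Put differently, you have re-proved the converse of Theorem~\ref{Theorem-SDP-oublv-censored}, not of Theorem~\ref{Theorem-SDP-tublv-censored}.

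The paper avoids this by never revealing $y^{*}$: it argues directly that the $y$-agnostic maximum-likelihood estimator fails, via the events $F_{1},F_{2},E_{1},E_{2}$ built from $e(i,\Hset)=\sum_{j\in\Hset}\big(TA_{ij}+\Ttwo A_{ij}^{2}\big)$, and lower-bounds the per-node confusion probability with the moment-generating-function Lemma~\ref{BASBM-N-Lemma 5} (Lemma~\ref{Lemma_converse_prob_1-censored-t}), obtaining precisely the $\eta_{2}$ exponents. If you prefer a genie formulation, the fix is to let the genie reveal $x^{*}\setminus\{x_{v}\}$ but \emph{not} $y^{*}$ for the other nodes; then the sufficient statistics at $v$ are the aggregated positive/negative edge counts to each $x$-community, and Lemma~\ref{Lemma-Poisson-two} applied to those $2+2$ Poisson variables produces $\eta_{2}(\gbf,\cdot)+\eta_{2}(\hbf,\cdot)$ as required.
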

\begin{proof}
See Appendix~\ref{Proof-Theorem-SDP-tublv-censored}.
\end{proof}
\begin{Remark}
The results of Theorems~\ref{Theorem-SDP-oublv-censored} and~\ref{Theorem-SDP-tublv-censored} are consistent with Theorems~\ref{theorem-general-cencored-side} and~\ref{theorem-general-cencored-no side}, respectively.
\end{Remark}

\section{Discussion \& Numerical Results}

It is illuminating to review the flow of the development of the achievability results througout this paper:
\begin{enumerate}
    \item Calculate the Lagrangian of the corresponding optimization
    \item Extract the dual optimal solution based on the Lagrange multipliers
    \item Show that $\Zsdp=Z^{*}$ is primal optimal solution
	\item Show that $\Zsdp=Z^{*}$ is unique
	\item Extract the conditions under which the dual optimal solution holds
\end{enumerate}
The converses follow the following sequence:
\begin{enumerate}
    \item Extract the maximum likelihood estimator
    \item Extract the conditions under which the maximum likelihood estimator fails 
\end{enumerate}

\begin{figure}
    \centering
    \includegraphics[width=\Figwidth]{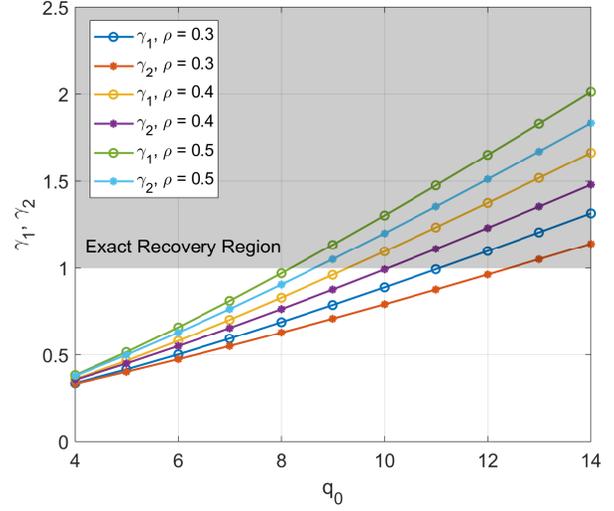}
    \caption{Exact recovery region of $x$ in the context of Eq.~\eqref{equ:Q}, with $\qone=3, \qtwo=\qthree=1$.}
    \label{Fig_threshold_6_12}
\end{figure}

\begin{figure}
    \centering
    \includegraphics[width=\Figwidth]{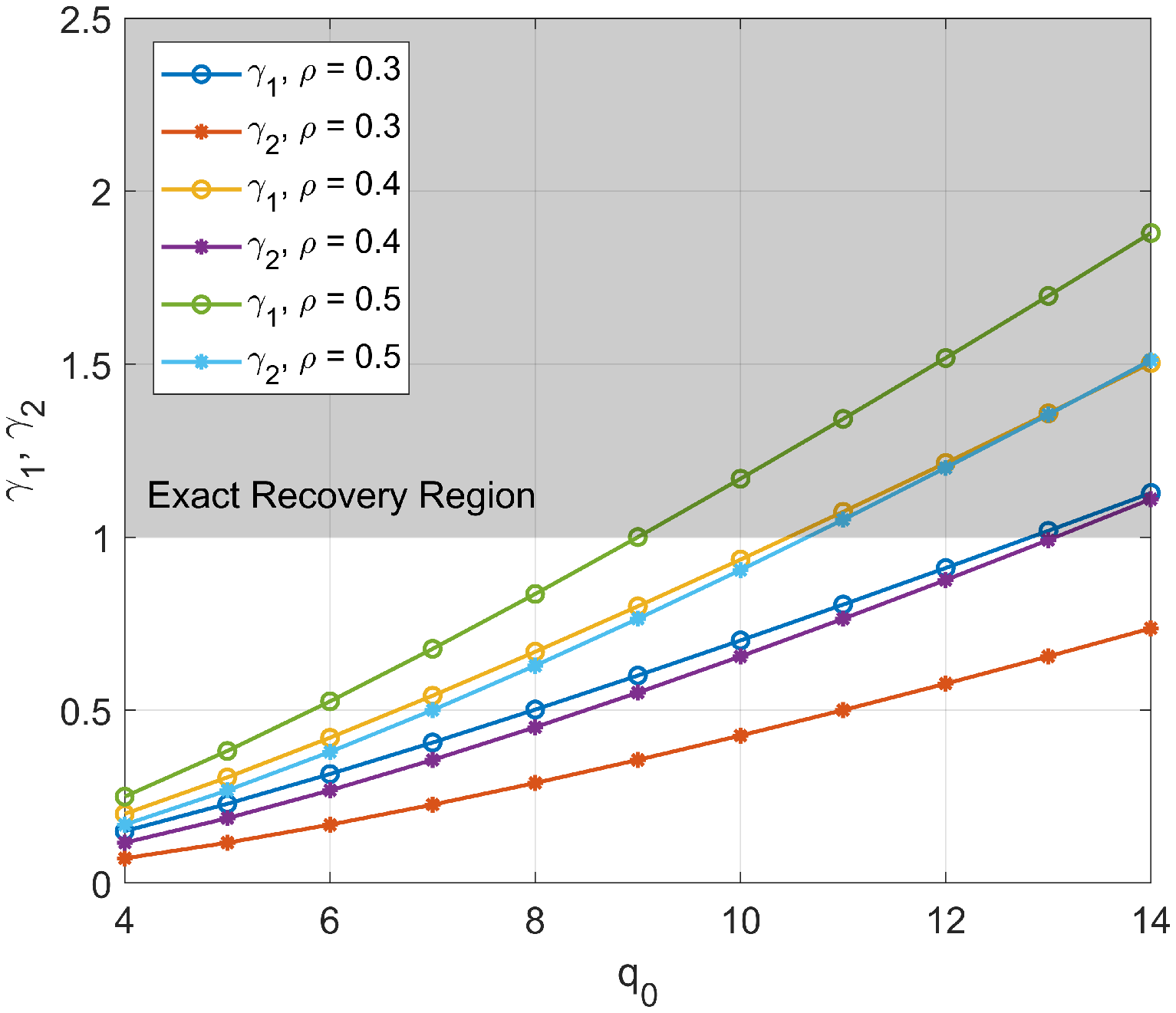}
    \caption{Exact recovery region of $x$ in the context of Eq.~\eqref{equ:Q}, with $\qtwo=\qone=\qthree=1$.}
    \label{Fig_threshold_8_34}
\end{figure}
\begin{figure}[ht]
    \centering
    \includegraphics[width=\Figwidth]{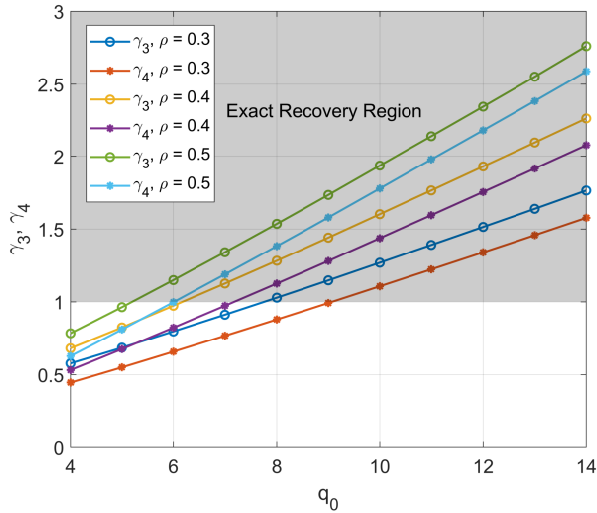}
    \caption{Exact recovery region of $x$ in the context of Eq.~\eqref{equ:Q} and Eq.~\eqref{equ:G}, with $\xi=0.1$, $\qone=3$, and $\qtwo=\qthree=1$.}
    \label{Fig_threshold_6_12-censored}
\end{figure}
\begin{figure}[ht]
    \centering
    \includegraphics[width=\Figwidth]{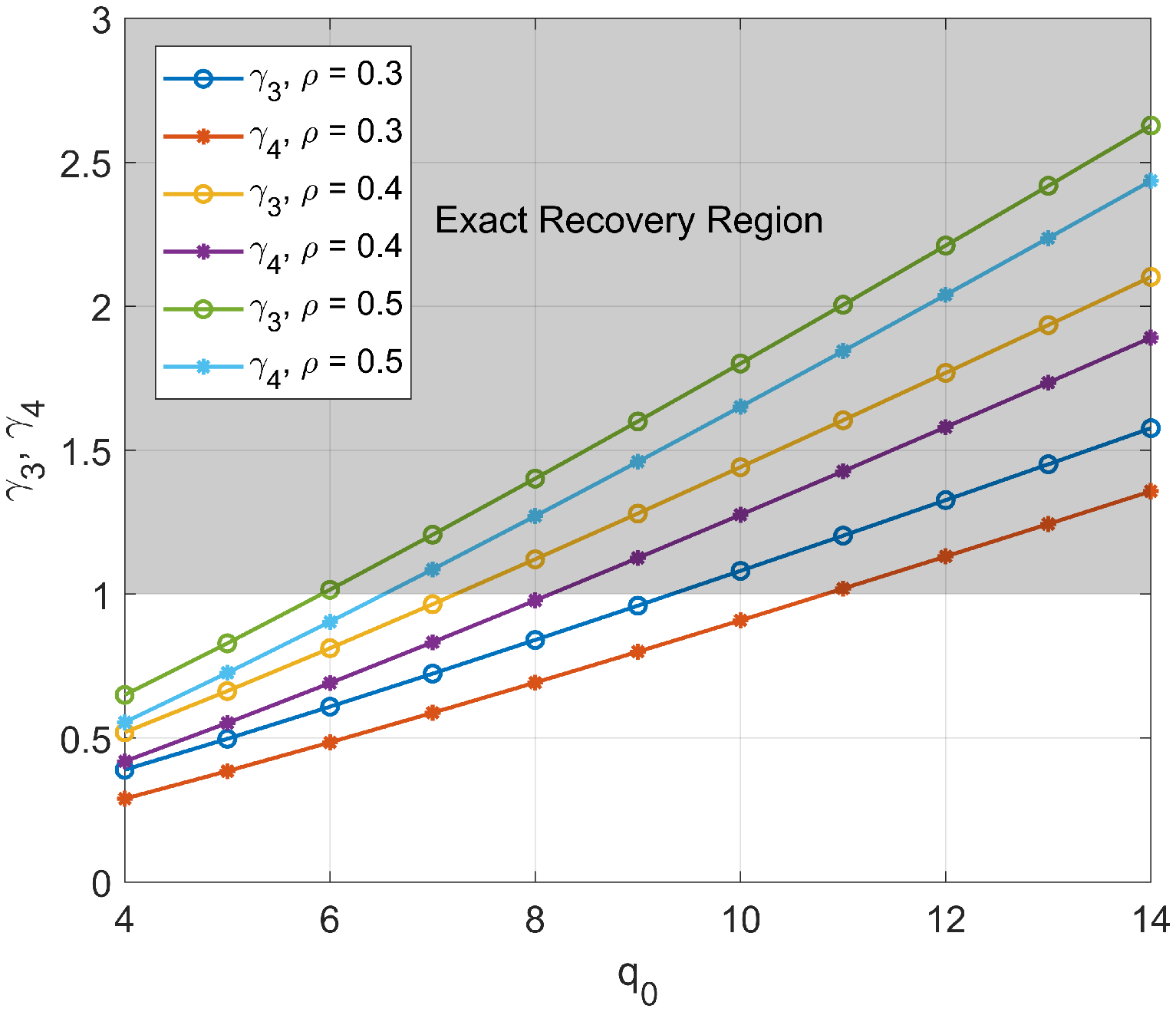}
    \caption{Exact recovery region of $x$ in the context of Eq.~\eqref{equ:Q} and Eq.~\eqref{equ:G}, with $\xi = 0.1$, and $\qtwo=\qone=\qthree=1$.}
    \label{Fig_threshold_8_34-censored}
\end{figure}
To give a pictorial view of some results of the paper, we plot some results in the context of the \twomodel{} represented by~\eqref{equ:Q} and \twocensoredmodel{} represented by~\eqref{equ:Q} and~\eqref{equ:G}. For ease of notation, we define 
\begin{align*}
    &\gamma_1 \triangleq \min \{\eta_1(\qbf,\rho),\eta_1(\qbf,1-\rho)\}, \\
    & \gamma_2 \triangleq \min \{\eta_2(\qbf,\rho),\eta_2(\qbf,1-\rho)\} , \\
    &\gamma_3 \triangleq \min \{\eta_1(\gbf,\rho)+\eta_1(\hbf,\rho),\eta_1(\gbf,1-\rho) +\eta_1(\hbf,1-\rho) \} , \\
    &\gamma_4 \triangleq \min \{\eta_2(\gbf,\rho) +\eta_2(\hbf,\rho),\eta_2(\gbf,1-\rho) +\eta_2(\hbf,1-\rho) \} .
\end{align*}

For the \twomodel{}, Figures~\ref{Fig_threshold_6_12} and~~\ref{Fig_threshold_8_34} 
show the exact recovery region for recovering the latent variable $x$ when the secondary latent variable $y$ is either known or unknown.
The curves in these figures are based on the obtained results in Theorem~\ref{Theorem-SDP-oublv} and Theorem~\ref{Theorem-SDP-tublv}. 
These figures encompass several curves plotted for different values of $\qzero$, $\qtwo$, $\qone$, $\qthree$ in~\eqref{equ:Q}, and $\rhoy$. At each figure, we consider fixed values for $\qtwo$, $\qone$, $\qthree$ and  vary the values of $\qzero$ and $\rhoy$. 
A comparison between the curves in Figures~\ref{Fig_threshold_6_12} and \ref{Fig_threshold_8_34} clarifies the role of the revealed latent variable $y$ for recovering the desired latent variable $x$.

For the \twocensoredmodel{}, Figures~\ref{Fig_threshold_6_12-censored} and~~\ref{Fig_threshold_8_34-censored} 
show the exact recovery region for recovering the latent variable $x$ when the secondary latent variable $y$ is either known or unknown.
The curves in these figures are based on the obtained results in Theorem~\ref{Theorem-SDP-oublv-censored} and Theorem~\ref{Theorem-SDP-tublv-censored}. 
These figures consist of several curves plotted for different values of $\qzero$, $\qtwo$, $\qone$, $\qthree$ in~\eqref{equ:Q} and $\rhoy$, while $\xi=0.1$ in~\eqref{equ:G}. At each figure, we consider fixed values for $\xi$, $\qtwo$, $\qone$, $\qthree$ and  vary the values of $\qzero$ and $\rhoy$. 
A comparison between the curves in Figures~\ref{Fig_threshold_6_12-censored} and \ref{Fig_threshold_8_34-censored} clarifies the role of the revealed latent variable $y$ for recovering the desired latent variable $x$.

To gain an understanding of the scope of our asymptotic results, under the conditions of Figures~\ref{Fig_threshold_6_12} and~\ref{Fig_threshold_6_12-censored}, we performed several simulations on $10^4$ graph realizations with various graph sizes obtained from the proposed models in Section~\ref{system-model}. The obtained average error probability (AEP) is around $10^{-5}$ in the regimes just inside the region of exact recovery, and around $10^{-2}$ in the regimes just outside the region of exact recovery. The details of these simulations are represented in Tables~\ref{SBM} and~\ref{SBM_1}. At each simulation, we consider fixed values for $\qtwo$, $\qone$, $\qthree$ and  vary the values of $\qzero$, $\rhoy$, and $n$. 

\begin{table}[]
\centering
\begin{tabular}{@{}cc|cc|ccc@{}}
\toprule
\multirow{2}{*}{$y$} & \multirow{2}{*}{$n$} & \multicolumn{2}{c|}{BSBM} & \multicolumn{3}{c}{BCBM} \\ \cmidrule(l){3-7} 
                   &                    & $q_0$          & AEP         & $q_0$     & $\xi$     & AEP    \\ \midrule
\text{Known}                   & 100                    &  7           & $3.8 \times 10^{-2}$            & 4       &  0.1      &    $6.7 \times 10^{-2}$    \\
\text{Known}                   & 200                    &  7           & $2.4 \times 10^{-2}$            & 4       & 0.1       &    $4.9 \times 10^{-2}$    \\ 
\text{Known}                   & 300                    &  7           & $1.9 \times 10^{-2}$            & 4       & 0.1       &    $3.6 \times 10^{-2}$    \\
\text{Known}                   & 400                    &  7           & $1.5 \times 10^{-2}$            & 4       & 0.1       &    $2.5 \times 10^{-2}$    \\
\text{Known}                   & 500                    &  7           & $1.1 \times 10^{-2}$            & 4       & 0.1       &    $1.6 \times 10^{-2}$    \\
\text{Known}                   & 100                    & 9            & $8.1 \times 10^{-5}$            & 6       & 0.1       &    $4.6 \times 10^{-5}$    \\
\text{Known}                   & 200                    & 9            &  $5.9 \times 10^{-5}$           & 6       & 0.1       &    $3.2 \times 10^{-5}$    \\ 
\text{Known}                   & 300                    & 9            &  $4.2 \times 10^{-5}$           & 6       & 0.1       &    $2.4 \times 10^{-5}$    \\
\text{Known}                   & 400                    & 9            & $2.8 \times 10^{-5}$            & 6       & 0.1       &    $1.7 \times 10^{-5}$    \\
\text{Known}                   & 500                    & 9            & $1.8 \times 10^{-5}$            & 6       & 0.1       &    $1.2 \times 10^{-5}$    \\
\text{Unknown}                   & 100                    & 8            &  $5.7 \times 10^{-2}$           &  5      & 0.1       & $5.6 \times 10^{-2}$       \\
\text{Unknown}                   & 200                    & 8            & $4.1 \times 10^{-2}$            & 5       &  0.1      & $3.9 \times 10^{-2}$       \\ 
\text{Unknown}                   & 300                    & 8            & $2.7 \times 10^{-2}$            & 5       & 0.1       & $2.5 \times 10^{-2}$       \\
\text{Unknown}                   & 400                    & 8            & $1.8 \times 10^{-2}$            & 5       & 0.1       & $1.6 \times 10^{-2}$        \\
\text{Unknown}                   & 500                    &  8           & $1.3 \times 10^{-2}$            & 5       & 0.1       & $1.1 \times 10^{-2}$        \\
\text{Unknown}                   & 100                    & 10            &  $6.2 \times 10^{-5}$           & 7       & 0.1       & $6.3 \times 10^{-5}$       \\
\text{Unknown}                   & 200                    & 10            & $4.4 \times 10^{-5}$             & 7       & 0.1       & $4.0 \times 10^{-5}$       \\ 
\text{Unknown}                   & 300                    & 10            & $3.3 \times 10^{-5}$            & 7       & 0.1       & $2.3 \times 10^{-5}$       \\
\text{Unknown}                   & 400                    & 10            &  $2.3 \times 10^{-5}$           &  7      & 0.1       & $1.7 \times 10^{-5}$       \\
\text{Unknown}                   & 500                    & 10            &  $1.4 \times 10^{-5}$           & 7       & 0.1       & $1.3 \times 10^{-5}$       \\ \bottomrule
\end{tabular}
\caption{Semidefinite programming optimization of (8) and (10), with $\qone=3, \qtwo=\qthree=1$, and $\rho=0.5$.}
\label{SBM}
\end{table}

\begin{table}[]
\centering
\begin{tabular}{@{}cc|cc|ccc@{}}
\toprule
\multirow{2}{*}{$y$} & \multirow{2}{*}{$n$} & \multicolumn{2}{c|}{BSBM} & \multicolumn{3}{c}{BCBM} \\ \cmidrule(l){3-7} 
                   &                    & $q_0$          & AEP         & $q_0$     & $\xi$     & AEP    \\ \midrule
\text{Known}                   & 100                    &  10           & $7.6 \times 10^{-2}$            & 7       &  0.1      &    $4.1 \times 10^{-2}$    \\
\text{Known}                   & 200                    &  10           & $5.1 \times 10^{-2}$            & 7       & 0.1       &    $3.1 \times 10^{-2}$    \\ 
\text{Known}                   & 300                    &  10           & $3.0 \times 10^{-2}$            & 7       & 0.1       &    $2.3 \times 10^{-2}$    \\
\text{Known}                   & 400                    &  10           & $2.1 \times 10^{-2}$            & 7       & 0.1       &    $1.8 \times 10^{-2}$    \\
\text{Known}                   & 500                    &  10           & $1.3 \times 10^{-2}$            & 7       & 0.1       &    $1.3 \times 10^{-2}$    \\
\text{Known}                   & 100                    & 12            & $6.7 \times 10^{-5}$            & 9       & 0.1       &    $3.9 \times 10^{-5}$    \\
\text{Known}                   & 200                    & 12            &  $5.1 \times 10^{-5}$           & 9       & 0.1       &    $2.5 \times 10^{-5}$    \\ 
\text{Known}                   & 300                    & 12            &  $3.6 \times 10^{-5}$           & 9       & 0.1       &    $1.8 \times 10^{-5}$    \\
\text{Known}                   & 400                    & 12            & $2.5 \times 10^{-5}$            & 9       & 0.1       &    $1.2 \times 10^{-5}$    \\
\text{Known}                   & 500                    & 12            & $1.6 \times 10^{-5}$            & 9       & 0.1       &    $1.0 \times 10^{-5}$    \\
\text{Unknown}                   & 100                    & 11            &  $4.3 \times 10^{-2}$           &  8      & 0.1       & $4.2 \times 10^{-2}$       \\
\text{Unknown}                   & 200                    & 11            & $3.3 \times 10^{-2}$            & 8       &  0.1      & $2.9 \times 10^{-2}$       \\ 
\text{Unknown}                   & 300                    & 11            & $2.4 \times 10^{-2}$            & 8       & 0.1       & $2.0 \times 10^{-2}$       \\
\text{Unknown}                   & 400                    & 11            & $1.7 \times 10^{-2}$            & 8       & 0.1       & $1.3 \times 10^{-2}$        \\
\text{Unknown}                   & 500                    &  11           & $1.2 \times 10^{-2}$            & 8       & 0.1       & $1.0 \times 10^{-2}$        \\
\text{Unknown}                   & 100                    & 13            &  $4.2 \times 10^{-5}$           & 10       & 0.1       & $4.8 \times 10^{-5}$       \\
\text{Unknown}                   & 200                    & 13            & $2.6 \times 10^{-5}$             & 10       & 0.1       & $3.3 \times 10^{-5}$       \\ 
\text{Unknown}                   & 300                    & 13            & $1.7 \times 10^{-5}$            & 10       & 0.1       & $2.2 \times 10^{-5}$       \\
\text{Unknown}                   & 400                    & 13            &  $1.3 \times 10^{-5}$           &  10      & 0.1       & $1.5 \times 10^{-5}$       \\
\text{Unknown}                   & 500                    & 13            &  $1.1 \times 10^{-5}$           & 10       & 0.1       & $1.0 \times 10^{-5}$       \\ \bottomrule
\end{tabular}
\caption{Semidefinite programming optimization of (8) and (10), with $\qone=3, \qtwo=\qthree=1$, and $\rho=0.3$.}
\label{SBM_1}
\end{table}

\section{Conclusion}
This paper presents and analyzes a new generalization of the stochastic and censored block models in which, in addition to the latent variable representing community labels, there exists another (secondary) latent variables that are not part of community detection. These secondary latent variables may be known, unknown, or partially known. This model represents community detection problems where the community labels alone does not explain all the dependencies between the graph edges.  
 
We investigate the exact recovery threshold for these models under maximum likelihood detection, and also  analyze a semidefinite programming algorithm for recovering the desired latent variable under the \twomodel{} and the \twocensoredmodel{} for both scenarios. 

\appendices
\section{Proof of Lemma~\ref{Lemma-Poisson}}
\label{Proof-Lemma-Poisson}
Define
\begin{align*}
    f_{1}(t) &\triangleq \prod_{i=1}^{m} \bigg(\frac{b_{i}}{a_{i}}\bigg)^{(t-1)d_{i}} e^{(t-1)(a_{i}-b_{i})} ,\\
    f_{2}(t) &\triangleq \prod_{i=1}^{m} \bigg(\frac{b_{i}}{a_{i}}\bigg)^{td_{i}} e^{t(a_{i}-b_{i})}.
\end{align*}

For any $t \in [0,1]$,
\begin{align*}
\sum_{d \in \mathbb{Z}_{+}^{m}} & \min \{ \mathcal{P}_{a}(d) \paaa , \mathcal{P}_{b}(d) \pbbb \} \\
\leq& \max \{\paaa, \pbbb \} \sum_{d\in {\mathbb{Z}_{+}^{m}}} \min \{ \mathcal{P}_{a}(d) , \mathcal{P}_{b}(d) \} \\
=& \max \{\paaa, \pbbb \}  \exp \bigg(-\sum_{i} \big[ta_{i}+(1-t)b_{i}-a_{i}^{t}b_{i}^{1-t} \big] \bigg) \\ &\times \sum_{d \in \mathbb{Z}_{+}^{m}} \prod_{i} \frac{ (a_{i}^{t}b_{i}^{1-t})^{d_{i}} }{d_{i}!} e^{-a_{i}^{t}b_{i}^{1-t}}  \min \{ f_{1}(t), f_{2}(t)  \}.
\end{align*}
Both $f_{1}(t)$ and $f_{2}(t)$ are monotonic and $\frac{f_2(t)}{f_1(t)}$ is a positive constant (does not depend on $t$), thus $\min\{f_1,f_2\}$ is also monotonic in $t$. Since $f_1(1)=f_2(0)=1$, for all $t$ we have:
\begin{align*}
\min \{ f_{1}(t), f_{2}(t) \} \leq 1. 
\end{align*}
Notice that
\begin{align*}
\sum_{d \in \mathbb{Z}_{+}^{m}} \prod_{i} \frac{ (a_{i}^{t}b_{i}^{1-t})^{d_{i}} }{d_{i}!} e^{-a_{i}^{t}b_{i}^{1-t}}  = 1. 
\end{align*}
Then
\begin{align}
\label{two-Poisson-upper}
I(a,b) \leq  \max \{\paaa, \pbbb \}  e^{-\sum_{i=1}^{m} \big[ta_{i}+(1-t)b_{i}-a_{i}^{t}b_{i}^{1-t} \big]}.
\end{align}
For the value of $t$ that maximizes the right-hand side of inequality~\eqref{two-Poisson-upper}, we have
\begin{align*}
\sum_{d \in \mathbb{Z}_{+}^{m}} \min \{ \mathcal{P}_{a}(d) \paaa , \mathcal{P}_{b}(d) \pbbb \} &\leq \max \{\paaa, \pbbb \} e^{-\divergence(a,b)}.
\end{align*}
Notice that $t^{*}$ satisfies 
\begin{align*}
\prod_{i=1}^{m} \bigg(\frac{b_{i}}{a_{i}}\bigg)^{a_{i}^{t^{*}}b_{i}^{1-t^{*}}} e^{a_{i}-b_{i}} = 1 .
\end{align*}
Then at the optimal $t^{*}$,
\begin{align*}
\sum_{d \in \mathbb{Z}_{+}^{m}} & \min \{ \mathcal{P}_{a}(d) \paaa , \mathcal{P}_{b}(d) \pbbb \} \nonumber\\
\geq& \min \{\paaa, \pbbb \} \sum_{d \in \mathbb{Z}_{+}^{m}} \min \{ \mathcal{P}_{a}(d) , \mathcal{P}_{b}(d) \} \nonumber\\
\overset{(a)}{\geq} & \min \{\paaa, \pbbb \}  e^{-\divergence(a,b)}  \prod_{i} \frac{ (a_{i}^{t^{*}}b_{i}^{1-t^{*}})^{a_{i}^{t^{*}}b_{i}^{1-t^{*}}} }{a_{i}^{t^{*}}b_{i}^{1-t^{*}}!} e^{-a_{i}^{t^{*}}b_{i}^{1-t^{*}}} \\
\overset{(b)}{\geq} & \min \{\paaa, \pbbb \}  e^{-\divergence(a,b)} \prod_{i} \frac{1}{e} \big(a_{i}^{t^{*}}b_{i}^{1-t^{*}}\big)^{-\frac{1}{2}} ,
\end{align*}
where $(a)$ holds because 
\begin{align*}
\sum_{d \in \mathbb{Z}_{+}^{m}} \min \{ \mathcal{P}_{a}(d) , \mathcal{P}_{b}(d) \} \geq \min \{ \mathcal{P}_{a}(d^*) , \mathcal{P}_{b}(d^*) \},
\end{align*}
where $d^*$ is defined by $d_i^* \triangleq a_i^{t^*}b_i^{1-t^*}$, and $(b)$ is due to Stirling's approximation $n! \leq n^{n+\frac{1}{2}} e^{-n +1}$ for any $n\geq 1$.
\section{Proof of Theorem~\ref{theorem-general-side}}
\label{proof-theorem-general-side}
We aim to recover $x_{v}$ when $y_{v}$ is known.
Given a realization of $D$ and $y_{v}$, our goal is to minimize the error probability by selecting the most likely hypothesis, i.e.,
\begin{align*}
\argmax_{i} \Prob\{H_i| D = d,y_v\}, 
\end{align*}
or equivalently, since $d, y_v$ are known observations,
\begin{align*}
\argmax_{i} P(d| H_i, y_{v}) \Prob\{H_i, y_{v}\} ,
\end{align*}
which is the maximum a posteriori (MAP) detector, which we rewrite:
\begin{align}
\label{MapDetector-SideInformation}
\argmax_{i} P(d| H_i,y_{v}) P_{i,y_{v}} .
\end{align}
Solving~\eqref{MapDetector-SideInformation} requires $m_{x}-1$ pairwise comparisons of the hypotheses. From this viewpoint, if 
\begin{align}
P(d| H_i,y_{v}) P_{i,y_{v}} \leq P(d| H_k,y_{v}) P_{k,y_{v}}, 
\label{eq:twohypothesiscorrect}
\end{align}
then a pairwise comparison will choose $H_k$ over $H_i$. Now assume the correct hypothesis is $H_i$, and denote by $\bad_{ik}$ the region of $D$ for which \eqref{eq:twohypothesiscorrect} is satisfied, i.e., $H_i$ has a {\em worse} metric compared with $H_k$. Also denote by $\bad_i$ the region for $D$ where the overall MAP decoder is in error. The dependence of error regions $\bad_{ik}$ and $\bad_i$ on $y_v$ is implicit. Then the probability of error
\begin{align}
\label{error_probability}
P_{e} = \sum_{i} \Prob\{D \in \bad_{i} | H_i,y_v\} P_{i,y_{v}} .
\end{align}
Since $\bad_i \subset \cup_k \bad_{ik}$,
\begin{align*}
P_{e} \le \sum_{i} \sum_{k\neq i} \Prob \{D \in \bad_{ik} | H_i,y_v\} P_{i,y_{v}} .
\end{align*}
From the earlier Poisson assumption
$P(d | H_i, y_v) = \mathcal{P}_{\lambda_{i,y_v}}(d)$ it follows that:
\begin{align*}
\min \{\mathcal{P}_{\lambdavec{i}{y_v}}(d) P_{i,y_{v}} &, \mathcal{P}_{\lambdavec{k}{y_v}}(d) P_{k,y_{v}}  \} = \twocolbreak
\begin{cases}
\mathcal{P}_{\lambdavec{i}{y_v}}(d) \, P_{i,y_{v}}  & \text{when} ~D \in \bad_{ik} \\
\mathcal{P}_{\lambdavec{k}{y_v}}(d) \, P_{k,y_{v}}  & \text{when} ~D \in  \bad_{ik}^c
\end{cases} .
\end{align*}
Therefore, substituting into the union bound:
\begin{align}
\label{error probability-upper bound}
P_{e} & \leq \sum_d \sum_i\sum_{k>i} \min \{ \mathcal{P}_{\lambdavec{i}{y_v}}(d) P_{i,y_{v}}, \mathcal{P}_{\lambdavec{k}{y_v}}(d) P_{k,y_{v}} \}.
\end{align}
For bounding the error probability~\eqref{error probability-upper bound}, it suffices to find an upper bound for
\begin{align}
\label{main-side information}
\sum_{d} \min \{ \mathcal{P}_{\lambdavec{i}{y_v}}(d) P_{i,y_{v}}, \mathcal{P}_{\lambdavec{k}{y_v}}(d) P_{k,y_{v}} \}.
\end{align}
It follows from Lemma~\ref{Lemma-Poisson} that
\begin{align}
\label{final-upper-side}
P_{e} &\leq  \sum_{i} \sum_{k > i} \max \{P_{i,y_{v}}, P_{k,y_{v}} \} e^{-\divergence( \lambdavec{i}{y_v},\lambdavec{k}{y_v})}  \twocolbreak =
 \sum_{i} \sum_{k > i} n^{-\divergence \big( \qij{i}{y_v},  \qij{k}{y_v} \big) + o(1)} . 
\end{align}

We now bound the error probability of decoding rule~\eqref{MapDetector-SideInformation} from below. Since 
\begin{align}
\label{noneqiality-lower bound}
\sum_{k\neq i} &\Prob\{D \in \bad_{ik} | H_i,y_k\} \leq  (m_{x}-1) \Prob\{D \in \bad_i | H_i,y_v\}, 
\end{align}
substituting~\eqref{noneqiality-lower bound} into \eqref{error_probability} yields 
\begin{align*}
P_{e} \geq& \frac{1}{m_{x}-1} \sum_i\sum_{k\neq i} \Prob\{D \in \bad_{ik} | H_i,y_v\} P_{i,y_v} = \frac{1}{m_{x}-1} \\
&\times \sum_{i}\sum_{k > i}\sum_d \min \Big\{ \mathcal{P}_{\lambda_{i,y_v}}(d) P_{i,y_v},  \mathcal{P}_{\lambda_{k,y_v}}(d) P_{k,y_v} \Big\}.
\end{align*}
Then it suffices to find a lower bound for~\eqref{main-side information} to bound the error probability from below. It follows from Lemma~\ref{Lemma-Poisson} that
\begin{align}
\label{final-lower-side}
P_{e} \geq&  \sum_{i} \sum_{k > i} c' \min \{P_{i,y_{v}}, P_{k,y_{v}} \} (\log n)^{-\frac{m}{2}}  e^{-\divergence( \lambdavec{i}{y_v},\lambdavec{k}{y_v})}  \nonumber \\
=&  \sum_{i} \sum_{k > i} n^{-\divergence( \qij{i}{y_v}, \qij{k}{y_v} ) + o(1)}, 
\end{align}
where $c'$ is a constant and $m$ is the number of elements in vector $d$, i.e., the product of alphabet sizes of $x_v$ and $y_v$. 
The lower and upper bounds~\eqref{final-upper-side} and~\eqref{final-lower-side} imply that the true hypothesis is recovered correctly if $\divergence( \qij{i}{y_v}, \qij{k}{y_v} ) > 1$, for a given $y_{v}$ and any $i\neq k$. This means that a known latent variable restricts the number of pairwise comparisons. Then under the \twomodel{} in which the latent variable $y$ is known, and the latent variable $x$ is unknown, exact recovery is possible for $x$ if and only if
\begin{align}
\label{equ: final 1}
\min_j \min_{i\neq k} \divergence(\qij{i}{j},\qij{k}{j}) > 1.
\end{align}
\section{Proof of Theorem~\ref{theorem-general-no side}}
\label{proof-theorem-general-no side}

We aim to recover $x_{v}$ when $y_{v}$ is unknown, given a realization of $D$ for node $v$.
For this setting the MAP detector is
\begin{align*}
\argmax_{i} \Prob\{H_i| D = d \}, 
\end{align*} 
or equivalently, 
\begin{align}
\label{map-detector}
\argmax_{i} \sum_{y_v} \prod_{l} P \Bigg( \sum_{j} \dij{l}{j} | H_i, y_v \Bigg) P_{i,y_v}. 
\end{align}
Solving~\eqref{map-detector} requires $m_{x}-1$ pairwise comparisons. In these comparisons, if
\begin{align}
\sum_{y_v} \prod_{l} P &\Bigg( \sum_{j} \dij{l}{j} | H_i, y_v \Bigg) P_{i,y_v} \nonumber \\
&< \sum_{y_v} \prod_{l} P \Bigg( \sum_{j} \dij{l}{j} | H_k, y_v \Bigg) P_{k,y_v} , 
\label{eq:pairwisecomparison}
\end{align}
then we conclude hypothesis $H_i$ is ruled out, i.e., $x_{v} \neq i$, because another hypothesis $H_k$ has a better metric. 
Denote by $\bad_{ik}$ the region of $D$ for which $H_i$ has a {\em worse} metric compared with $H_k$, i.e., the region for $D$ in which \eqref{eq:pairwisecomparison} is satisfied. Also denote by $\bad_i$ the region for $D$ where the overall MAP decoder is in error. 
The error probability of MAP decoder~\eqref{map-detector} is given by
\begin{align}
\label{error-probability-no side}
P_{e} = \sum_{i} \sum_{y_v} \Prob(D \in \bad_i| H_i,y_v\} P_{i,y_v} .
\end{align}
Since $\bad_i \subset \cup_k \bad_{ik}$, via the union bound,
\begin{align}
\label{noneqiality-union bound-no side}
\sum_{y_v} & \Prob\{D \in \bad_i | H_i,y_v\} P_{i,y_v}  \nonumber \\
&\leq \sum_{y_v}\sum_{k\neq i} \Prob\{D \in \bad_{ik} | H_i,y_v\} P_{i,y_v}.
\end{align}
Using the Poisson approximation and the  additive property of Poisson distribution:
\begin{align*}
I(d,i, y_v) &\triangleq  \prod_{l} P \bigg(\sum_{j} \dij{l}{j} | H_i,y_v \bigg) \\
&=  \prod_{l} \mathcal{P}_{\sum_{j} \lambdascalar{i}{y_v}{l}{j}} \bigg(\sum_{j} \dij{l}{j} \bigg) .
\end{align*}
Therefore,
\begin{align*}
\min \Big\{ I(d,i, y_v) P_{i,y_v}, & I(d,k,y_v) P_{k,y_v} \Big\} = \twocolbreak 
\begin{cases}
 I(d,i,y_v) P_{i,y_v} & \text{when} ~D \in \bad_{ik} \\
I(d,k,y_v) P_{k,y_v} & \text{when} ~D \in \bad_{ik}^c
\end{cases} .
\end{align*}
Substituting~\eqref{noneqiality-union bound-no side} into~\eqref{error-probability-no side} yields
\begin{align}
\label{error probability-upper bound- no side}
P_{e} \leq& \sum_{i}\sum_{k\neq i}\sum_{y_v} \Prob\{D \in \bad_{ik} | H_i,y_v\} P_{i,y_v} \nonumber \\
 =& \sum_{i}\sum_{k> i} \sum_{y_v} \sum_{d} \min \Big\{ I(d,i, y_v) P_{i,y_v}, I(d,k,y_v) P_{k,y_v} \Big\}.
\end{align}
For bounding the error probability~\eqref{error probability-upper bound- no side} from above, it suffices to find an upper bound for
\begin{align}
\label{main-no side}
\sum_{d \in \mathbb{Z}_{+}^{m}} \min \Big\{ I(d,i, y_v) P_{i,y_v}, I(d,k,y_v) P_{k,y_v} \Big\}.
\end{align}
Applying Lemma~\ref{Lemma-Poisson} yields 
\begin{align}
\label{final-upper-no side}
P_{e} &\leq \sum_{i} \sum_{k > i} \sum_{y_v} n^{-\divergence \big( \qtildeij{i}{y_v}, \qtildeij{k}{y_v}\big) + o(1)} .
\end{align}
We now bound the error probability of decoding rule~\eqref{map-detector} from below. Notice that
\begin{align}
\label{noneqiality-lower bound-no side}
\sum_{k\neq i} \Prob\{D \in \bad_{ik} | H_i,y_v\} \leq  (m_{x}-1) \Prob\{D \in \bad_i | H_i,y_v\}. 
\end{align}
Substituting~\eqref{noneqiality-lower bound-no side} into \eqref{error-probability-no side} yields 
\begin{align*}
P_{e} &\geq \frac{1}{m_{x}-1} \sum_i\sum_{k\neq i}\sum_{y_v} \Prob\{D \in \bad_{ik} | H_i,y_v\} P_{i,y_v}  =\frac{1}{m_{x}-1} \\ 
 &\times  \sum_{i}\sum_{k > i} \sum_{y_v} \sum_d \min \Big\{ I(d,i, y_v) P_{i,y_v}, I(d,k,y_v) P_{k,y_v} \Big\}.
\end{align*}
Then it suffices to find a lower bound for~\eqref{main-no side}.
Applying Lemma~\ref{Lemma-Poisson} yields 
\begin{align}
\label{final-lower-no side}
P_{e} &\geq \sum_{i} \sum_{k > i} \sum_{y_v} n^{-\divergence \big(  \qtildeij{i}{y_v},  \qtildeij{k}{y_v} \big) + o(1)} . 
\end{align}
The lower and upper bounds~\eqref{final-upper-no side} and~\eqref{final-lower-no side} imply that the true hypothesis is recovered correctly if $\divergence (  \qtildeij{i}{y_v}, \qtildeij{k}{y_v} ) > 1$ for any $i\neq k$ and any $y_v$. Then under \twomodel{} in which both latent variables $x, y$ are unknown, exact recovery is solvable for $x$ if and only if
\begin{align}
\label{equ: final 2}
\min_{j} \min_{i\neq k} 
\divergence \Big( \qtildeij{i}{j}, \qtildeij{k}{j} \Big) > 1.
\end{align}
\section{Proof of Lemma~\ref{Lemma-Poisson-two}}
\label{Proof-Lemma-Poisson-two}
Define
\begin{align*}
    f_{1}(t) &\triangleq \left( \frac{\mathcal{P}_{a}(d) \mathcal{P}_{\hat{a}}(w)}{\mathcal{P}_{b}(d) \mathcal{P}_{\hat{b}}(w)} \right)^{1-t}  ,\\
    f_{2}(t) &\triangleq \left( \frac{\mathcal{P}_{b}(d) \mathcal{P}_{\hat{b}}(w)}{\mathcal{P}_{a}(d) \mathcal{P}_{\hat{a}}(w)} \right)^{t} , \\
    f(t) &\triangleq \mathcal{P}_{a}(d)^{t} \mathcal{P}_{b}(d)^{1-t} \mathcal{P}_{\hat{a}}(w)^{t} \mathcal{P}_{\hat{b}}(w)^{1-t}.
\end{align*}
For any $t\in [0,1]$,
\begin{align}
\label{four-Poisson-upper}
\sum_{d,w \in \mathbb{Z}_{+}^{m}} & \min \{ \mathcal{P}_{a}(d) \mathcal{P}_{\hat{a}}(w) \paaa , \mathcal{P}_{b}(d) \mathcal{P}_{\hat{b}}(w) \pbbb \} \nonumber \\
\leq& \max \{\paaa, \pbbb \} \sum_{d,w \in \mathbb{Z}_{+}^{m}} \min \{ \mathcal{P}_{a}(d) \mathcal{P}_{\hat{a}}(w) , \mathcal{P}_{b}(d) \mathcal{P}_{\hat{b}}(w) \} \nonumber \\
\leq & \max \{\paaa, \pbbb \}  \exp \bigg(-\sum_{i} \big[ta_{i}+(1-t)b_{i}-a_{i}^{t}b_{i}^{1-t} \big] \bigg) \nonumber \\ 
&\times \exp \bigg(-\sum_{i} \big[t\hat{a}_{i}+(1-t)\hat{b}_{i}-\hat{a}_{i}^{t}\hat{b}_{i}^{1-t} \big] \bigg) ,
\end{align}
where the last inequality holds because $\min \{ f_{1}(t), f_{2}(t) \} \leq 1$, and 
\begin{align*}
\sum_{d,w \in \mathbb{Z}_{+}^{m}}  \prod_{i} \frac{ (a_{i}^{t}b_{i}^{1-t})^{d_{i}} }{d_{i}!} e^{-a_{i}^{t}b_{i}^{1-t}}  
\frac{ (\ahat_{i}^{t}\bhat_{i}^{1-t})^{w_{i}} }{w_{i}!} e^{-\ahat_{i}^{t}\bhat_{i}^{1-t}} = 1 .
\end{align*}
For the value of $t$ that minimizes the upper bound of~\eqref{four-Poisson-upper}, we have
\begin{align*}
I(a,b, \ahat, \bhat) &\leq \max \{\paaa, \pbbb \} e^{-\divergence([a,\ahat],[b,\bhat])}.
\end{align*}
Notice that $t^{*}$ satisfies 
\begin{align*}
\prod_{i=1}^{m} \bigg(\frac{b_{i}}{a_{i}}\bigg)^{a_{i}^{t^{*}}b_{i}^{1-t^{*}}} \bigg(\frac{\bhat_{i}}{\ahat_{i}}\bigg)^{\ahat_{i}^{t^{*}}\bhat_{i}^{1-t^{*}}} e^{a_{i}-b_{i}+\ahat_{i}-\bhat_{i}} = 1 .
\end{align*}
Then at the optimal $t^{*}$,
\begin{align*}
\sum_{d,w \in \mathbb{Z}_{+}^{m}} & \min \{ \mathcal{P}_{a}(d) \mathcal{P}_{\hat{a}}(w) \paaa , \mathcal{P}_{b}(d) \mathcal{P}_{\hat{b}}(w) \pbbb \} \nonumber \\
\geq& \min \{\paaa, \pbbb \} \sum_{d,w \in \mathbb{Z}_{+}^{m}} \min \{ \mathcal{P}_{a}(d) \mathcal{P}_{\hat{a}}(w), \mathcal{P}_{b}(d) \mathcal{P}_{\hat{b}}(w) \} \nonumber\\
\overset{(a)}{\geq} & \min \{\paaa, \pbbb \}  e^{-\divergence([a,\ahat],[b,\bhat])} \nonumber \\
&\times \prod_{i} \frac{ (a_{i}^{t^{*}}b_{i}^{1-t^{*}})^{a_{i}^{t^{*}}b_{i}^{1-t^{*}}} }{a_{i}^{t^{*}}b_{i}^{1-t^{*}}!} e^{-a_{i}^{t^{*}}b_{i}^{1-t^{*}}} \nonumber \\
&\times \prod_{i} \frac{ (\ahat_{i}^{t^{*}}\bhat_{i}^{1-t^{*}})^{\ahat_{i}^{t^{*}}\bhat_{i}^{1-t^{*}}} }{\ahat_{i}^{t^{*}}\bhat_{i}^{1-t^{*}}!} e^{-\ahat_{i}^{t^{*}}\bhat_{i}^{1-t^{*}}} \nonumber \\
\overset{(b)}{\geq} & \min \{\paaa, \pbbb \}  e^{-\divergence([a,\ahat],[b,\bhat])} \prod_{i} \frac{1}{e^2} \big [ (a_{i} \ahat_{i} )^{t^{*}} (b_{i}\bhat_{i})^{1-t^{*}} \big ]^{-\frac{1}{2}} ,
\end{align*}
where $(a)$ holds because 
\begin{align*}
\sum_{d,w \in \mathbb{Z}_{+}^{m}} & \min \{ \mathcal{P}_{a}(d) \mathcal{P}_{\hat{a}}(w) , \mathcal{P}_{b}(d) \mathcal{P}_{\hat{b}}(w) \} \\
&\geq  \min \{ \mathcal{P}_{a}(d^*) \mathcal{P}_{\hat{a}}(w^*), \mathcal{P}_{b}(d^*) \mathcal{P}_{\hat{b}}(w^*)\},
\end{align*}
where $d^*$ is defined by $d_i^* \triangleq a_i^{t^*}b_i^{1-t^*}$ and $w^*$ is defined by $w_i^* \triangleq \ahat_i^{t^*}\bhat_i^{1-t^*}$, and $(b)$ is due to Stirling's approximation $n! \leq n^{n+\frac{1}{2}} e^{-n +1}$ for any $n\geq 1$.

\section{Proof of Theorem~\ref{theorem-general-cencored}}
\label{proof-theorem-general-cencored}
We aim to recover both $x_{v}$ and $y_{v}$ for node $v$, given a realization of $D$ and a realization of $W$. Our goal is to minimize the error probability by selecting the most likely hypothesis, i.e.,
\begin{align*}
\argmax_{i,j} \Prob\{H_{i,j}| D = d, W =w\}, 
\end{align*}
where 
\begin{align*}
    H_{i,j} : x_v = i, y_v = j .
\end{align*}
The maximum a posteriori (MAP) detector is rewrite as
\begin{align}
\label{MapDetector-general-censored}
\argmax_{i,j} P(d,w| H_{i,j}) P_{i,j} .
\end{align}
Solving~\eqref{MapDetector-general-censored} requires $m_{x}m_y-1$ pairwise comparisons of the hypotheses. From this viewpoint, if 
\begin{align*}
P(d, w| H_{i,j}) P_{i,j} \leq P(d, w| H_{k,l}) P_{k,l}, 
\end{align*}
then a pairwise comparison will choose $H_{k,l}$ over $H_{i,j}$. 
Now assume the correct hypothesis is $H_{i,j}$.
Similar to the proof of Theorems~\ref{theorem-general-side} and~\ref{theorem-general-no side}, it can be shown that the probability of error for recovering the true hypothesis is bounded from above and below by controlling 
\begin{align*}
    \sum_{d,w} \min \{ \mathcal{P}_{\lambdavec{i}{j}}(d) \mathcal{P}_{\lambdavechat{i}{j}}(w) P_{i,j}, \mathcal{P}_{\lambda_{k,l}}(d) \mathcal{P}_{\lambdavechat{k}{l}}(w) P_{k,l} \} .
\end{align*}
It follows from Lemma~\ref{Lemma-Poisson-two} that
\begin{align}
\label{final-upper-general-censored}
P_{e} \leq&  \sum_{i,k > i} \sum_{j, l > j}\max \{P_{i,j}, P_{k,l} \} e^{-\divergence([\lambdavec{i}{j},  \lambdavechat{i}{j}], [\lambdavec{k}{l}, \lambdavechat{k}{l}] )}  \nonumber \\
=&\sum_{i, k > i} \sum_{j, l > j} n^{-\divergence \big( [\qij{i}{j}, \gij{i}{j}], [\qij{k}{l}, \gij{k}{l}]  \big) + o(1)} ,
\end{align}
and
\begin{align}
\label{final-lower-general-censored}
P_{e} \geq&  \sum_{i,k > i} \sum_{j, l > j} \frac{c' \min\{P_{i,j}, P_{k,l} \}}{(\log n)^{m}} e^{-\divergence([\lambdavec{i}{j},  \lambdavechat{i}{j}], [\lambdavec{k}{l}, \lambdavechat{k}{l}] )} \nonumber \\
=&  \sum_{i,k > i} \sum_{j, l > j} n^{-\divergence \big( [\gij{i}{j}, \hij{i}{j}], [\gij{k}{l}, \hij{k}{l}]  \big) + o(1)}, 
\end{align}
where $c'$ is a constant and $m$ is the number of elements in vector $d$, i.e., the product of alphabet sizes of $x_v$ and $y_v$.
The lower and upper bounds~\eqref{final-upper-general-censored} and~\eqref{final-lower-general-censored} imply that the true hypothesis is recovered correctly if $\divergence \big( [\gij{i}{j}, \hij{i}{j}], [\gij{k}{l}, \hij{k}{l}]  \big) > 1$, for any $(i,j) \neq (k,l)$. This means that under the \twocensoredmodel{} all micro-communities are exactly recovered if and only if
\begin{align*}
\min_{(i,j) \neq (k,l)}
\divergence \big( [\gij{i}{j}, \hij{i}{j}], [\gij{k}{l}, \hij{k}{l}]  \big) > 1 .
\end{align*}
\section{Proof of Theorem~\ref{theorem-general-cencored-side}}
\label{proof-theorem-general-cencored-side}

We aim to recover $x_{v}$ when $y_{v}$ is known.
Given a realization of $D$, a realization of $W$, and $y_{v}$, our goal is to minimize the error probability by selecting the most likely hypothesis, i.e.,
\begin{align*}
\argmax_{i} \Prob\{H_i| D = d, W = w,y_v\}, 
\end{align*}
or equivalently,
\begin{align}
\label{MapDetector-SideInformation-censored}
\argmax_{i} \Dist (d| H_i, y_{v}) \Dist (w| H_i, y_{v}) P_{i, y_{v}} ,
\end{align}
which is the MAP detector.
Solving~\eqref{MapDetector-SideInformation-censored} requires $m_{x}-1$ pairwise comparisons of the hypotheses. 
Similar to the proof of Theorem~\ref{theorem-general-side}, it can be shown that the error probability of finding true hypothesis is bounded from above and below by controlling
\begin{align*}
    \sum_{d,w} \min \{ \mathcal{P}_{\lambdavec{i}{y_v}}(d) \mathcal{P}_{\lambdavechat{i}{y_v}}(w) P_{i,y_v}, \mathcal{P}_{\lambda_{k,y_v}}(d) \mathcal{P}_{\lambdavechat{k}{y_v}}(w) P_{k,y_v} \} .
\end{align*}
It follows from Lemma~\ref{Lemma-Poisson-two} that
\begin{align}
\label{final-upper-side-censored}
P_{e} &\leq  \sum_{i} \sum_{k > i} \max \{P_{i,y_v}, P_{k,y_v} \} e^{-\divergence([\lambdavec{i}{y_v},  \lambdavechat{i}{y_v}], [\lambdavec{k}{y_v}, \lambdavechat{k}{l}] )} \nonumber \\
&=\sum_{i} \sum_{k > i} n^{-\divergence \big( [\gij{i}{y_v}, \hij{i}{y_v}], [\gij{k}{y_v}, \hij{k}{y_v}]  \big) + o(1)} . 
\end{align}
and 
\begin{align}
\label{final-lower-side-censored}
P_{e} \geq&  \sum_{i} \sum_{k > i} \frac{c}{(\log n)^{\frac{m}{2}}} e^{-\divergence([\lambdavec{i}{y_v},  \lambdavechat{i}{y_v}], [\lambdavec{k}{y_v}, \lambdavechat{k}{y_v}] )} \nonumber \\
=&  \sum_{i} \sum_{k > i} n^{-\divergence \big( [\gij{i}{y_v}, \hij{i}{y_v}], [\gij{k}{y_v}, \hij{k}{y_v}]  \big) + o(1)}, 
\end{align}
where $c \triangleq c' \min\{P_{i,y_v}, P_{k,y_v} \}$ is a constant and $m$ is the number of elements in vector $d$. The lower and upper bounds~\eqref{final-upper-side-censored} and~\eqref{final-lower-side-censored} imply that the true hypothesis is recovered correctly if $\divergence \big( [\gij{i}{y_v}, \hij{i}{y_v}], [\gij{k}{y_v}, \hij{k}{y_v}]  \big) > 1$, for a given $y_{v}$ and any $i\neq k$. This means that a known latent variable restricts the number of pairwise comparisons. Then under the \twocensoredmodel{} in which the latent variable $y$ is known, and the latent variable $x$ is unknown, exact recovery is possible for $x$ if and only if
\begin{align*}
\min_j \min_{i\neq k} \divergence \big( [\gij{i}{j}, \hij{i}{j}], [\gij{k}{j}, \hij{k}{j}]  \big) > 1.
\end{align*}

\section{Proof of Theorem~\ref{theorem-general-cencored-no side}}
\label{proof-theorem-general-cencored-no side}

We aim to recover $x_{v}$ when $y_{v}$ is unknown, given a realization of $D$ and a realization of $W$ for node $v$.
For this setting the MAP detector is
\begin{align*}
\argmax_{i} \Prob\{H_i| D = d, W=w \}. 
\end{align*} 
For convenience define
\begin{align*}
I(d,w,i, y_v) &\triangleq  \prod_{l} P \bigg(\sum_{j} \dij{l}{j}, \sum_{j} \wij{l}{j} | H_i,y_v \bigg) , 
\end{align*}
where $\sum_{j} \wij{l}{j}$ and $\sum_{j} \dij{l}{j}$ are independent given $H_i$ and $y_v$. 
Then the MAP detector rewrite as 
\begin{align}
\label{map-detector-censored}
\argmax_{i} \sum_{y_v} I(d,w,i, y_v) P_{i,y_v}. 
\end{align}
Solving~\eqref{map-detector-censored} requires $m_{x}-1$ pairwise comparisons. In these comparisons, if
\begin{align*}
\sum_{y_v} I(d,w,i, y_v) P_{i,y_v} < \sum_{y_v} I(d,w,k, y_v) P_{k,y_v} , 
\end{align*}
then we conclude hypothesis $H_i$ is ruled out, i.e., $x_{v} \neq i$, because another hypothesis $H_k$ has a better metric. 
Notice that using the Poisson approximation and the  additive property of Poisson distribution, $I(d,w,i, y_v)$ can be reorganized as
\begin{align*}
I(d,w,i, y_v) =&  \prod_{l} \mathcal{P}_{\sum_{j} \lambdascalar{i}{y_v}{l}{j}} \bigg(\sum_{j} \dij{l}{j} \bigg) \\
&\times \prod_{l} \mathcal{P}_{\sum_{j} \lambdascalar{i}{y_v}{l}{j}} \bigg(\sum_{j} \wij{l}{j} \bigg).
\end{align*}
Similar to the proof of Theorem~\ref{theorem-general-no side}, it can be shown that the error probability of recovering the true hypothesis is bounded from above and below by controlling
\begin{align*}
\sum_{d,w \in \mathbb{Z}_{+}^{m}} \min \Big\{ I(d,w,i, y_v) P_{i,y_v}, I(d,w,k,y_v) P_{k,y_v} \Big\}.
\end{align*}
Applying Lemma~\ref{Lemma-Poisson-two} yields 
\begin{align}
\label{final-upper-no side-censored}
P_{e} &\leq \sum_{i} \sum_{k > i} \sum_{y_v} n^{-\divergence \big( [\gtildeij{i}{y_v}, \htildeij{i}{y_v}], [\gtildeij{k}{y_v}, \htildeij{k}{y_v}] \big) + o(1)} ,
\end{align}
and 
\begin{align}
\label{final-lower-no side-censored}
P_{e} &\geq \sum_{i} \sum_{k > i} \sum_{y_v} n^{-\divergence \big( [\gtildeij{i}{y_v}, \htildeij{i}{y_v}], [\gtildeij{k}{y_v}, \htildeij{k}{y_v}] \big) + o(1)} . 
\end{align}
The lower and upper bounds~\eqref{final-upper-no side-censored} and~\eqref{final-lower-no side-censored} imply that the true hypothesis is recovered correctly if $\divergence \big( [\gtildeij{i}{y_v}, \htildeij{i}{y_v}], [\gtildeij{k}{y_v}, \htildeij{k}{y_v}] \big) > 1$ for any $i\neq k$ and any $y_v$. Then under \twocensoredmodel{} in which both latent variables $x, y$ are unknown, exact recovery is solvable for $x$ if and only if
\begin{align*}
\min_{j} \min_{i\neq k} 
\divergence \big( [\gtildeij{i}{j}, \htildeij{i}{j}], [\gtildeij{k}{j}, \htildeij{k}{j}] \big) > 1.
\end{align*}
\section{Proof of Theorem~\ref{Theorem-SDP-oublv}}
\label{Proof-Theorem-SDP-oublv}
We begin by stating sufficient conditions for the optimum solution of~\eqref{oublv-sdp-equ2} matching the true labels $x^*$.

\begin{Lemma}
\label{Lemma_sufficient_oublv}
For the optimization problem~\eqref{oublv-sdp-equ2}, consider the Lagrange multipliers 
\begin{equation*}
\lambda^{*}, \quad D^{*}=\mathrm{diag}(d_{i}^{*}), \quad
S^{*}.
\end{equation*}
If we have 
\begin{align*}
&S^{*} = D^{*}+\lambda^{*}\mathbf{J}-\Tone B -\Ttwo A ,\\
&S^{*} \succeq 0, \\
&\lambda_{2}(S^{*}) >  0 ,\\
&S^{*}x^{*} =0 ,
\end{align*}
then $(\lambda^{*}, D^*, S^*)$ is the dual optimal solution and $\Zsdp=x^{*}x^{*T}$ is the unique primal optimal solution of~\eqref{oublv-sdp-equ2}.
\end{Lemma}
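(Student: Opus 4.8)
The plan is to treat this as a standard dual-certificate (KKT) argument for the semidefinite program~\eqref{oublv-sdp-equ2}, exploiting the fact that the four stated conditions are precisely dual feasibility together with complementary slackness at $Z^{*}=x^{*}x^{*T}$. First I would record the dual of~\eqref{oublv-sdp-equ2}: attaching multipliers $d_{i}^{*}$ to the constraints $Z_{ii}=1$ and $\lambda^{*}$ to $\langle Z,\mathbf{J}\rangle=0$, the dual slack is exactly $S^{*}=D^{*}+\lambda^{*}\mathbf{J}-\Tone B-\Ttwo A$, so the hypothesis $S^{*}\succeq 0$ asserts dual feasibility and $\sum_{i}d_{i}^{*}$ is the dual objective value.

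Next I would establish weak duality against an arbitrary primal feasible $Z$. Writing $\Tone B+\Ttwo A=D^{*}+\lambda^{*}\mathbf{J}-S^{*}$ and taking the inner product with $Z$ gives
\begin{align*}
\langle \Tone B+\Ttwo A, Z\rangle = \sum_{i} d_{i}^{*} Z_{ii} + \lambda^{*}\langle \mathbf{J}, Z\rangle - \langle S^{*}, Z\rangle = \sum_{i} d_{i}^{*} - \langle S^{*}, Z\rangle,
\end{align*}
where I used $Z_{ii}=1$ and $\langle \mathbf{J},Z\rangle=0$. Since $S^{*}\succeq 0$ and $Z\succeq 0$ force $\langle S^{*},Z\rangle\ge 0$, every feasible $Z$ obeys $\langle \Tone B+\Ttwo A, Z\rangle \le \sum_{i} d_{i}^{*}$.

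I would then verify that $Z^{*}=x^{*}x^{*T}$ attains this bound. Feasibility is immediate: $Z^{*}\succeq 0$, $(Z^{*})_{ii}=(x_{i}^{*})^{2}=1$ because $x^{*}\in\{\pm1\}^{n}$, and $\langle \mathbf{J}, Z^{*}\rangle=(x^{*T}\mathbf{1})^{2}=0$ because $x^{*T}\mathbf{1}=0$. The condition $S^{*}x^{*}=0$ yields $\langle S^{*}, Z^{*}\rangle = x^{*T}S^{*}x^{*}=0$, so the displayed identity gives $\langle \Tone B+\Ttwo A, Z^{*}\rangle=\sum_{i}d_{i}^{*}$, matching the dual value. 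By weak duality $Z^{*}$ is then primal optimal and $(\lambda^{*},D^{*},S^{*})$ is dual optimal.

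The step I expect to carry the real weight is uniqueness, which is where the spectral hypothesis $\lambda_{2}(S^{*})>0$ enters. For any other optimal $Z$, the chain above collapses to $\langle S^{*}, Z\rangle=0$; since both matrices are positive semidefinite this forces $S^{*}Z=0$, i.e.\ the range of $Z$ lies in $\ker S^{*}$. The hypotheses $S^{*}\succeq 0$, $S^{*}x^{*}=0$, and $\lambda_{2}(S^{*})>0$ together say that $0$ is a simple eigenvalue of $S^{*}$ with eigenvector $x^{*}$, so $\ker S^{*}=\mathrm{span}(x^{*})$ is one-dimensional. Hence $Z=c\,x^{*}x^{*T}$ for some scalar $c$, and the diagonal constraint $Z_{ii}=1$ pins $c=1$, giving $Z=Z^{*}$. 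The only subtlety I would handle carefully is the passage from $\langle S^{*},Z\rangle=0$ to range containment in the kernel, which I would justify by diagonalizing $S^{*}$ and observing that a positive-semidefinite $Z$ can carry no mass along any strictly positive eigendirection of $S^{*}$.
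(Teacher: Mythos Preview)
Your proposal is correct and follows essentially the same dual-certificate argument as the paper: weak duality from $S^{*}\succeq 0$, complementary slackness at $Z^{*}=x^{*}x^{*T}$ via $S^{*}x^{*}=0$, and uniqueness from the rank-one kernel forced by $\lambda_{2}(S^{*})>0$. If anything, your treatment is slightly more explicit than the paper's (you spell out the passage from $\langle S^{*},Z\rangle=0$ to $\mathrm{range}(Z)\subset\ker S^{*}$, whereas the paper simply asserts that $\tilde{Z}$ must be a multiple of $Z^{*}$), but the logic is identical.
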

\begin{proof}
Let $D = \text{diag}(d_{i})$, $\lambda \in \mathbb{R}$, and $S \succeq 0$ denote the Lagrangian of \eqref{oublv-sdp-equ2}. For any $Z$ that satisfies the constraints in \eqref{oublv-sdp-equ2}, we have 
\begin{align*}
\Tone  \langle B, Z \rangle +\Ttwo \langle A, Z \rangle \overset{(a)}{\leq}& L(Z,S^{*},D^{*},\lambda^{*}) =\langle D^{*},\mathbf{I}\rangle  \\
\overset{(b)}{=}&\langle S^{*} -\lambda^{*}\mathbf{J}+\Tone B +\Ttwo A, Z^{*} \rangle \\
\overset{(c)}{=}&\Tone  \langle B, Z^{*} \rangle +\Ttwo \langle A, Z^{*} \rangle, 
\end{align*}
where $(a)$ holds because $\langle S^{*},Z \rangle \geq 0$, $(b)$ holds because $Z_{ii}=1$ for all $i \in [n]$ and $S^{*} = D^{*}+\lambda^{*}\mathbf{J}-\Tone B -\Ttwo A$, and $(c)$ holds because $S^{*}x^{*} = \mathbf{0}$ and $x^{*T}\mathbf{1} = 0$. Therefore, $Z^{*}=x^{*}x^{*T}$ is an optimal solution of \eqref{oublv-sdp-equ2}. Now, assume $\tilde{Z}$ is another optimal solution. Then
\begin{align*}
\langle S^{*},  \tilde{Z} \rangle =&\langle D^{*}+\lambda^{*}\mathbf{J}-\Tone B -\Ttwo A ,\tilde{Z} \rangle \\
\overset{(a)}{=}& \langle D^{*}+\lambda^{*}\mathbf{J}-\Tone B -\Ttwo A ,Z^{*} \rangle =\langle S^{*}, Z^{*} \rangle=0 ,
\end{align*}
where $(a)$ holds because $\langle \Tone B +\Ttwo A,Z^{*} \rangle=\langle \Tone B +\Ttwo A,\tilde{Z} \rangle$, $Z_{ii}^{*}=\tilde{Z}_{ii}=1$ for all $i\in [n]$, and $\langle \mathbf{J}, Z^{*} \rangle = \langle \mathbf{J}, \tilde{Z} \rangle = 0$. Since $\tilde{Z} \succeq 0$, and $S^{*}\succeq 0$ while its second smallest eigenvalue $\lambda_{2}(S^{*})$ is positive (since $S^{*} \hat{x}^{*} =\mathbf{0}$), $\tilde{Z}$ must be a multiple of $Z^{*}$. Also, since $\tilde{Z}_{ii}=Z_{ii}^{*}=1$ for all $i \in [n]$, we have $\tilde{Z}=Z^{*}$.
\end{proof}
We now show that $S^{*} = D^{*}+\lambda^{*}\mathbf{J}-\Tone B -\Ttwo A$ satisfies other conditions in Lemma~\ref{Lemma_sufficient_oublv} with probability $1-o(1)$. 
Let 
\begin{equation} 
d_{i}^{*} = \Tone  \sum_{j=1}^{n} B_{ij}x_{j}^{*}x_{i}^{*} +\Ttwo  \sum_{j=1}^{n} A_{ij}x_{j}^{*}x_{i}^{*} .
\end{equation}
Then $D^{*}x^{*}=\Tone Bx^{*} +\Ttwo Ax^{*}$ and based on the definition of $S^{*}$ in  Lemma~\ref{Lemma_sufficient_oublv}, $S^{*}$ satisfies the condition $S^{*}x^{*} =0$.
It remains to show that $S^{*}\succeq 0$ and $\lambda_{2}(S^{*})>0$ with probability $1-o(1)$. In other words, we need to show that
\begin{equation}
\label{equ-main-sdp}
\Prob  \bigg\{ \underset{v\perp x^{*},  \| v   \|=1}{\inf} v^{T}S^{*}v>0  \bigg \}\geq 1-o(1) ,
\end{equation}
where $v$ is a $n \times 1$ vector. 
Then for any $v$ such that $v^{T}x^{*}=0$ and $  \| v   \|=1$,
\begin{align*}
v^{T}S^{*}v =&v^{T}D^{*}v +\lambda^{*}v^{T}\mathbf{J}v -\Tone v^{T} (B-\mathbb{E} [B ] )v \twocolbreak -\Ttwo  v^{T} (A-\mathbb{E} [A ] )v -\Tone v^{T}\mathbb{E} [B ]v -\Ttwo  v^{T}\mathbb{E} [A ]v\\
\geq& \min_{i}d_{i}^{*} +\lambda^{*}v^{T}\mathbf{J}v -\Tone   \| B-\mathbb{E} [B ]   \| \twocolbreak-\Ttwo    \| A-\mathbb{E} [A ]   \| -\Tone v^{T}\mathbb{E} [B ]v -\Ttwo  v^{T}\mathbb{E} [A ]v .
\end{align*}
Notice that
\begin{align*}
\Tone v^{T}\mathbb{E} [B ]v  +\Ttwo  v^{T}\mathbb{E} [A ]v =& \frac{1}{4}  [\Tone c_{1} +\Ttwo c_{2}  ]v^{T}Wv \\
&+\frac{1}{4}  [\Tone c_{3} +\Ttwo c_{4}  ]v^{T}(Z*W)v \\
&+\frac{1}{4}  [\Tone c_{1} +\Ttwo c_{2}  ]v^{T}\mathbf{J}v \\
&- (\Tone +\Ttwo  ) \qzero{} \frac{\log n}{n} ,
\end{align*}
where 
\begin{align*}
c_{1} &\triangleq  \frac{\log n}{n}( \qzero{} -\qone{} +\qtwo{} -\qthree{}), \\ 
c_{2} &\triangleq  \frac{\log n}{n} (\qzero{} +\qone{} +\qtwo{} +\qthree{}) , \\
c_{3} &\triangleq \frac{\log n}{n} (\qzero{} -\qone{} -\qtwo{} +\qthree{}) , \\
c_{4} &\triangleq \frac{\log n}{n} (\qzero{} +\qone{} -\qtwo{} +\qthree{}) .
\end{align*}

\begin{Lemma}
\label{Lemma_spectral_norm_1}
For any $c > 0$, there exists $c' , c''>0$ such that for any $n \geq 1$, $  \| A-\mathbb E[A]   \| \leq c''\sqrt{\log n}$ and $  \| B-\mathbb E[B]   \| \leq c'\sqrt{\log n}$ with probability at least $1-n^{-c}$. 
\end{Lemma}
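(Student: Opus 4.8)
The plan is to treat $A-\mathbb{E}[A]$ and $B-\mathbb{E}[B]$ as symmetric random matrices with independent, zero-mean, bounded entries, and to invoke a sharp spectral-norm estimate for such ``inhomogeneous Wigner'' matrices in the logarithmic-degree regime. First I would record the entrywise structure: for $i<j$ the entries $A_{ij}$ are independent $\Bern$ variables with mean of order $\frac{\log n}{n}$, so the centered entries $(A-\mathbb{E}[A])_{ij}$ are independent above the diagonal, have mean zero, satisfy $|(A-\mathbb{E}[A])_{ij}|\le 1$, and have variance at most $q_{\max}\frac{\log n}{n}$, where $q_{\max}\triangleq\max\{\qzero,\qtwo,\qone,\qthree\}$. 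The two quantities that govern the bound are then the maximal row variance $\sigma^{2}\triangleq\max_{i}\sum_{j}\mathrm{Var}\big((A-\mathbb{E}[A])_{ij}\big)\le q_{\max}\log n$ and the entrywise bound $R=1$.

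Second, I would apply a spectral-norm estimate of the form $\mathbb{E}\,\|A-\mathbb{E}[A]\|\le C_{1}\sigma+C_{2}R\sqrt{\log n}$, which holds for symmetric matrices with independent bounded entries (for instance the inhomogeneous-Wigner bound of Bandeira--van Handel, or equivalently the combinatorial discrepancy argument of Feige--Ofek specialized to the $\Theta(\log n)$-degree regime). Since both $\sigma$ and $R\sqrt{\log n}$ are of order $\sqrt{\log n}$, this yields $\mathbb{E}\,\|A-\mathbb{E}[A]\|=O(\sqrt{\log n})$.

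Third, I would upgrade the expectation bound to a high-probability bound. Viewed as a function of the independent coordinates $(A_{ij})_{i<j}\in[0,1]^{\binom{n}{2}}$, the map $(A_{ij})\mapsto\|A-\mathbb{E}[A]\|$ is convex and Lipschitz with respect to the Euclidean norm, with constant $\sqrt{2}$ since changing one coordinate perturbs the symmetric matrix in two positions. Talagrand's concentration inequality for convex Lipschitz functions of bounded independent variables then gives sub-Gaussian deviations at the $O(1)$ scale, so that $\|A-\mathbb{E}[A]\|\le c''\sqrt{\log n}$ with probability at least $1-n^{-c}$ once $c''$ is chosen large enough relative to $c$ (and relative to the constant in the mean bound). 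Note that a bounded-differences/McDiarmid argument is \emph{too weak} here, as the $\binom{n}{2}$ coordinates force a deviation scale of order $n$; Talagrand's convex concentration is essential because $\|\cdot\|$ is Lipschitz in the Euclidean rather than the Hamming metric.

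Finally, for $B=W*A$ with $W_{ij}=y_{i}y_{j}\in\{\pm1\}$, observe that $(B-\mathbb{E}[B])_{ij}=y_{i}y_{j}\,(A-\mathbb{E}[A])_{ij}$, so the centered entries of $B$ are again independent above the diagonal, mean zero, bounded by $1$, and with exactly the same variances; the three steps above apply verbatim and yield $\|B-\mathbb{E}[B]\|\le c'\sqrt{\log n}$ with probability at least $1-n^{-c}$. The main obstacle I anticipate is securing the sharp $\sqrt{\log n}$ order rather than the $\log n$ order that a direct matrix Bernstein inequality or a Chernoff-plus-union-bound argument would produce; obtaining the right order requires exploiting that in the logarithmic-degree regime no vertex has an abnormally large degree, which is precisely the information encoded by the refined Wigner/Feige--Ofek estimate.
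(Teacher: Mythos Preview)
Your proposal is correct. The paper's own proof is nothing more than a pointer to Theorem~9 of Hajek--Wu--Xu (2016, extensions) and Theorem~5 of Hajek--Wu--Xu (2016), which establish the identical spectral bound for sparse Bernoulli matrices in the $\Theta(\log n/n)$ regime; your plan therefore supplies the details the paper omits. The route you describe---expectation via a Bandeira--van Handel (or Feige--Ofek) inhomogeneous-Wigner estimate, then concentration via Talagrand's convex-Lipschitz inequality---is a standard modular alternative to the more direct high-probability arguments in those references, and it correctly isolates why a crude matrix-Bernstein bound would only yield the $\log n$ scale. One simplification you missed for $B$: since $W=yy^{T}$ with $y\in\{\pm1\}^{n}$, one has $B-\mathbb{E}[B]=\mathrm{diag}(y)\,(A-\mathbb{E}[A])\,\mathrm{diag}(y)$ with $\mathrm{diag}(y)$ orthogonal, so $\|B-\mathbb{E}[B]\|=\|A-\mathbb{E}[A]\|$ identically and no separate argument for $B$ is required.
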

\begin{proof}
The proof is similar to the proofs \cite[Thoerem 9]{hajek2016achievingExtensions} and \cite[Thoerem 5]{hajek2016achieving}. 
\end{proof}

\begin{Lemma}
\label{Lemma-vWv}
With probability at least $1-n^{-\frac{1}{2}}$,
\begin{align*}
&v^{T}(Z*W)v \leq \sqrt{\log n} , \\
&v^{T}Wv \leq \sqrt{\log n}+ (2\rhoy-1 )^{2} v^{T}\mathbf{J}v + 2|2\rhoy-1|\sqrt{n \log n} . 
\end{align*}
\end{Lemma}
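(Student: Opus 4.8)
The plan is to exploit that both matrices here are rank one. Since $W = yy^{T}$ and the element-wise product of two outer products is itself an outer product, $Z * W = (x^{*} * y)(x^{*} * y)^{T}$; writing $u \triangleq x^{*} * y$, both quadratic forms become perfect squares, $v^{T} W v = (y^{T}v)^{2}$ and $v^{T}(Z*W)v = (u^{T}v)^{2}$. This reduces the lemma to controlling the two scalar correlations $y^{T}v$ and $u^{T}v$ for a unit vector $v$ orthogonal to $x^{*}$. It is convenient to center $y$ along the all-ones direction, $y = (2\rhoy-1)\mathbf{1} + \tilde{y}$ with $\tilde{y} \perp \mathbf{1}$, which isolates the structured (rank-one, $\mathbf{J}$-aligned) part of $y$ from its fluctuation $\tilde{y}$.

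For the $Z*W$ bound I would rewrite $u^{T}v = \sum_i x^{*}_i y_i v_i = y^{T}(x^{*}*v)$ and set $w \triangleq x^{*}*v$. Because the entries of $x^{*}$ are $\pm 1$, we have $\|w\| = \|v\| = 1$ and $\mathbf{1}^{T}w = x^{*T}v = 0$, so $w$ is a unit vector orthogonal to $\mathbf{1}$; the $(2\rhoy-1)\mathbf{1}$ part of $y$ then drops out and $v^{T}(Z*W)v = (\tilde{y}^{T}w)^{2}$. Thus it suffices to prove the single sub-claim that $(\tilde{y}^{T}w)^{2} \le \sqrt{\log n}$ for every unit $w \perp \mathbf{1}$. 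The $W$ bound follows from the same sub-claim: writing $y^{T}v = (2\rhoy-1)\mathbf{1}^{T}v + \tilde{y}^{T}v$ and squaring produces exactly the $(2\rhoy-1)^{2}v^{T}\mathbf{J}v$ term, a cross term $2(2\rhoy-1)(\mathbf{1}^{T}v)(\tilde{y}^{T}v)$, and a residual $(\tilde{y}^{T}v)^{2}$. Since $\tilde y\perp\mathbf 1$, one has $\tilde y^T v=\tilde y^T v_\perp$ with $v_\perp$ the $\mathbf 1^\perp$-component of $v$, so the residual is again of the form $(\tilde y^T w)^2\le\sqrt{\log n}$, while $|\mathbf{1}^{T}v| \le \sqrt{n}$ (Cauchy--Schwarz) together with $|\tilde y^T v|\le(\log n)^{1/4}$ bounds the cross term by $2|2\rhoy-1|\sqrt{n\log n}$, matching the statement.

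The crux, and the only genuinely probabilistic step, is the sub-claim on the centered correlation $\tilde{y}^{T}w$ for unit $w \perp \mathbf{1}$. The difficulty is that the extremal $w$ is not fixed in advance but is tied to the realization of $y$ (and, through $v = x^{*}*w$, to the spectrum of $S^{*}$), so the crude worst-case bound $|\tilde{y}^{T}w| \le \|\tilde{y}\| = O(\sqrt{n})$ is far too weak to reach the $(\log n)^{1/4}$ scale. I would instead hold the relevant test direction independent of the fluctuation it is paired against and apply a Bernstein/Hoeffding tail bound to the sum $\sum_i \tilde{y}_i w_i$ of bounded independent contributions, then pass to all admissible $w$ by a covering-net argument over the unit sphere of $\mathbf{1}^{\perp}$; the union over the net is what pins the logarithmic scale and aggregates the failure probability to $n^{-1/2}$. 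This is precisely the delicate part of the argument, and the resulting $\sqrt{\log n}$ scale is consistent with the mean-removed spectral-norm control already established in Lemma~\ref{Lemma_spectral_norm_1}, at which all the fluctuation terms in the lower bound on $v^{T}S^{*}v$ are handled.
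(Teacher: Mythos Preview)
Your rank-one reduction $v^{T}Wv=(y^{T}v)^{2}$, $v^{T}(Z*W)v=(y^{T}(x^{*}*v))^{2}$ and the centering $y=(2\rho-1)\mathbf{1}+\tilde{y}$ are exactly what the paper does. The divergence is in the probabilistic step.

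The paper treats $v$ as a \emph{single fixed} unit vector, independent of the randomness in $y$, and applies Hoeffding directly: since $v_{i}y_{i}\in[-|v_{i}|,|v_{i}|]$ and $\sum_{i}v_{i}^{2}=1$, one gets $\Prob\big(v^{T}y-\mathbb{E}[v^{T}y]\geq\sqrt{\log n}\big)\leq n^{-1/2}$. Squaring together with $\mathbb{E}[v^{T}y]=(2\rho-1)v^{T}\mathbf{1}$ and $|v^{T}\mathbf{1}|\leq\sqrt{n}$ gives the second display; the first follows identically with $x^{*}*v$ in place of $v$, using $\mathbb{E}[\sum_{i}x_{i}^{*}y_{i}v_{i}]=0$. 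No uniformization over $v$ is attempted, and the lemma as stated and proved is a pointwise (fixed-$v$) bound.

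Your proposed route---Hoeffding for each $w$ followed by a covering net over the unit sphere of $\mathbf{1}^{\perp}$---cannot reach the $(\log n)^{1/4}$ scale. An $\varepsilon$-net of that sphere in $\mathbb{R}^{n}$ has cardinality $e^{\Theta(n)}$, so the union bound forces the Hoeffding threshold to be $\Theta(\sqrt{n})$, not $(\log n)^{1/4}$. More directly, your sub-claim ``$(\tilde{y}^{T}w)^{2}\leq\sqrt{\log n}$ for every unit $w\perp\mathbf{1}$'' is \emph{deterministically false}: taking $w=\tilde{y}/\|\tilde{y}\|$ gives $(\tilde{y}^{T}w)^{2}=\|\tilde{y}\|^{2}=4n\rho(1-\rho)$, which is $\Theta(n)$ whenever $\rho\in(0,1)$. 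So the covering-net step fails outright; the lemma must be read and proved as a fixed-$v$ statement, as the paper does.
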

\begin{proof}
Since $-   | v_{i}   | \leq v_{i}y_{i}\leq   | v_{i}   |$, by applying the Chernoff bound we have
\begin{align*}
    \Prob  (v^{T}y-\mathbb{E} [v^{T}y] \geq \sqrt{\log n} ) \leq n^{-\frac{1}{2}}.
\end{align*}
Since $\mathbb{E} [v^{T}y ] =  (2\rhoy-1 ) v^{T}\mathbf{1}$ and $| v^{T}\mathbf{1} | \leq \| v \|_{2} \| \mathbf{1}\|_{2} = \sqrt{n}$, with probability converging to one,
\begin{align*}
 (v^{T}y )^{2} \leq& \log n +  (2\rhoy-1 )^{2} v^{T}\mathbf{J}v  +2| v^{T}\mathbf{1} |  | 2\rhoy-1|\sqrt{\log n} \\
\leq& \log n +  (2\rhoy-1 )^{2} v^{T}\mathbf{J}v +2 | 2\rhoy-1|\sqrt{n \log n} . 
\end{align*}
Similarly, since $\mathbb{E} [ \sum_{i} x_{i}y_{i}v_{i} ] = 0$ and $-   | v_{i}   | \leq x_{i}y_{i}v_{i}\leq   | v_{i}   |$, applying the Chernoff bound yields $v^{T}(Z*W)v \leq \sqrt{\log n}$ with probability converging to one.
\end{proof}

\begin{Lemma}
\label{Lemma_main_one}
For $\delta = \frac{\log n}{ \log \log n}$,
\begin{equation*}
    \Prob \Big(\min_{i\in [n]}~d_{i}^{*} \geq \delta  \Big ) \geq 1-n^{1-\eta_{1}(\qbf,\rho)+o(1)} -n^{1-\eta_{1}(\qbf,1-\rho)+o(1)} .
\end{equation*}
\end{Lemma}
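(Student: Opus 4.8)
The plan is to view each diagonal correction $d_i^*$ as a sum of independent weighted Bernoulli variables and to control its lower tail by a Chernoff bound whose optimal exponent coincides with $\eta_1$. Conditioning on the label vectors $x^*$ and $y$, so that all randomness is in the edges $A_{ij}$ (independent across $j$), and using $B = W * A$ with $W = yy^{T}$, I would first rewrite
\[
d_i^* = x_i^* \sum_{j \neq i} A_{ij}\,\bigl(\Tone\, y_i y_j + \Ttwo\bigr)\, x_j^*.
\]
Splitting the neighbors $j$ into four classes according to whether they agree with node $i$ in the community label and in the auxiliary label, each edge contributes a fixed coefficient times $A_{ij}$. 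Using $\Tone+\Ttwo = 2\log(\qzero/\qtwo)$ and $\Ttwo-\Tone = 2\log(\qone/\qthree)$, the per-edge coefficients are $\pm 2\log(\qzero/\qtwo)$ for the two $y$-agreeing classes (edge probabilities $\qzero\frac{\log n}{n}$ and $\qtwo\frac{\log n}{n}$) and $\pm 2\log(\qone/\qthree)$ for the two $y$-disagreeing classes (edge probabilities $\qone\frac{\log n}{n}$ and $\qthree\frac{\log n}{n}$). Conditioned on the labels, $d_i^*$ is thus a sum $\sum_{j\neq i}\xi_{ij}$ of independent scaled Bernoullis.

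Next I would apply the lower-tail Chernoff bound $\Prob(d_i^* \le \delta) \le e^{\theta\delta}\prod_{j\neq i}\mathbb{E}[e^{-\theta \xi_{ij}}]$ for $\theta > 0$, where each factor equals $1 - p_c(1 - e^{-\theta c_c})$ with $p_c$ and $c_c$ the class probability and coefficient. In the sparse regime $p_c = q_c\frac{\log n}{n}$ one has $\log\bigl(1 - p_c(1-e^{-\theta c_c})\bigr) = -p_c(1-e^{-\theta c_c})(1+o(1))$, so summing over the (concentrated) class sizes, approximately $\rho n/2$ and $(1-\rho)n/2$ for a node with $y_i = +1$, gives
\[
\log \mathbb{E}[e^{-\theta d_i^*}] = -\log n\,\Bigl[\tfrac{\rho}{2}\,g(\theta;\qzero,\qtwo) + \tfrac{1-\rho}{2}\,g(\theta;\qone,\qthree)\Bigr](1+o(1)),
\]
where $g(\theta;a,b) = a\bigl(1-(b/a)^{2\theta}\bigr) + b\bigl(1-(a/b)^{2\theta}\bigr)$. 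I would then optimize the exponent: setting $\frac{d}{d\theta}g = 0$ in each bracket yields the common tilt $\theta^{*} = 1/4$, at which $g(\tfrac14;a,b) = (\sqrt a - \sqrt b)^2$. Hence the best exponent equals $\frac{\rho}{2}(\sqrt{\qzero}-\sqrt{\qtwo})^2 + \frac{1-\rho}{2}(\sqrt{\qone}-\sqrt{\qthree})^2 = \eta_1(\qbf,\rho)$, and since $\theta^{*}\delta = \frac14\frac{\log n}{\log\log n} = o(\log n)$ the factor $e^{\theta\delta}$ is absorbed, producing $\Prob(d_i^* \le \delta) \le n^{-\eta_1(\qbf,\rho)+o(1)}$ for nodes with $y_i = +1$. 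For $y_i = -1$ the class sizes interchange the roles of $\rho$ and $1-\rho$, yielding $n^{-\eta_1(\qbf,1-\rho)+o(1)}$.

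Finally I would take a union bound over the at most $\rho n$ nodes with $y_i=+1$ and the at most $(1-\rho)n$ nodes with $y_i=-1$:
\[
\Prob\Bigl(\min_{i\in[n]} d_i^* < \delta\Bigr) \le n\cdot n^{-\eta_1(\qbf,\rho)+o(1)} + n\cdot n^{-\eta_1(\qbf,1-\rho)+o(1)},
\]
which is exactly the complement of the claimed bound. The main obstacle is the exponent computation: one must verify that the two brackets share the \emph{same} optimizing tilt $\theta^{*} = 1/4$, since this is precisely what collapses the two-parameter tilting into the single Hellinger-type quantity $\eta_1$, and one must confirm that both the choice $\delta = \log n/\log\log n$ and the $O(\sqrt n)$ fluctuations of the empirical class counts are genuinely of order $o(\log n)$ on the exponent scale, so that they may be folded into the $o(1)$ in the final exponent.
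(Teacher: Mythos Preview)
Your proposal is correct and follows exactly the approach the paper indicates: the paper's own proof consists of the single sentence ``The proof is achieved by applying the Chernoff bound and taking the union bound,'' and you have supplied the omitted computation, including the key observation that the log-likelihood-ratio weights $\Tone y_iy_j+\Ttwo$ force the optimal tilt to be the \emph{common} value $\theta^{*}=\tfrac14$ across both $y$-agreeing and $y$-disagreeing classes, which is precisely what collapses the Chernoff exponent to the Hellinger-type quantity $\eta_{1}$.
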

\begin{proof}
The proof is achieved by applying the Chernoff bound and taking the union bound. 
\end{proof}

Notice that $\rhoy \leq 0.5$ implies $\eta_{1}(\qbf,\rho) \leq \eta_{1}(\qbf,1-\rho)$ and $\rhoy > 0.5$ implies $\eta_{1}(\qbf,\rho) \geq \eta_{1}(\qbf,1-\rho)$. Then $\min_{i}d_{i}^{*} \geq \frac{\log n}{\log\log n}$ if 
\begin{align}
\label{equ-theorem3-final-conditions}
    \begin{cases}
    \eta_{1}(\qbf,\rho)> 1 & \text{when} \quad \rhoy \leq 0.5 \\
    \eta_{1}(\qbf,1-\rho)> 1 & \text{when} \quad \rhoy > 0.5
    \end{cases} .
\end{align}
Let $\lambda^{*} \geq \frac{1}{4}  [\Tone c_{1} +\Ttwo c_{2}  ]  (2\rhoy-1 )^{2}$. Therefore, applying Lemmas \ref{Lemma_spectral_norm_1}, \ref{Lemma-vWv}, and \ref{Lemma_main_one}, we get that if~\eqref{equ-theorem3-final-conditions} holds, then
\begin{align*}
v^{T}S^{*}v \geq& \frac{\log n}{\log \log n} - ( \Tone c'+\Ttwo c''  ) \sqrt{\log n}  \\
&+ (\Tone +\Ttwo  ) \qzero{} \frac{\log n}{n} > 0 ,
\end{align*}
and the first part of Theorem~\ref{Theorem-SDP-oublv} follows.

To prove the second part, 
since $x^{*}$ has a uniform distribution over $\{x\in \{\pm 1\}^{n}: x^{T}\mathbf{1} = 0 \}$, maximum likelihood estimator minimizes the error probability among all estimators. Then we need to find when the maximum likelihood estimator fails. Let $e(i,\Hset) \triangleq \sum_{j \in \Hset} A_{ij}(\Tone y_{i}y_{j}+\Ttwo )$. 
Define the events
\begin{align*}
    F_{1} &\triangleq  \Big \{ \min_{i \in C_{1}^{*}}   ( e(i,C_{1}^{*})-e(i,C_{2}^{*})   ) \leq -2 \Big \} ,\\
    F_{2} &\triangleq  \Big \{\min_{i \in C_{2}^{*}}   ( e(i,C_{2}^{*})-e(i,C_{1}^{*})   ) \leq -2 \Big \} ,
\end{align*}
where $C_{1}^{*} = \{v\in [n]: x_{v}^{*} = 1\}$ and $C_{2}^{*} = \{v\in [n]: x_{v}^{*} = -1\}$. Then $\Prob  ( \text{ML fails}   ) \geq \Prob  ( F_{1} \cap F_{2}   )$. Thus, it suffices to show that with high probability $\Prob (F_{1} )  \rightarrow 1$ and $\Prob (F_{2} )  \rightarrow 1$. Here we just prove that $\Prob (F_{1} )  \rightarrow 1$, while $\Prob (F_{2} )  \rightarrow 1$ is proved similarly. By symmetry, we can condition on $C_{1}^{*}$ being the first $\frac{n}{2}$ nodes. Let $\Tset$ denote the set of first $  \lfloor \frac{n}{\log^{2} n}   \rfloor$ nodes of $C_{1}^{*}$. Then
\begin{align*}
\min_{i \in C_{1}^{*}}   ( e(i,C_{1}^{*})-e(i,C_{2}^{*})   ) \leq& \min_{i \in \Tset}   ( e(i,C_{1}^{*})-e(i,C_{2}^{*})   ) \\
\leq& \min_{i \in \Tset}   ( e(i,C_{1}^{*} \setminus \Tset)-e(i,C_{2}^{*})   ) \twocolbreak+\max_{i \in \Tset} e(i, \Tset) .
\end{align*}
Define the events
\begin{align*}
    E_{1} &\triangleq \Big \{\max_{i \in \Tset} e(i, \Tset) \leq \delta -2 \Big \}, \\
    E_{2} &\triangleq \Big \{\min_{i \in \Tset}   ( e(i,C_{1}^{*} \setminus \Tset)-e(i,C_{2}^{*})   ) \leq -\delta \Big \} . 
\end{align*}
It suffices to show that $\Prob (E_{1} )  \rightarrow 1$ and $\Prob (E_{2} )  \rightarrow 1$, to have $\Prob (F_{1} )  \rightarrow 1$.
For any $i \in \Tset$, 
\begin{align*}
    e(i,\Tset) = (\Ttwo +\Tone ) X_{1} + (\Ttwo -\Tone ) X_{2}, 
\end{align*}
where $X_{1} \sim \text{Binom}(|\Tset|,\qzero \log n/n)$ and $X_{2} \sim \text{Binom}(|\Tset|,\qone\log n/n)$. 

\begin{Lemma}\cite[Lemma 5]{esmaeili2019community}
\label{BCBM-P-Lemma 5}
When $S \sim \Bin(n,p)$, for any $r\geq 1$, 
\begin{align*}
    \Prob( S \geq rnp  ) \leq  \Big(\frac{e}{r} \Big)^{rnp} e^{-np} .
\end{align*}
\end{Lemma}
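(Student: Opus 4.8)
The plan is to apply the standard Chernoff bounding technique to control the upper tail of the binomial. First I would fix an arbitrary $\theta > 0$ and apply Markov's inequality to the exponentiated variable $e^{\theta S}$: since the event $\{S \geq rnp\}$ coincides with $\{e^{\theta S} \geq e^{\theta rnp}\}$, this yields
\[
\Prob(S \geq rnp) \leq e^{-\theta rnp}\,\mathbb{E}[e^{\theta S}].
\]

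Next I would evaluate the moment generating function of $S \sim \Bin(n,p)$. Because $S$ is a sum of $n$ independent Bernoulli variables, the generating function factorizes as $\mathbb{E}[e^{\theta S}] = (1 - p + p e^{\theta})^{n}$, and the elementary inequality $1 + u \leq e^{u}$ applied with $u = p(e^{\theta} - 1)$ gives $\mathbb{E}[e^{\theta S}] \leq e^{np(e^{\theta} - 1)}$. Combining this with the Markov step leaves
\[
\Prob(S \geq rnp) \leq \exp\!\big(np(e^{\theta} - 1) - \theta rnp\big),
\]
so the remaining task is to minimize the exponent over $\theta > 0$.

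The final step is this one-dimensional optimization. Differentiating $g(\theta) = np(e^{\theta} - 1) - \theta rnp$ and solving $g'(\theta) = np\,e^{\theta} - rnp = 0$ produces the candidate minimizer $\theta^{*} = \ln r$. Here the hypothesis $r \geq 1$ plays its only essential role: it guarantees that $\theta^{*} = \ln r \geq 0$ lies in the admissible range, so no separate boundary case is needed. Substituting $\theta^{*}$ gives the exponent $np(r - 1 - r\ln r)$, and regrouping the factors $e^{-np}$, $e^{npr}$, and $r^{-npr}$ as $e^{-np}(e/r)^{rnp}$ recovers the claimed inequality. I do not anticipate any genuine obstacle: the computation is entirely standard, and the only point that truly invokes the assumption $r \geq 1$ is the admissibility of the optimal $\theta^{*}$.
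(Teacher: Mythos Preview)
Your argument is correct: this is the standard Chernoff method, and every step (Markov on $e^{\theta S}$, the MGF bound via $1+u\le e^{u}$, the optimization at $\theta^{*}=\ln r$, and the repackaging of the exponent) goes through exactly as you describe. The paper does not supply its own proof of this lemma; it simply quotes the result from \cite[Lemma~5]{esmaeili2019community}, so there is no in-paper argument to compare against, but your Chernoff derivation is precisely the canonical proof of this inequality.
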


From Lemma~\ref{BCBM-P-Lemma 5},
\begin{align*}
\Prob  \bigg( X_{1} \geq \frac{\delta -2}{2(\Tone +\Ttwo )}  \bigg) &\leq  \bigg( \frac{(\delta-2)\log n}{4(\Tone +\Ttwo )e\qzero}  \bigg)^{\frac{2-\delta }{2(\Tone +\Ttwo )}} \twocolbreak \leq  n^{-2+o(1)} , \\
\Prob  \bigg( X_{2} \geq \frac{\delta -2}{2  | \Ttwo -\Tone    | }  \bigg) &\leq  \bigg( \frac{(\delta-2)\log n}{4  | \Ttwo -\Tone    |e\qone}  \bigg)^{\frac{2-\delta }{2  | \Ttwo -\Tone    | }} \twocolbreak \leq  n^{-2+o(1)} .
\end{align*}
Since $|\Ttwo -\Tone|  > 0$ and $\Tone +\Ttwo  > 0$, 
\begin{align*}
\Prob & ( e(i,\Tset) \geq \delta-2  ) \\
&\leq  \Prob  ( (\Tone +\Ttwo ) X_{1} +  | \Ttwo -\Tone  | X_{2} \geq \delta-2  ) \leq n^{-2+o(1)} .
\end{align*}
Using the union bound yields $\Prob (E_{1} ) \geq 1-n^{-1+o(1)}$. Therefore, $\Prob (E_{1} )  \rightarrow 1$ with high probability. 

\begin{Lemma}~\cite[Lemma 15]{saad2018community}
\label{BASBM-N-Lemma 5}
Let $\{S_1,\ldots,S_m\}$ be a sequence of i.i.d. random variables, where $m-n =o(n)$. Then for any $\mu \in \mathbb{R}$ and $\nu \geq 0$ we have
\begin{align*}
\Prob \Bigg( \sum_{i=1}^{m} S_{i} \geq \mu-\nu  \Bigg) \geq \min_{t >0}~ e^{ -t\mu-  | t   |\nu } M(t)  \bigg( 1-\frac{\sigma_{\hat{Z}}^{2}}{\nu^{2}}  \bigg) ,
\end{align*}
where $M(t)$ is the moment generating function of $Z=\sum_{i=1}^{m} S_{i}$ and $\hat{Z}$ is a random variable distributed according to $\frac{e^{tz}\Prob(z)}{E_{Z} [e^{tz} ]}$ with variance $\sigma_{\hat{Z}}^{2}$.
\end{Lemma}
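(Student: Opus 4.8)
The plan is to establish this bound by an \emph{exponential change of measure} (tilting), which is the standard route to a tail lower bound that matches the Chernoff exponent. Write $Z = \sum_{i=1}^{m} S_i$, fix $t>0$, and let $\hat{Z}$ be the tilted variable with law $\frac{e^{tz}\Prob(z)}{M(t)}$, so that the original law is recovered through $\Prob(z) = M(t)\,e^{-tz}\,\hat{\Prob}(z)$. Since we only need a lower bound, I would first shrink the half-line to a symmetric slab around $\mu$: because $\nu\geq 0$,
\begin{align*}
\Prob\big(Z \geq \mu - \nu\big) \geq \Prob\big(\mu - \nu \leq Z \leq \mu + \nu\big).
\end{align*}
The point of passing to a bounded slab rather than keeping the half-line $\{Z\geq\mu-\nu\}$ is that on a bounded interval the factor $e^{-tZ}$ arising from the tilt can be controlled from below.

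Substituting the change-of-measure identity turns the right-hand side into $M(t)\,\hat{E}\big[e^{-tZ}\,\mathbbm{1}\{|Z-\mu|\leq\nu\}\big]$. For $t>0$ and $z$ in the slab $[\mu-\nu,\mu+\nu]$ one has $e^{-tz}\geq e^{-t(\mu+\nu)} = e^{-t\mu-|t|\nu}$, so the exponential pulls out of the expectation and I am left with
\begin{align*}
\Prob\big(Z \geq \mu - \nu\big) \geq e^{-t\mu-|t|\nu}\,M(t)\,\hat{\Prob}\big(|Z-\mu|\leq\nu\big).
\end{align*}
This already produces the prefactor $e^{-t\mu-|t|\nu}M(t)$ of the statement, and what remains is to show the tilted slab probability is at least $1-\sigma_{\hat Z}^2/\nu^2$.

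The key remaining step is to calibrate the tilt so that the slab is centered at the mean of $\hat{Z}$. Concretely I would choose $t>0$ solving $\hat{E}[Z] = (\log M)'(t) = \mu$, which exists in the relevant regime $\mu > E[Z]$ and is precisely the point where the Chernoff exponent is stationary. For this $t$, the event $\{|Z-\mu|\leq\nu\}$ is exactly the event that $\hat{Z}$ stays within $\nu$ of \emph{its own} mean, so Chebyshev's inequality gives $\hat{\Prob}(|Z-\mu|\leq\nu) \geq 1-\sigma_{\hat Z}^2/\nu^2$. Inserting this yields the asserted bound at this one admissible value of $t$; since the value of the right-hand side at this $t$ is at least $\min_{t>0}$ of the same expression, the stated inequality with $\min_{t>0}$ follows a fortiori.

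The main obstacle is the centering requirement. For a general, uncalibrated $t$ the second-moment bound produces $\hat{E}[(Z-\mu)^2] = \sigma_{\hat Z}^2 + (\hat{E}[Z]-\mu)^2$ in place of the clean variance term, so the bound $g(t)\triangleq e^{-t\mu-|t|\nu}M(t)(1-\sigma_{\hat Z}^2/\nu^2)$ is attained only at the $t$ for which $\hat{E}[Z]=\mu$; hence the tilt cannot be left free but must be solved for $\mu$, which in turn needs $\mu$ to lie in the interior of the attainable range of tilted means. The hypothesis $m-n=o(n)$ enters here at the level of the application, keeping the summand count and thus the admissible mean range and the size of $\sigma_{\hat Z}^2/\nu^2$ under control; the core inequality itself is purely distributional. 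A secondary but essential point, already noted, is that the slab (not the half-line) must be used, since otherwise $e^{-tZ}$ admits no lower bound, and discarding the upper tail $\{Z>\mu+\nu\}$ is harmless for a lower bound.
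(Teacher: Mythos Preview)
The paper does not supply its own proof for this lemma; it is simply quoted from \cite[Lemma~15]{saad2018community}. Your tilting argument is the standard route to such a Cram\'er--Bahadur--Rao lower bound and is correct: restricting the half-line to the slab $[\mu-\nu,\mu+\nu]$, changing measure, lower-bounding $e^{-tZ}$ on the slab by $e^{-t\mu-|t|\nu}$, and then applying Chebyshev under the tilted law after calibrating $t$ so that $(\log M)'(t)=\mu$ all go through as you describe. You also correctly note that the bound need only be established at this single saddle-point value of $t$, since the statement carries $\min_{t>0}$ on the right-hand side, and that the hypothesis $m-n=o(n)$ plays no role in the inequality itself but only in the downstream application.
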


\begin{Lemma}
\label{Lemma_converse_prob_1}
Let $e(i,\Hset) \triangleq \sum_{j \in \Hset} A_{ij}(\Tone y_{i}y_{j}+\Ttwo )$. 
Define 
\begin{align*}
    E_{2}' \triangleq \big \{ e(i,C_{1}^{*} \setminus \Tset)-e(i,C_{2}^{*}) \leq -\delta \big\}. 
\end{align*}
Then
\begin{align*}
    \Prob  ( E_{2}'   ) \geq n^{-\eta_{1}(\qbf,\rho)+o(1)} + n^{-\eta_{1}(\qbf,1-\rho)+o(1)}. 
\end{align*}
\end{Lemma}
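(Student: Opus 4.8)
The plan is to recognize $e(i,C_{1}^{*} \setminus \Tset)-e(i,C_{2}^{*})$ as a sum of independent weighted Bernoulli variables and to lower-bound its lower tail by a change-of-measure (reverse Chernoff) argument, namely Lemma~\ref{BASBM-N-Lemma 5}. First I would condition on the auxiliary label $y_{i}$. Fixing $y_{i}=1$, each node $j\neq i$ contributes an independent term $\pm A_{ij}(\Tone y_{i}y_{j}+\Ttwo)$, with a $+$ sign when $j\in C_{1}^{*}\setminus\Tset$ and a $-$ sign when $j\in C_{2}^{*}$, and with edge parameter fixed by the micro-community of $j$. Using $\Tone+\Ttwo=2\log(\qzero/\qtwo)$ and $\Ttwo-\Tone=2\log(\qone/\qthree)$, and the fact that $x$ and $y$ are independent (so the fraction of same-$y$ nodes inside each $x$-community is $\rho$), the summands split into four independent groups: inside $C_{1}^{*}$ the $\approx\rho n/2$ same-$y$ nodes carry weight $+(\Tone+\Ttwo)$ and parameter $\qzero$, the $\approx(1-\rho)n/2$ different-$y$ nodes carry weight $+(\Ttwo-\Tone)$ and parameter $\qone$; inside $C_{2}^{*}$ the corresponding groups carry the negated weights with parameters $\qtwo$ and $\qthree$. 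Removing $\Tset$, of size $\lfloor n/\log^{2}n\rfloor$, perturbs the group sizes only by a negligible factor.

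Let $\Delta$ denote this difference. In the sparse regime its cumulant generating function $\psi(s)=\log\mathbb{E}[e^{s\Delta}]$ equals, up to $o(\log n)$,
\begin{align*}
\psi(s)=\tfrac{\log n}{2}\Big[&\rho\qzero\big((\qzero/\qtwo)^{2s}-1\big)+(1-\rho)\qone\big((\qone/\qthree)^{2s}-1\big)\\
&+\rho\qtwo\big((\qzero/\qtwo)^{-2s}-1\big)+(1-\rho)\qthree\big((\qone/\qthree)^{-2s}-1\big)\Big].
\end{align*}
After the substitution $u=(\qzero/\qtwo)^{2s}$ the first and third terms combine into $\rho(\qzero u+\qtwo/u-\qzero-\qtwo)$, minimized at $u=\sqrt{\qtwo/\qzero}$ with value $-\rho(\sqrt{\qzero}-\sqrt{\qtwo})^{2}$; the analogous substitution on the second and fourth terms is minimized at the \emph{same} tilt. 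Both groups are thus minimized simultaneously at $s^{*}=-\tfrac14$, giving $\min_{s}\psi(s)=\psi(-\tfrac14)=-\eta_{1}(\qbf,\rho)\log n$. Applying Lemma~\ref{BASBM-N-Lemma 5} to $-\Delta$ (so that the relevant tilt is $t^{*}=\tfrac14>0$), within each of the four i.i.d.\ groups whose contributions multiply, then produces the exponent $-\eta_{1}(\qbf,\rho)\log n$.

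To extract a clean exponent from Lemma~\ref{BASBM-N-Lemma 5} I would choose the slack $\nu$ with $\sqrt{\log n}\ll\nu\ll\log n$ (for instance $\nu=\delta=\log n/\log\log n$): the tilted variance $\sigma_{\hat Z}^{2}$ is of order $\log n$, so $\sigma_{\hat Z}^{2}/\nu^{2}\to0$ and the prefactor $\big(1-\sigma_{\hat Z}^{2}/\nu^{2}\big)$ stays bounded away from $0$, while $e^{-t^{*}\mu-|t^{*}|\nu}=e^{-o(\log n)}=n^{-o(1)}$ since $\mu,\nu=o(\log n)$. Because $\psi(-t)\ge\psi(-\tfrac14)=-\eta_{1}(\qbf,\rho)\log n$ for every $t$, the minimum over $t>0$ in Lemma~\ref{BASBM-N-Lemma 5} remains $n^{-\eta_{1}(\qbf,\rho)+o(1)}$, so $\Prob(E_{2}'\mid y_{i}=1)\ge n^{-\eta_{1}(\qbf,\rho)+o(1)}$.

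Repeating the computation for $y_{i}=-1$ interchanges the same-$y$ and different-$y$ group sizes, i.e.\ replaces $\rho$ by $1-\rho$, yielding $\Prob(E_{2}'\mid y_{i}=-1)\ge n^{-\eta_{1}(\qbf,1-\rho)+o(1)}$. Since the two conditioning events are disjoint and $\Prob(y_{i}=\pm1)$ are $\Theta(1)$ constants absorbed into the $o(1)$ exponent,
\begin{align*}
\Prob(E_{2}')\ge n^{-\eta_{1}(\qbf,\rho)+o(1)}+n^{-\eta_{1}(\qbf,1-\rho)+o(1)},
\end{align*}
as claimed. I expect the main obstacle to lie in the lower-bound direction rather than in the exponent computation: one must control the change-of-measure prefactor and verify the variance condition of Lemma~\ref{BASBM-N-Lemma 5} (which forces the delicate choice $\sqrt{\log n}\ll\nu\ll\log n$), and one must confirm that a single tilt $s^{*}=-\tfrac14$ optimizes both weight groups at once, so that the rate collapses \emph{exactly} to $\eta_{1}$ and not to a strictly larger quantity.
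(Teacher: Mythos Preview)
Your proposal is correct and follows exactly the route the paper indicates: the paper's own proof is the single sentence ``The proof is achieved by applying Lemma~\ref{BASBM-N-Lemma 5} and the Chernoff bound,'' and you have supplied the details of that application---conditioning on $y_i$, decomposing $\Delta$ into four independent Bernoulli groups, computing the cumulant generating function, verifying that the common tilt $s^{*}=-\tfrac14$ simultaneously minimizes both weight pairs so that the exponent collapses to $\eta_1$, and then invoking Lemma~\ref{BASBM-N-Lemma 5} with a slack $\nu$ satisfying $\sqrt{\log n}\ll\nu\ll\log n$ to control the tilted-variance prefactor.

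One small remark: in the SDP section the vector $y$ is treated as fixed with empirical fraction $\rho$, so for a given $i\in\Tset$ the value $y_i$ is deterministic rather than random. Your conditioning argument then yields $\Prob(E_2')\ge n^{-\eta_1(\qbf,\rho)+o(1)}$ when $y_i=1$ and $\Prob(E_2')\ge n^{-\eta_1(\qbf,1-\rho)+o(1)}$ when $y_i=-1$; the additive form in the lemma statement is shorthand for ``whichever applies,'' and both exponents enter once you take the product over $i\in\Tset$ (which contains roughly $\rho|\Tset|$ nodes of the first type and $(1-\rho)|\Tset|$ of the second). This does not affect your computation or the downstream use of the lemma.
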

\begin{proof}
The proof is achieved by applying Lemma~\ref{BASBM-N-Lemma 5} and the Chernoff bound. 
\end{proof}
Applying Lemma~\ref{Lemma_converse_prob_1} yields
\begin{align*}
\Prob  (E_{2} ) &=1-\prod_{i\in \Tset}   [ 1-\Prob  ( E_{2}'   )   ] \\
&\geq 1-  \Big [ 1-n^{-\eta_{1}(\qbf,1-\rho)+o(1)} -n^{-\eta_{1}(\qbf,\rho)+o(1)}   \Big ]^{|\Tset|} \\
&\geq 1-e^{ -n^{1-\eta_{1}(\qbf,1-\rho)+o(1)} -n^{1-\eta_{1}(\qbf,\rho)+o(1)} } .
\end{align*}
Recall that $\rhoy \leq 0.5$ implies $\eta_{1}(\qbf,\rho) \leq \eta_{1}(\qbf,1-\rho)$ and $\rhoy > 0.5$ implies $\eta_{1}(\qbf,\rho) \geq \eta_{1}(\qbf,1-\rho)$.  When $\rhoy \leq 0.5$, if $\eta_{1}(\qbf,\rho) <1$ then $\Prob  (E_{2} )  \rightarrow 1$. When $\rhoy \geq 0.5$, if $\eta_{1}(\qbf,1-\rho) <1$ then $\Prob  (E_{2} )  \rightarrow 1$ and the second part of Theorem~\ref{Theorem-SDP-oublv} follows.

\section{Partial Recovery Algorithm}
\label{Partial Recavery Algorithm}
In this paper, the partial recovery algorithm in~\cite{abbe2015community} is employed with few changes to make it compatible for each Scenario.
For the two-latent variable stochastic block model we can directly use the partial recovery algorithm in~\cite{abbe2015community}:
\subsection{The two-latent variable stochastic block model with known auxiliary latent variable $y$:}
\begin{enumerate}
\item
Cluster nodes according to the value of the auxiliary latent variable $y$, call them auxiliary clusters.
\item
Extract submatrices of $P$ and $\bar{Q}$ representing each value of $y$, call them $P^{(k)}$ and $\bar{Q}^{(k)}$.
\item
Separately in each auxiliary cluster, use respective submatrices $P^{(k)}$ and $\bar{Q}^{(k)}$ to construct a partial recovery estimator of communities $x$, and find the community estimate for all members of each cluster.
\end{enumerate}

    
\subsection{The two-latent variable stochastic block model with unknown latent variable $y$:}
\begin{enumerate}
    \item Use matrices $P$ and $\bar{Q}$ to construct a partial recovery estimator of all micro-communities. 
    \item Cluster nodes with the same community variable representing each value of $x$.
\end{enumerate}
For the two-latent variable censored block model, we need a new variant of the partial recovery algorithm in~\cite{abbe2015community}. In the new variant, the \textit{vertex comparison algorithm} in \cite{abbe2015community} is used twice for each pair of nodes. First, the algorithm is employed using the eigenvalues of $\text{diag} (p) (\Xi * Q)$. For this case, if the two nodes belong to the same community, the output of the algorithm is 1; otherwise it returns 0. Then, the algorithm is employed using the eigenvalues of $\text{diag} (p) ((1-\Xi) * Q)$. For this case, if the two nodes belong to the same community, the output of the algorithm is 0; otherwise it returns 1. If the outputs are not equal, we are able to determine reliably whether the two nodes belong to the same community. If the outputs are equal, another pair of nodes are selected to repeat the partial recovery algorithm. 

\subsection{The two-latent variable censored block model with known latent variable $y$:}
\begin{enumerate}
\item
Cluster nodes according to the value of the auxiliary latent variable $y$, call them auxiliary clusters.
\item
Extract submatrices of $P$, $\bar{Q}$, and $\Xi$ representing each value of $y$, call them $P^{(k)}$ and $\bar{Q}^{(k)}$, and $\Xi^{(k)}$.
\item
Separately in each auxiliary cluster, use respective submatrices $P^{(k)}$, $\bar{Q}^{(k)}$, and $\Xi^{(k)}$ to construct a partial recovery estimator of communities $x$, and find the community estimate for all members of each cluster.
\end{enumerate}


\subsection{The two-latent variable censored block model with unknown latent variable $y$:}
\begin{enumerate}
    \item Use matrices $P$, $\bar{Q}$, and $\Xi$ to construct a partial recovery estimator of all micro-communities. 
    \item Cluster nodes with the same community variable representing each value of $x$.
\end{enumerate}

\begin{Remark}
When $y$ is known, for each auxiliary latent variable $y$, 
definitions 4 and 5 in~\cite{abbe2015community} are restated based on the new matrices $P^{(k)}$, $\bar{Q}^{(k)}$, and $\Xi^{(k)}$. Using these new matrices, \textbf{the vertex comparison algorithm}, \textbf{the vertex classification algorithm}, \textbf{the unreliable graph classification algorithm}, and \textbf{the reliable graph classification algorithm} in~\cite{abbe2015community} are exploited separately. When $y$ is unknown, these definitions and algorithms are followed from matrices $P$, $\bar{Q}$, and $\Xi$.
\end{Remark}
\section{Proof of Theorem \ref{Theorem-SDP-tublv}}
\label{Proof-Theorem-SDP-tublv}
We begin by deriving sufficient conditions for the semidefinite programming estimator~\eqref{tublv-sdp} to produce the true labels $x^*$.

\begin{Lemma}
\label{Lemma_sufficient_tublv}
The sufficient conditions of Lemma~\ref{Lemma_sufficient_oublv} apply to semidefinite programming \eqref{tublv-sdp} by replacing
\begin{align*}
    S^{*} = D^{*}+\lambda^{*}\mathbf{J}-A .
\end{align*}
\end{Lemma}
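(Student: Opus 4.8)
The plan is to observe that the semidefinite program~\eqref{tublv-sdp} has exactly the same feasible set and constraint structure as~\eqref{oublv-sdp-equ2}, and differs only in its linear objective: the cost matrix $\Tone B + \Ttwo A$ is replaced by $A$. Because the cost matrix enters the argument of Lemma~\ref{Lemma_sufficient_oublv} only through the linear objective and through the definition of the dual slack variable, the entire proof transfers essentially verbatim once $A$ is substituted for $\Tone B + \Ttwo A$ throughout, which is precisely what the replacement $S^{*} = D^{*} + \lambda^{*}\mathbf{J} - A$ encodes.

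Concretely, I would first form the Lagrangian of~\eqref{tublv-sdp} using dual variables $D = \mathrm{diag}(d_{i})$ for the diagonal constraints $Z_{ii}=1$, a scalar $\lambda$ for $\langle Z,\mathbf{J}\rangle=0$, and $S \succeq 0$ for the constraint $Z \succeq 0$. Requiring the coefficient of $Z$ in the Lagrangian to vanish yields $S = D + \lambda \mathbf{J} - A$, which matches the claimed substitution. I would then reproduce the primal-optimality chain of Lemma~\ref{Lemma_sufficient_oublv}: for any feasible $Z$,
\begin{align*}
\langle A, Z\rangle \leq L(Z,S^{*},D^{*},\lambda^{*}) = \langle D^{*},\mathbf{I}\rangle = \langle S^{*}-\lambda^{*}\mathbf{J}+A, Z^{*}\rangle = \langle A, Z^{*}\rangle ,
\end{align*}
where the inequality uses $\langle S^{*},Z\rangle \geq 0$, the first equality uses $S^{*}=D^{*}+\lambda^{*}\mathbf{J}-A$, the second uses $Z_{ii}^{*}=1$, and the last uses $S^{*}x^{*}=0$ together with $x^{*T}\mathbf{1}=0$ (hence $\langle \mathbf{J}, Z^{*}\rangle=0$). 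These are the same three justifications $(a)$, $(b)$, $(c)$ as before, now with $A$ in place of $\Tone B + \Ttwo A$.

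For uniqueness, I would argue as in Lemma~\ref{Lemma_sufficient_oublv}: any competing optimal $\tilde{Z}$ satisfies $\langle S^{*},\tilde{Z}\rangle = 0$, and since $\tilde{Z}\succeq 0$ and $S^{*}\succeq 0$ with $\lambda_{2}(S^{*})>0$, the null space of $S^{*}$ is exactly $\mathrm{span}(x^{*})$, forcing $\tilde{Z}$ to be a scalar multiple of $x^{*}x^{*T}$; the unit-diagonal constraint then gives $\tilde{Z}=Z^{*}$.

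There is no genuine obstacle here—the content of the lemma is the structural observation that the objective matrix is isolated in a single place. The only identity worth re-checking explicitly is $\langle \mathbf{J}, Z^{*}\rangle = 0$, but this depends solely on $x^{*T}\mathbf{1}=0$ and is independent of the objective, so it survives the substitution unchanged. I would therefore state the proof as a direct adaptation of Lemma~\ref{Lemma_sufficient_oublv}, noting that the subsequent verification that $S^{*}=D^{*}+\lambda^{*}\mathbf{J}-A$ meets the conditions $S^{*}\succeq 0$, $\lambda_{2}(S^{*})>0$, and $S^{*}x^{*}=0$ with high probability is carried out separately (and is where the distinct threshold $\eta_2$ of Theorem~\ref{Theorem-SDP-tublv} will emerge).
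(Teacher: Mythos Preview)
Your proposal is correct and follows exactly the approach intended by the paper, which itself simply states that the proof is similar to that of Lemma~\ref{Lemma_sufficient_oublv}. In fact, your write-up spells out the substitution $A \leftrightarrow \Tone B + \Ttwo A$ and re-verifies the optimality chain and the uniqueness argument in more detail than the paper does.
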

\begin{proof}
The proof is similar to the proof of Lemma~\ref{Lemma_sufficient_oublv}.
\end{proof}

It suffices to show that $S^{*} = D^{*}+\lambda^{*}\mathbf{J}-A$ satisfies other conditions in Lemma~\ref{Lemma_sufficient_tublv} with probability $1-o(1)$. 
Let 
\begin{equation*} 
d_{i}^{*} = \sum_{j=1}^{n} A_{ij}x_{j}^{*}x_{i}^{*}.
\end{equation*}
Then $D^{*}x^{*}=Ax^{*}$ and based on the definition of $S^{*}$ in  Lemma~\ref{Lemma_sufficient_tublv}, $S^{*}$ satisfies the condition $S^{*}x^{*} =0$.
It remains to show that $S^{*}\succeq 0$ and $\lambda_{2}(S^{*})>0$ with probability $1-o(1)$, i.e., ~\eqref{equ-main-sdp} holds. 
For any $v$ such that $v^{T}x^{*}=0$ and $  \| v   \|=1$,
\begin{align*} 
v^{T}S^{*}v =&v^{T}D^{*}v +\lambda^{*}v^{T}\mathbf{J}v -v^{T} (A-\mathbb{E} [A ] )v -v^{T}\mathbb{E} [A ]v \\
\geq& \min_{i}d_{i}^{*} +\lambda^{*}v^{T}\mathbf{J}v -  \| A-\mathbb{E} [A ]   \| -v^{T}\mathbb{E} [A ]v .
\end{align*}
Notice that
\begin{align*}
    -v^{T}\mathbb{E} [A ]v =& -\frac{1}{4}  [c_{1} v^{T}Wv - c_{2} v^{T}\mathbf{J}v - c_{3} v^{T}(Z*W)v  ] \twocolbreak+ \qzero{} \frac{\log n}{n} .
\end{align*}
 
\begin{Lemma}
\label{Lemma_main_two}
For $\delta = \frac{\log n}{ \log \log n}$,  
\begin{align*}
    \Prob \big ( \min_{i}~d_{i}^{*} \geq \delta  \big ) \geq 1-n^{1-\eta_{2}(\qbf,\rho)+o(1)} -n^{1-\eta_{2}(\qbf,1-\rho)+o(1)} . 
\end{align*}
\end{Lemma}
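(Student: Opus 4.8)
The plan is to fix an arbitrary node $i$, show that $\Prob(d_{i}^{*} < \delta)$ decays polynomially in $n$ with an exponent governed by $\eta_2$, and then finish with a union bound over the $n$ nodes, exactly paralleling the known-$y$ argument behind Lemma~\ref{Lemma_main_one} but with the simpler statistic $d_{i}^{*}=\sum_{j}A_{ij}x_{j}^{*}x_{i}^{*}$. First I would condition on the true labels and, using the symmetry $x^{*}\to -x^{*}$, assume without loss of generality that $x_{i}^{*}=+1$. I would then partition the remaining nodes according to whether they share node $i$'s community and whether they share its auxiliary label $y_{i}$. This splits $d_{i}^{*}$ into a positive part (edges to same-community nodes) and a negative part (edges to different-community nodes), each a sum of two independent Binomials. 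For a node with $y_{i}=+1$ the same-community count has mean $\frac{\log n}{2}\big(\qzero\rho+\qone(1-\rho)\big)$ and the different-community count has mean $\frac{\log n}{2}\big(\qtwo\rho+\qthree(1-\rho)\big)$; for $y_{i}=-1$ the roles of $\rho$ and $1-\rho$ are interchanged. The key structural observation is that the law of $d_{i}^{*}$ depends only on $y_{i}$, not on $x_{i}^{*}$.

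Next I would pass to the Poisson approximation justified in Section~\ref{Information-Theoretic Results} via Le Cam's inequality, writing $d_{i}^{*}=P-Q$ with $P\sim\mathcal{P}(\alpha\log n)$ and $Q\sim\mathcal{P}(\beta\log n)$ independent, where $(\alpha,\beta)=\big(\tfrac12(\qzero\rho+\qone(1-\rho)),\,\tfrac12(\qtwo\rho+\qthree(1-\rho))\big)$ when $y_{i}=+1$. I would then apply the Chernoff bound, which in the assortative regime $\alpha>\beta$ (the regime in which the objective $\langle Z,A\rangle$ is correctly oriented) reads
\[
\Prob(P-Q\leq\delta)\leq\inf_{t>0}\exp\!\big(t\delta+\alpha\log n\,(e^{-t}-1)+\beta\log n\,(e^{t}-1)\big).
\]
The minimizer is $t^{*}=\tfrac12\log(\alpha/\beta)>0$, and substitution gives exponent $-\log n\,(\sqrt{\alpha}-\sqrt{\beta})^{2}$. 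A short computation shows $(\sqrt{\alpha}-\sqrt{\beta})^{2}=\tfrac12\big(\sqrt{\qzero\rho+\qone(1-\rho)}-\sqrt{\qtwo\rho+\qthree(1-\rho)}\big)^{2}=\eta_2(\qbf,\rho)$, and the analogous $y_{i}=-1$ computation yields $\eta_2(\qbf,1-\rho)$. Since $t^{*}$ is a constant and $\delta=\frac{\log n}{\log\log n}=o(\log n)$, the factor $e^{t^{*}\delta}=n^{o(1)}$ is absorbed into the exponent, as are the $o(1)$ errors from the Poisson approximation and from the micro-community sizes being $\frac{n}{2}\rho+O(\sqrt{n})$ rather than exactly $\frac{n}{2}\rho$. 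This gives $\Prob(d_{i}^{*}<\delta)\leq n^{-\eta_2(\qbf,\rho)+o(1)}$ for the $\rho n$ nodes with $y_{i}=+1$ and $\leq n^{-\eta_2(\qbf,1-\rho)+o(1)}$ for the $(1-\rho)n$ nodes with $y_{i}=-1$.

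Finally I would take the union bound, splitting the nodes by the value of $y_{i}$:
\begin{align*}
\Prob\big(\min_{i} d_{i}^{*} < \delta\big)
&\leq \rho n\cdot n^{-\eta_2(\qbf,\rho)+o(1)} + (1-\rho)n\cdot n^{-\eta_2(\qbf,1-\rho)+o(1)} \\
&\leq n^{1-\eta_2(\qbf,\rho)+o(1)} + n^{1-\eta_2(\qbf,1-\rho)+o(1)},
\end{align*}
and the claim follows by complementation.

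The hard part will be the Chernoff step: correctly merging the four Binomial families into a single difference of Poissons, carrying out the optimization over $t>0$, and verifying that the resulting rate is \emph{exactly} $\eta_2$ rather than some other functional of $\qbf$. A secondary point requiring care is confirming that the deterministic shift by $\delta$ and the fluctuation of the realized micro-community sizes contribute only $o(1)$ to the exponent, so that the threshold $\eta_2>1$ is not perturbed; this is precisely where the choice $\delta=\frac{\log n}{\log\log n}=o(\log n)$ is essential.
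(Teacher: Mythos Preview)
Your proposal is correct and follows precisely the approach the paper indicates (``Chernoff bound and the union bound''): you condition on the auxiliary label $y_i$, reduce $d_i^{*}$ to a difference of (approximately) Poisson counts with means $\frac{\log n}{2}(\qzero\rho+\qone(1-\rho))$ and $\frac{\log n}{2}(\qtwo\rho+\qthree(1-\rho))$, optimize the Chernoff exponent to obtain $\eta_2(\qbf,\rho)$ (respectively $\eta_2(\qbf,1-\rho)$ when $y_i=-1$), absorb the $e^{t^*\delta}$ factor as $n^{o(1)}$, and finish with the union bound split by $y_i$. The only cosmetic difference is that the paper applies Chernoff directly to the Binomial sums rather than passing through Le~Cam first, but the resulting exponent is identical.
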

\begin{proof}
The proof is achieved by applying the Chernoff bound and the union bound. 
\end{proof}

Using Lemma~\ref{Lemma_main_two}, $\min_{i}d_{i}^{*} \geq \frac{\log n}{\log\log n}$ with probability converging to one, if $\min \{\eta_{2}(\qbf,\rho), \eta_{2}(\qbf,1-\rho)\}> 1$. Let $\lambda^{*} \geq \frac{1}{4}  [c_{1}  (2\rhoy-1 )^{2} +c_{2}  ]$. Applying Lemmas \ref{Lemma_spectral_norm_1}, \ref{Lemma-vWv}, and \ref{Lemma_main_two}, we get that when $\min \{\eta_{2}(\qbf,\rho), \eta_{2}(\qbf,1-\rho) \}> 1$,
\begin{align*}
v^{T}S^{*}v \geq& \frac{\log n}{\log \log n} -c' \sqrt{\log n} +\qzero{} \frac{\log n}{n} > 0 ,
\end{align*}
and the first part of Theorem~\ref{Theorem-SDP-tublv} follows.

To prove the second part, it suffices to find when the maximum likelihood detector fails. 
The events $F_{1}$, $F_{2}$, $E_{1}$, $E_{2}$, and $E_{2}'$ are the same as we defined them in the proof of Theorem~\ref{Theorem-SDP-oublv}. Also, the definitions for $C_{1}^{*}$, $C_{2}^{*}$, and $\Tset$ remain valid for this part. Then $\Prob  ( \text{ML fails}   ) \geq \Prob  ( F_{1} \cap F_{2}   )$. Here we just prove that $\Prob (F_{1} )  \rightarrow 1$, while $\Prob (F_{2} )  \rightarrow 1$ is proved similarly. 
By symmetry, we can condition on $C_{1}^{*}$ being the first $\frac{n}{2}$ nodes. Then
\begin{align*}
\min_{i \in C_{1}^{*}}   ( e(i,C_{1}^{*})-e(i,C_{2}^{*})   ) \leq& \min_{i \in \Tset}   ( e(i,C_{1}^{*})-e(i,C_{2}^{*})   ) \\
\leq & \min_{i \in \Tset}   ( e(i,C_{1}^{*} \setminus \Tset)-e(i,C_{2}^{*})   ) \twocolbreak+\max_{i \in \Tset} e(i, \Tset) ,
\end{align*}
where $e(i,\Hset) \triangleq \sum_{j \in \Hset} A_{ij}$. For $i \in \Tset$, $e(i,\Tset) = X_{1} + X_{2}$, where $X_{1} \sim \text{Binom}(|\Tset|,\qzero \log n/n)$ and $X_{2} \sim \text{Binom}(|\Tset|,\qone \log n/n)$. It follows from Lemma~\ref{BCBM-P-Lemma 5} that
\begin{align*}
&\Prob  \bigg( X_{1} \geq \frac{\delta}{2}-1  \bigg) \leq  \bigg( \frac{\log n}{2e\qzero}  \Big(\frac{\delta}{2}-1 \Big)  \bigg)^{1-\frac{\delta }{2}} \leq  n^{-2+o(1)} ,
\end{align*}
\begin{align*}
&\Prob  \bigg( X_{2} \geq \frac{\delta}{2}-1  \bigg) \leq  \bigg( \frac{\log n}{2e\qone}  \Big(\frac{\delta}{2}-1 \Big)  \bigg)^{1-\frac{\delta }{2}} \leq  n^{-2+o(1)} .
\end{align*}
Then $\Prob  ( e(i,\Tset) \geq \delta-2  ) \leq n^{-2+o(1)}$. 
Using the union bound, $\Prob (E_{1} ) \geq 1-n^{-1+o(1)}$. Therefore, $\Prob (E_{1} )  \rightarrow 1$ with high probability. 

\begin{Lemma}
\label{Lemma_converse_prob_2}
When $e(i,\Hset) \triangleq \sum_{j \in \Hset} A_{ij}$, 
\begin{align*}
    \Prob  ( E_{2}'   ) \geq n^{-\eta_{2}(\qbf, \rho)+o(1)} + n^{-\eta_{2}(\qbf, 1-\rho)+o(1)} . 
\end{align*}
\end{Lemma}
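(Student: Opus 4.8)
The plan is to mirror the argument behind Lemma~\ref{Lemma_converse_prob_1}, the only difference being that here the per-node statistic carries no $y_iy_j$ weighting, so $e(i,\Hset)=\sum_{j\in\Hset}A_{ij}$ is a plain edge count. Fix a node $i\in\Tset$, so that $x_i^{*}=1$, and condition on its auxiliary label $y_i$. Since $x$ and $y$ are independent, among the $\approx n/2$ nodes of $C_1^{*}\setminus\Tset$ a fraction $\rho$ share the value of $y_i$ and a fraction $1-\rho$ do not (when $y_i=1$), and similarly for $C_2^{*}$. Consequently, when $y_i=1$, $e(i,C_1^{*}\setminus\Tset)$ reduces to a sum of two independent binomials with Poisson-equivalent mean $\tfrac{\log n}{2}(\rho\qzero+(1-\rho)\qone)$, while $e(i,C_2^{*})$ has mean $\tfrac{\log n}{2}(\rho\qtwo+(1-\rho)\qthree)$; when $y_i=-1$ the roles of $\rho$ and $1-\rho$ are exchanged.

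First I would rewrite $E_2'$ as the upper-tail event for the signed sum $T\triangleq e(i,C_2^{*})-e(i,C_1^{*}\setminus\Tset)=\sum_j S_j$, where the $S_j$ are independent, equal to $+A_{ij}$ on $C_2^{*}$ and $-A_{ij}$ on $C_1^{*}\setminus\Tset$, so that $\{E_2'\}=\{T\ge\delta\}$. I would then invoke Lemma~\ref{BASBM-N-Lemma 5} with $\mu-\nu=\delta$ and $\nu=\delta=\tfrac{\log n}{\log\log n}$. Because each $A_{ij}$ is $\Bern(\Theta(\log n/n))$, the moment generating function factorizes and, conditioned on $y_i=1$, satisfies $M(t)=\exp\!\big(\tfrac{\log n}{2}[(\rho\qzero+(1-\rho)\qone)(e^{-t}-1)+(\rho\qtwo+(1-\rho)\qthree)(e^{t}-1)]\big)(1+o(1))$. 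Minimizing over $t>0$ gives the optimizer $e^{t^{*}}=\sqrt{\alpha/\beta}$ and $\min_t M(t)=n^{-\eta_2(\qbf,\rho)+o(1)}$, with $\alpha=\rho\qzero+(1-\rho)\qone$ and $\beta=\rho\qtwo+(1-\rho)\qthree$, which is exactly the Chernoff--Hellinger form entering $\eta_2$.

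It then remains to dispose of the two correction factors in Lemma~\ref{BASBM-N-Lemma 5}. The prefactor $e^{-t\mu-|t|\nu}$ is $n^{o(1)}$ at the $O(1)$ optimizer $t^{*}$ because $\mu,\nu=O(\delta)=o(\log n)$; and since $\sigma_{\hat{Z}}^{2}=\Theta(\log n)$ for a sum of $\Theta(n)$ Bernoullis of parameter $\Theta(\log n/n)$, the choice $\nu=\log n/\log\log n\gg\sqrt{\log n}$ forces $1-\sigma_{\hat{Z}}^{2}/\nu^{2}\to1$. This yields $\Prob(E_2'\mid y_i=1)\ge n^{-\eta_2(\qbf,\rho)+o(1)}$ and, symmetrically, $\Prob(E_2'\mid y_i=-1)\ge n^{-\eta_2(\qbf,1-\rho)+o(1)}$. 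Averaging over $y_i$ and absorbing the constants $\Prob(y_i=\pm1)=\rho,1-\rho$ into the $n^{o(1)}$ factor gives the claimed bound $\Prob(E_2')\ge n^{-\eta_2(\qbf,\rho)+o(1)}+n^{-\eta_2(\qbf,1-\rho)+o(1)}$.

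The main obstacle is the lower bound itself: a Chernoff or union bound only upper-bounds this rare-event probability, whereas the converse needs a matching lower bound, which is precisely what the change-of-measure inequality Lemma~\ref{BASBM-N-Lemma 5} supplies. The delicate points are verifying that the variance-correction factor $(1-\sigma_{\hat{Z}}^{2}/\nu^{2})$ stays bounded away from zero under a choice of $\nu$ that is simultaneously large enough to control the variance yet small enough ($o(\log n)$) that the tilting prefactor does not erode the $n^{-\eta_2}$ exponent, and cleanly reducing the mixed-rate edges to independent binomials with the $\rho$-averaged parameters so that the optimized moment generating function reproduces $\eta_2(\qbf,\cdot)$.
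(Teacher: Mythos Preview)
Your proposal is correct and follows essentially the same approach as the paper, whose proof consists solely of the sentence ``The proof is achieved by applying Lemma~\ref{BASBM-N-Lemma 5} and the Chernoff bound.'' You have unpacked precisely what that sentence means: condition on $y_i$, write $E_2'$ as a tail event for the signed edge-count sum, invoke the change-of-measure lower bound of Lemma~\ref{BASBM-N-Lemma 5}, optimize the moment generating function to recover the exponent $\eta_2(\qbf,\cdot)$, and verify that the tilting prefactor and the variance correction $(1-\sigma_{\hat Z}^2/\nu^2)$ contribute only $n^{o(1)}$ under the choice $\nu=\delta=\log n/\log\log n$.
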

\begin{proof}
The proof is achieved by applying Lemma~\ref{BASBM-N-Lemma 5} and the Chernoff bound. 
\end{proof}
Applying Lemma~\ref{Lemma_converse_prob_2} yields
\begin{align*}
\Prob  (E_{2} ) &=1-\prod_{i\in \Tset}   [ 1-\Prob  ( E_{2}'   )   ] \\
&\geq 1-  \Big [ 1-n^{-\eta_{2}(\qbf,\rho)+o(1)} -n^{-\eta_{2}(\qbf,1-\rho)+o(1)}    \Big ]^{|\Tset|} \\
&\geq 1-e^{ -n^{1-\eta_{2}(\qbf,\rho)+o(1)} -n^{1-\eta_{2}(\qbf,1-\rho)+o(1)} } .
\end{align*}
Therefore, if $\min \{\eta_{2}(\qbf,\rho), \eta_{2}(\qbf,1-\rho)\} <1$ then $\Prob  (E_{2} )  \rightarrow 1$ and the second part of Theorem~\ref{Theorem-SDP-tublv} follows.
\section{Proof of Theorem \ref{Theorem-SDP-oublv-censored}}
\label{Proof-Theorem-SDP-oublv-censored}
The proof is similar to the proof of Theorem~\ref{Theorem-SDP-oublv}. Here we just mention the proof outlines and important Lemmas for brevity.
The following Lemma declares the sufficient conditions for the optimum solution of~\eqref{oublv-sdp-equ2-censored} matching the true labels $x^*$.
\begin{Lemma}
\label{Lemma_sufficient_oublv-censored}
For the optimization problem~\eqref{oublv-sdp-equ2-censored}, consider the Lagrange multipliers 
\begin{equation*}
\lambda^{*}, \quad D^{*}=\mathrm{diag}(d_{i}^{*}), \quad
S^{*}.
\end{equation*}
If we have 
\begin{align*}
&S^{*} = D^{*}+\lambda^{*}\mathbf{J}-R ,\\
&S^{*} \succeq 0, \\
&\lambda_{2}(S^{*}) >  0 ,\\
&S^{*}x^{*} =0 ,
\end{align*}
then $(\lambda^{*}, D^*, S^*)$ is the dual optimal solution and $\Zsdp=x^{*}x^{*T}$ is the unique primal optimal solution of~\eqref{oublv-sdp-equ2-censored}.
\end{Lemma}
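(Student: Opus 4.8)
The plan is to reproduce the Lagrangian-duality argument used in the proof of Lemma~\ref{Lemma_sufficient_oublv}, observing that~\eqref{oublv-sdp-equ2-censored} differs from~\eqref{oublv-sdp-equ2} only in its linear objective: the matrix $\Tone B + \Ttwo A$ is replaced by $R$, while the feasible set (the constraints $Z \succeq 0$, $Z_{ii} = 1$, and $\langle Z, \mathbf{J}\rangle = 0$) is unchanged. First I would form the Lagrangian of~\eqref{oublv-sdp-equ2-censored} by attaching a scalar multiplier $\lambda$ to the constraint $\langle Z, \mathbf{J}\rangle = 0$, a diagonal multiplier $D = \mathrm{diag}(d_i)$ to the constraints $Z_{ii} = 1$, and a positive semidefinite multiplier $S \succeq 0$ to the constraint $Z \succeq 0$. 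The candidate dual point is $(\lambda^*, D^*, S^*)$ with $S^* = D^* + \lambda^* \mathbf{J} - R$, exactly the form assumed in the statement; note that $R$ is symmetric, being a sum of element-wise products of the symmetric matrices $A$ and $W = yy^T$, so the trace pairing behaves as expected.

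Next I would establish the weak-duality chain that pins $Z^* = x^* x^{*T}$ as a primal optimum. For any feasible $Z$, since $S^* \succeq 0$ and $Z \succeq 0$ we have $\langle S^*, Z\rangle \ge 0$, which together with $S^* = D^* + \lambda^* \mathbf{J} - R$ and $\langle Z, \mathbf{J}\rangle = 0$ gives
\begin{equation*}
\langle R, Z\rangle \le \langle D^*, \mathbf{I}\rangle = \langle S^* - \lambda^* \mathbf{J} + R, Z^*\rangle = \langle R, Z^*\rangle,
\end{equation*}
where the last equality uses $S^* x^* = 0$ (so $\langle S^*, Z^*\rangle = x^{*T} S^* x^* = 0$), $x^{*T}\mathbf{1} = 0$ (so $\langle \mathbf{J}, Z^*\rangle = (x^{*T}\mathbf{1})^2 = 0$), and $Z^*_{ii} = 1$. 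Hence $Z^*$ attains the maximum and $(\lambda^*, D^*, S^*)$ is dual optimal.

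Finally, for uniqueness I would suppose $\tilde Z$ is any other optimal solution and show $\langle S^*, \tilde Z\rangle = 0$: since $\langle R, \tilde Z\rangle = \langle R, Z^*\rangle$, $\tilde Z_{ii} = 1$, and $\langle \mathbf{J}, \tilde Z\rangle = 0$, the identity $S^* = D^* + \lambda^* \mathbf{J} - R$ forces $\langle S^*, \tilde Z\rangle = \langle S^*, Z^*\rangle = 0$. Because $S^* \succeq 0$ and $\tilde Z \succeq 0$, the range of $\tilde Z$ then lies in the kernel of $S^*$; the hypothesis $\lambda_{2}(S^*) > 0$ forces that kernel to be exactly the one-dimensional span of $x^*$, so $\tilde Z$ is a scalar multiple of $x^* x^{*T}$, and the normalization $\tilde Z_{ii} = 1$ yields $\tilde Z = Z^*$. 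There is no genuinely new obstacle here relative to Lemma~\ref{Lemma_sufficient_oublv}, since only the linear objective changes; the real analytic content—verifying that this certificate satisfies $S^* \succeq 0$ and $\lambda_{2}(S^*) > 0$ with probability $1 - o(1)$ for the censored model—is not part of this lemma and is handled by the spectral bounds that follow.
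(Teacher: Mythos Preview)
Your proposal is correct and takes essentially the same approach as the paper: the paper's own proof simply states that it is similar to the proof of Lemma~\ref{Lemma_sufficient_oublv}, and you have carried out precisely that substitution of $R$ for $\Tone B + \Ttwo A$ in the Lagrangian-duality and uniqueness argument.
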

\begin{proof}
The proof is similar to the proof of Lemma~\ref{Lemma_sufficient_oublv}.
\end{proof}
Let
\begin{align*} 
d_{i}^{*} =& T \sum_{j=1}^{n} A_{ij} x_{j}^{*}x_{i}^{*} +T \sum_{j=1}^{n} A_{ij}y_i y_j x_{j}^{*}x_{i}^{*} \\
&+\Tone  \sum_{j=1}^{n} A_{ij}^2 y_i y_j x_{j}^{*}x_{i}^{*} +\Ttwo \sum_{j=1}^{n} A_{ij}^2 x_{j}^{*}x_{i}^{*}.
\end{align*}
Then $D^{*}x^{*}=T A +T (A*W)  +\Tone  (A*A*W) +\Ttwo (A*A)$ and based on the definition of $S^{*}$ in  Lemma~\ref{Lemma_sufficient_oublv-censored}, $S^{*}$ satisfies the condition $S^{*}x^{*} =0$.
\begin{Lemma}
\label{Lemma_main_one-censored}
For $\delta = \frac{\log n}{ \log \log n}$,
\begin{align*}
    \Prob \Big( \min_{i\in [n]}~d_{i}^{*} \geq \delta  \Big ) \geq& 1-n^{1-\eta_{1}(\gbf,\rho) -\eta_{1}(\hbf,\rho) +o(1)} \\
    &-n^{1-\eta_{1}(\gbf,1-\rho) -\eta_{1}(\hbf,1-\rho) +o(1)} .
\end{align*}
\end{Lemma}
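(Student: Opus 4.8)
The plan is to follow the template of the proof of Lemma~\ref{Lemma_main_one}, establishing a per-node lower-tail bound by a Chernoff argument and then taking a union bound, but now accounting for the signed (censored) edges. First I would fix a node $i$ and, by the community symmetry, condition on its true labels, say $x_i^*=1$ with $y_i$ fixed to one of its two values. Reading $d_i^*$ off its definition, each neighbor $j$ contributes a term that depends only on the pair $(s,r)\triangleq(x_i^*x_j^*,\,y_iy_j)$ and on whether $A_{ij}$ is absent, positive, or negative. Grouping neighbors into the four micro-communities determined by $(s,r)$ and further splitting each group by edge sign, $d_i^*$ becomes a linear combination of eight edge-counts which, under the Poisson approximation already invoked in Section~\ref{system-model} and the splitting property of the Poisson distribution, are mutually independent Poisson variables. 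The positive counts have means $\tfrac{\rho}{2}\gbf_k\log n$ or $\tfrac{1-\rho}{2}\gbf_k\log n$ (depending on $y_i$), and the negative counts likewise with $\hbf$.

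The key algebraic step, which makes $\gbf$ and $\hbf$ appear, is that the per-edge coefficients collapse to clean log-odds. With $T=\log\frac{1-\xi}{\xi}$, $\Tone=\log\frac{\qzero\qthree}{\qone\qtwo}$, and $\Ttwo=\log\frac{\qzero\qone}{\qtwo\qthree}$, a positive edge in the same-auxiliary classes contributes $\pm(2T+\Tone+\Ttwo)=\pm 2\log\frac{(1-\xi)\qzero}{\xi\qtwo}=\pm 2\log\frac{\gbf_1}{\gbf_2}$, while a negative edge contributes $\pm(\Tone+\Ttwo-2T)=\pm 2\log\frac{\xi\qzero}{(1-\xi)\qtwo}=\pm 2\log\frac{\hbf_1}{\hbf_2}$; the different-auxiliary classes behave analogously with $\gbf_3,\gbf_4$ and $\hbf_3,\hbf_4$. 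Hence $d_i^*$ equals twice a sum of independent log-likelihood-ratio increments, the positive-edge increments weighted by $\log(\gbf_k/\gbf_l)$ and the negative-edge increments by $\log(\hbf_k/\hbf_l)$.

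Next I would apply the Chernoff bound $\Prob(d_i^*\le\delta)\le e^{t\delta}\,\mathbb{E}[e^{-t d_i^*}]$ at the symmetric tilt $t=\tfrac12$. Because the eight counts are independent, the moment-generating function factors, and each same-auxiliary or different-auxiliary pair produces a Bhattacharyya coefficient of two Poissons with exponent $\tfrac{\rho}{2}(\sqrt{\gbf_1}-\sqrt{\gbf_2})^2\log n$, and so on; summing the four positive-edge contributions gives exactly $\eta_1(\gbf,\rho)\log n$ and the four negative-edge contributions give $\eta_1(\hbf,\rho)\log n$. Since $\delta=\frac{\log n}{\log\log n}$ forces the prefactor $e^{t\delta}=n^{o(1)}$, this yields $\Prob(d_i^*\le\delta)\le n^{-\eta_1(\gbf,\rho)-\eta_1(\hbf,\rho)+o(1)}$ for a node whose $y_i$ takes the value making the same-auxiliary fraction $\rho$; the other value of $y_i$ swaps the $\rho$- and $(1-\rho)$-weighted classes and produces exponent $\eta_1(\gbf,1-\rho)+\eta_1(\hbf,1-\rho)$. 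A union bound over the at most $n$ nodes of each type gives the two summands $n^{1-\eta_1(\gbf,\rho)-\eta_1(\hbf,\rho)+o(1)}$ and $n^{1-\eta_1(\gbf,1-\rho)-\eta_1(\hbf,1-\rho)+o(1)}$, which is the claim.

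The main obstacle is the decoupling in the second step: one must verify that the signed contributions to $d_i^*$ separate cleanly into independent positive- and negative-edge Poisson parts and that the surviving coefficients are precisely the log-odds $\log(\gbf_k/\gbf_l)$ and $\log(\hbf_k/\hbf_l)$. This rests on the identities $2T+\Tone+\Ttwo=2\log(\gbf_1/\gbf_2)$ and $\Tone+\Ttwo-2T=2\log(\hbf_1/\hbf_2)$ together with their different-auxiliary analogues, and on the Poisson splitting that guarantees independence of the positive and negative counts in the censored model. Once these are in place, the Chernoff computation is the routine one already used for Lemma~\ref{Lemma_main_one}, now carried out for the two independent signed parts so that the exponents $\eta_1(\gbf,\cdot)$ and $\eta_1(\hbf,\cdot)$ simply add.
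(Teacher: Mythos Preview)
Your proposal is correct and follows precisely the approach the paper indicates: the paper's proof is the single sentence ``The proof is achieved by applying the Chernoff bound and taking the union bound,'' and your plan is a faithful fleshing-out of that sentence, including the correct identification of the eight independent signed edge-counts, the verification that the per-edge coefficients collapse to $2\log(\gbf_k/\gbf_\ell)$ and $2\log(\hbf_k/\hbf_\ell)$, and the Bhattacharyya computation that produces $\eta_1(\gbf,\cdot)+\eta_1(\hbf,\cdot)$. One minor remark: since $d_i^*$ is \emph{twice} the log-likelihood ratio, the optimal tilt for $d_i^*$ itself is $t=\tfrac14$ (equivalently $t=\tfrac12$ applied to $d_i^*/2$); your description of the outcome is nonetheless correct, and the same exponent results regardless because the optimal tilt is common to all eight Poisson pairs.
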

\begin{proof}
The proof is achieved by applying the Chernoff bound and taking the union bound. 
\end{proof}
Similar to the proof of Theorem~\ref{Theorem-SDP-oublv}, using Lemma~\ref{Lemma_main_one-censored}, it can be shown that
 $S^{*}\succeq 0$ and $\lambda_{2}(S^{*})>0$ with probability $1-o(1)$ if
\begin{align*}
\begin{cases}
\eta_1(\gbf,\rhoy) +\eta_1(\hbf,\rhoy) >1 & \text{when} \quad \rhoy \leq 0.5 \\
\eta_1(\gbf,1-\rhoy) +\eta_1(\hbf,1-\rhoy) >1 & \text{when} \quad \rhoy > 0.5
\end{cases} .
\end{align*}

To prove the second part, we start to find when the maximum likelihood estimator fails. To this end, let 
\begin{align*}
    e(i,\Hset) \triangleq \sum_{j \in \Hset} A_{ij} (T y_{i}y_{j}+T)  + A_{ij}^2 (\Tone  y_{i}y_{j} +\Ttwo ).
\end{align*}
The definition of events $F_1$, $F_2$, $E_1$, and $E_2$ in the proof of Theorem~\ref{Theorem-SDP-oublv} are used to show that with high probability $\Prob (F_{1} )  \rightarrow 1$ and $\Prob (F_{2} )  \rightarrow 1$. Also, the definitions for $C_{1}^{*}$, $C_{2}^{*}$, and $\Tset$ remain valid for this part.
We prove that $\Prob (F_{1} )  \rightarrow 1$, while $\Prob (F_{2} )  \rightarrow 1$ is proved similarly.
To show that $\Prob (F_{1} )  \rightarrow 1$, we must have
$\Prob (E_{1} )  \rightarrow 1$ and $\Prob (E_{2} )  \rightarrow 1$. 
It can be shown that $\Prob (E_{1} ) \geq 1-n^{-1+o(1)}$ without difficulty. 
\begin{Lemma}
\label{Lemma_converse_prob_1-censored}
Let $E_{2}' \triangleq \big \{ e(i,C_{1}^{*} \setminus \Tset)-e(i,C_{2}^{*}) \leq -\delta \big\}$.
Then
\begin{align*}
    \Prob  ( E_{2}'   ) \geq& n^{-\eta_{1}(\gbf,\rho) -\eta_{1}(\hbf,\rho) +o(1)} \\
    &+ n^{-\eta_{1}(\gbf,1-\rho) -\eta_{1}(\hbf,1-\rho) +o(1)}. 
\end{align*}
\end{Lemma}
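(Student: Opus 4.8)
The plan is to follow the argument for Lemma~\ref{Lemma_converse_prob_1} in the proof of Theorem~\ref{Theorem-SDP-oublv} essentially verbatim, the only genuinely new ingredient being the decoupling of positive and negative edges, which is what turns the single exponent into the additive pair $\eta_1(\gbf,\cdot)+\eta_1(\hbf,\cdot)$. First I would write each censored entry as $A_{ij}=A_{ij}^{+}-A_{ij}^{-}$ and $A_{ij}^2=A_{ij}^{+}+A_{ij}^{-}$, where $A_{ij}^{+}=\mathbbm{1}[A_{ij}=1]$ and $A_{ij}^{-}=\mathbbm{1}[A_{ij}=-1]$. Substituting into the summand defining $e(i,\Hset)$ gives, for each partner $j$,
\begin{align*}
A_{ij}^{+}\big[(T+\Tone)y_iy_j+(T+\Ttwo)\big]+A_{ij}^{-}\big[(\Tone-T)y_iy_j+(\Ttwo-T)\big].
\end{align*}
Thus $e(i,C_1^*\setminus\Tset)-e(i,C_2^*)$ splits into an independent sum over the positive-edge indicators and an independent sum over the negative-edge indicators; under the Poisson approximation these two families are mutually independent, carrying the rate vectors $\gbf$ and $\hbf$ respectively.

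Next, within each family I would group the summands according to whether $y_j=y_i$ or $y_j\neq y_i$. Since $x$ and $y$ are independent and $x$ is balanced, a node $i$ with $y_i$ in the majority auxiliary class has a fraction $\rho$ of aux-aligned partners and $1-\rho$ aux-opposed (and the reverse for the minority class); this is exactly the $\rho$-weighting built into $\eta_1$. Writing $S\triangleq e(i,C_2^*)-e(i,C_1^*\setminus\Tset)$, the event $E_2'$ is $\{S\geq\delta\}$, a large-deviation event because $\mathbb{E}[S]=-\Theta(\log n)$ while $\delta=o(\log n)$.

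I would then invoke Lemma~\ref{BASBM-N-Lemma 5} with $\mu=\delta+\nu$ and $\nu=\sqrt{\log n}$, so that $\mu-\nu=\delta$, the term $|t|\nu$ contributes only $n^{o(1)}$ to the exponent, and $\sigma_{\hat Z}^2/\nu^2\to 0$ drives the variance prefactor to $1$. The moment generating function factorizes over edges and over the two edge families, and its logarithm is of order $\log n$. Minimizing $e^{-t\delta}M_S(t)$ over $t>0$ is the binary symmetric Chernoff computation; the balanced-community symmetry forces the optimizer to $t^{*}=\tfrac12$, which is precisely why the Hellinger (square-root) form of $\eta_1$ appears. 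Because $\delta=o(\log n)$, the shift $e^{-t\delta}$ is absorbed into the $o(1)$, and the rate collapses to $\eta_1(\gbf,\rho)+\eta_1(\hbf,\rho)$ for a majority-class node and to $\eta_1(\gbf,1-\rho)+\eta_1(\hbf,1-\rho)$ for a minority-class node. Collecting the two auxiliary-label cases yields the claimed two-term lower bound, which feeds into the product over $\Tset$ in the main proof.

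The main obstacle is this last step. I expect the delicate points to be verifying that the reverse large-deviation prefactor from Lemma~\ref{BASBM-N-Lemma 5} together with the $\delta$-shift degrade the bound by at most $n^{o(1)}$, and confirming that the Legendre optimization of the \emph{factorized} moment generating function genuinely separates (at $t^{*}=\tfrac12$) into the sum $\eta_1(\gbf,\cdot)+\eta_1(\hbf,\cdot)$ rather than some coupled joint exponent; this additivity rests entirely on the independence of the positive- and negative-edge processes established by the decomposition in the first step.
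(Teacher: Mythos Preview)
Your proposal is correct and follows essentially the same route as the paper, whose entire proof reads ``achieved by applying Lemma~\ref{BASBM-N-Lemma 5} and the Chernoff bound.'' You have simply supplied the details the paper omits---the $A_{ij}=A_{ij}^{+}-A_{ij}^{-}$ decomposition, the asymptotic factorization of the MGF over the positive- and negative-edge families (which is what produces the additive exponent $\eta_1(\gbf,\cdot)+\eta_1(\hbf,\cdot)$), and the observation that the balanced-community symmetry pins the Chernoff optimizer at $t^*=\tfrac12$---so there is nothing to add.
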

\begin{proof}
The proof is achieved by applying Lemma~\ref{BASBM-N-Lemma 5} and the Chernoff bound. 
\end{proof}
Applying Lemma~\ref{Lemma_converse_prob_1-censored} yields
\begin{align*}
\Prob  (E_{2} ) &=1-\prod_{i\in \Tset}   [ 1-\Prob  ( E_{2}'   )   ] \\
&\geq 1-e^{ -n^{-\eta_{1}(\gbf,\rho) -\eta_{1}(\hbf,\rho) +o(1)} -n^{-\eta_{1}(\gbf,1-\rho) -\eta_{1}(\hbf,1-\rho) +o(1)} } .
\end{align*}
Recall that $\rhoy \leq 0.5$ implies
\begin{align*}
    \eta_{1}(\gbf,\rho) +\eta_{1}(\hbf,\rho) \leq \eta_{1}(\gbf,1-\rho) -\eta_{1}(\hbf,1-\rho) ,
\end{align*}
and $\rhoy > 0.5$ implies 
\begin{align*}
    \eta_{1}(\gbf,\rho) +\eta_{1}(\hbf,\rho) \geq \eta_{1}(\gbf,1-\rho) +\eta_{1}(\hbf,1-\rho) .
\end{align*}
When $\rhoy \leq 0.5$, if $\eta_{1}(\gbf,\rho) +\eta_{1}(\hbf,\rho)<1$, then $\Prob  (E_{2} )  \rightarrow 1$. When $\rhoy \geq 0.5$, if $\eta_{1}(\gbf,1-\rho) +\eta_{1}(\hbf,1-\rho) <1$, then $\Prob  (E_{2} )  \rightarrow 1$ and the second part of Theorem~\ref{Theorem-SDP-oublv-censored} follows.
\section{Proof of Theorem \ref{Theorem-SDP-tublv-censored}}
\label{Proof-Theorem-SDP-tublv-censored}

The proof is similar to the proof of Theorem~\ref{Theorem-SDP-tublv}. Here we just mention the proof outlines and important Lemmas for brevity.
The following Lemma declares the sufficient conditions for the optimum solution of~\eqref{tublv-sdp-censored} matching the true labels $x^*$.
\begin{Lemma}
\label{Lemma_sufficient_oublv-censored-t}
For the optimization problem~\eqref{tublv-sdp-censored}, consider the Lagrange multipliers 
\begin{equation*}
\lambda^{*}, \quad D^{*}=\mathrm{diag}(d_{i}^{*}), \quad
S^{*}.
\end{equation*}
If we have 
\begin{align*}
&S^{*} = D^{*}+\lambda^{*}\mathbf{J} -T A -\Ttwo  (A*A) ,\\
&S^{*} \succeq 0, \\
&\lambda_{2}(S^{*}) >  0 ,\\
&S^{*}x^{*} =0 ,
\end{align*}
then $(\lambda^{*}, D^*, S^*)$ is the dual optimal solution and $\Zsdp=x^{*}x^{*T}$ is the unique primal optimal solution of~\eqref{tublv-sdp-censored}.
\end{Lemma}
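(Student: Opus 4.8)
The plan is to reproduce, essentially verbatim, the dual-certificate argument of Lemma~\ref{Lemma_sufficient_oublv}, changing only the objective matrix: in~\eqref{tublv-sdp-censored} the linear functional is $\langle Z, M\rangle$ with $M \triangleq T A + \Ttwo (A*A)$ in place of $\Tone B + \Ttwo A$. I therefore regard the four hypotheses as a candidate dual certificate $(\lambda^{*}, D^{*}, S^{*})$ and verify, by weak duality plus complementary slackness, that they certify $\Zsdp = x^{*}x^{*T}$ as the unique primal optimum of~\eqref{tublv-sdp-censored}. Since the certificate conditions are stated in exactly the form used for the earlier lemmas, this step is purely algebraic and does not touch the model specifics.

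First I would establish primal optimality. For any $Z$ feasible for~\eqref{tublv-sdp-censored}, i.e.\ $Z \succeq 0$, $Z_{ii}=1$, and $\langle Z, \mathbf{J}\rangle = 0$, I would write
\begin{align*}
\langle M, Z\rangle \overset{(a)}{\leq} \langle D^{*}, \mathbf{I}\rangle \overset{(b)}{=} \langle S^{*} - \lambda^{*}\mathbf{J} + M,\, Z^{*}\rangle \overset{(c)}{=} \langle M, Z^{*}\rangle ,
\end{align*}
where $(a)$ uses the stationarity identity $S^{*} = D^{*}+\lambda^{*}\mathbf{J}-M$ together with $\langle \mathbf{J}, Z\rangle=0$, $Z_{ii}=1$, and $\langle S^{*}, Z\rangle \ge 0$ (both matrices positive semidefinite); $(b)$ substitutes the same identity at $Z^{*}$ and uses $Z^{*}_{ii}=1$; and $(c)$ uses $S^{*}x^{*}=0$ together with $x^{*T}\mathbf{1}=0$ to annihilate the $S^{*}$ and $\mathbf{J}$ contributions. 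Hence $Z^{*}=x^{*}x^{*T}$ attains the supremum and is primal optimal, while the matching of objective values shows $(\lambda^{*},D^{*},S^{*})$ is dual optimal.

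Next I would prove uniqueness. If $\tilde{Z}$ is any other optimal solution, complementary slackness gives $\langle S^{*}, \tilde{Z}\rangle = 0$, so the range of $\tilde{Z}$ lies in $\ker S^{*}$; the hypothesis $\lambda_{2}(S^{*})>0$ makes $0$ a simple eigenvalue of $S^{*}$, whence $\ker S^{*}=\mathrm{span}(x^{*})$ and $\tilde{Z}$ is a scalar multiple of $Z^{*}$. The constraint $\tilde{Z}_{ii}=1=Z^{*}_{ii}$ pins that scalar to one, so $\tilde{Z}=Z^{*}$.

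The argument inside this lemma is routine, so the genuine difficulty is deferred to the companion verification that such a certificate exists with probability $1-o(1)$: one takes $d_{i}^{*} = x_{i}^{*}\sum_{j}\big(T A_{ij}+\Ttwo (A*A)_{ij}\big)x_{j}^{*}$, which guarantees $D^{*}x^{*}=Mx^{*}$ and hence $S^{*}x^{*}=0$, and then chooses $\lambda^{*}$ large enough to force $S^{*}\succeq 0$ with $\lambda_{2}(S^{*})>0$. The only model-specific subtlety I would flag for that step is that $M$ now carries the elementwise square $A*A$, which for censored entries $A_{ij}\in\{-1,0,1\}$ reduces to the unsigned edge indicator; this alters the concentration estimates on $\|A*A-\mathbb{E}[A*A]\|$ and on $\min_i d_i^{*}$ (the latter governed by $\eta_2(\gbf,\cdot)+\eta_2(\hbf,\cdot)$) but leaves the certificate algebra above intact.
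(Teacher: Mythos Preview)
Your proof is correct and follows exactly the approach the paper takes: the paper's own proof simply says ``The proof is similar to the proof of Lemma~\ref{Lemma_sufficient_oublv},'' and you have faithfully reproduced that argument with $M = T A + \Ttwo(A*A)$ in place of $\Tone B + \Ttwo A$. Your additional remarks about the subsequent certificate construction (the choice of $d_i^{*}$ and the role of $A*A$ as the unsigned edge indicator) are also accurate and align with what the paper does after this lemma.
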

\begin{proof}
The proof is similar to the proof of Lemma~\ref{Lemma_sufficient_oublv}.
\end{proof}
Let
\begin{align*} 
d_{i}^{*} =&  T \sum_{j=1}^{n} A_{ij} x_{j}^{*}x_{i}^{*} +\Ttwo  \sum_{j=1}^{n} A_{ij}^2 x_{j}^{*}x_{i}^{*}.
\end{align*}
Then $D^{*}x^{*}=T A +\Ttwo  (A*A)$ and based on the definition of $S^{*}$ in  Lemma~\ref{Lemma_sufficient_oublv-censored-t}, $S^{*}$ satisfies the condition $S^{*}x^{*} =0$.
\begin{Lemma}
\label{Lemma_main_one-censored-t}
For $\delta = \frac{\log n}{ \log \log n}$,
\begin{align*}
    \Prob \Big( \min_{i\in [n]}~d_{i}^{*} \geq \delta  \Big ) \geq& 1-n^{1-\eta_{2}(\gbf,\rho) -\eta_{2}(\hbf,\rho) +o(1)} \\
    &-n^{1-\eta_{2}(\gbf,1-\rho) -\eta_{2}(\hbf,1-\rho) +o(1)} .
\end{align*}
\end{Lemma}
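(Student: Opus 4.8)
The plan is to follow the same route used for Lemma~\ref{Lemma_main_two} and Lemma~\ref{Lemma_main_one-censored}: bound $\Prob(d_i^{*}<\delta)$ for a single node $i$ by a Chernoff bound and then take a union bound over $i\in[n]$. The first step is to expose the additive structure of $d_i^{*}$. Writing $A_{ij}^{+}\triangleq\mathbbm{1}\{A_{ij}=1\}$ and $A_{ij}^{-}\triangleq\mathbbm{1}\{A_{ij}=-1\}$, so that $A_{ij}=A_{ij}^{+}-A_{ij}^{-}$ and $A_{ij}^{2}=A_{ij}^{+}+A_{ij}^{-}$, I would rewrite
\begin{align*}
d_i^{*}=x_i^{*}\sum_{j=1}^{n}\big[(T+\Ttwo)A_{ij}^{+}+(\Ttwo-T)A_{ij}^{-}\big]x_j^{*}.
\end{align*}
Splitting the sum according to whether $x_j^{*}=x_i^{*}$ (coefficient $+1$) or $x_j^{*}\neq x_i^{*}$ (coefficient $-1$) makes $d_i^{*}$ a sum of independent per-node terms, each governed by its own positive-edge indicator $A_{ij}^{+}$ and negative-edge indicator $A_{ij}^{-}$.

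Next I would compute $\Prob(d_i^{*}\le\delta)\le\inf_{t>0}e^{t\delta}\,\mathbb{E}[e^{-td_i^{*}}]$. Independence across $j$ factorizes the moment generating function, and in the sparse regime $p^{\pm}=O(\log n/n)$ each single-edge factor equals $1+p^{+}(e^{-t\sigma(T+\Ttwo)}-1)+p^{-}(e^{-t\sigma(\Ttwo-T)}-1)$ with $\sigma\triangleq x_i^{*}x_j^{*}\in\{\pm1\}$, which to leading order is the product of two independent Poisson moment generating functions. This is precisely the asymptotic decoupling of positive and negative edges that underlies the independence of $D$ and $W$ used throughout the paper, and it causes the exponent to separate into a positive-edge part and a negative-edge part. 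Conditioning on $y_i$, among nodes $j$ sharing $i$'s community a fraction $\rho$ share $y_i$ (rates $\gbf_{0},\hbf_{0}$) and a fraction $1-\rho$ do not (rates $\gbf_{2},\hbf_{2}$), while among opposite-community nodes the relevant rates are $\gbf_{1},\hbf_{1}$ and $\gbf_{3},\hbf_{3}$. Carrying out the optimization over $t$ then collapses the positive-edge exponent to $\eta_{2}(\gbf,\rho)\log n$ and the negative-edge exponent to $\eta_{2}(\hbf,\rho)\log n$, exactly as $\eta_{2}(\qbf,\rho)$ arose in Lemma~\ref{Lemma_main_two}, giving $\Prob(d_i^{*}\le\delta)\le n^{-\eta_{2}(\gbf,\rho)-\eta_{2}(\hbf,\rho)+o(1)}$ for a node with $y_i=1$, and the analogue with $\rho\mapsto1-\rho$ for $y_i=-1$.

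Two bookkeeping points then close the argument. First, the shift by $\delta=\log n/\log\log n=o(\log n)$ is harmless: at the optimizing $t^{*}=\Theta(1)$ the factor $e^{t^{*}\delta}=n^{o(1)}$ is absorbed into the $o(1)$ in the exponent. Second, a union bound over the two $y$-classes (each of size at most $n$) yields
\begin{align*}
\Prob\big(\min_{i}d_i^{*}<\delta\big)\le n^{1-\eta_{2}(\gbf,\rho)-\eta_{2}(\hbf,\rho)+o(1)}+n^{1-\eta_{2}(\gbf,1-\rho)-\eta_{2}(\hbf,1-\rho)+o(1)},
\end{align*}
which is the complement of the claimed bound. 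The main obstacle I expect is the explicit Chernoff/moment-generating-function computation confirming that the exponent equals $\eta_{2}(\gbf,\rho)+\eta_{2}(\hbf,\rho)$: one must verify that the decoupling of $A^{+}$ and $A^{-}$ holds to the required precision and that optimizing the combined exponent over $t$ factors cleanly into the two Hellinger-type terms defining $\eta_{2}$. The tail estimate of Lemma~\ref{BCBM-P-Lemma 5} together with the $\rho$-weighted mixing of the $\gbf,\hbf$ rates is the technical crux, with the remaining steps being routine.
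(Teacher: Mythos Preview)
Your proposal follows essentially the same route the paper indicates: decompose $d_i^{*}$ via the positive/negative edge indicators, apply the Chernoff bound node by node, identify the two cases $y_i=\pm 1$ that produce the $\rho$ and $1-\rho$ exponents, and finish with a union bound over $i\in[n]$. The paper's own proof is simply ``applying the Chernoff bound and taking the union bound,'' and your write-up supplies exactly the details that sentence elides, including the correct observation that the $e^{t^{*}\delta}$ shift is $n^{o(1)}$ and is absorbed harmlessly.
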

\begin{proof}
The proof is achieved by applying the Chernoff bound and taking the union bound. 
\end{proof}
Similar to the proof of Theorem~\ref{Theorem-SDP-tublv}, using Lemma~\ref{Lemma_main_one-censored-t}, it can be shown that
 $S^{*}\succeq 0$ and $\lambda_{2}(S^{*})>0$ with probability $1-o(1)$ if
\begin{align*}
\min \big \{ \eta_2(\gbf,\rhoy)+\eta_2(\hbf,\rhoy), \eta_2(\gbf,1-\rhoy)+\eta_2(\hbf,1-\rhoy) \big \} > 1.
\end{align*}

To prove the second part, we start to find when the maximum likelihood estimator fails. To this end, let 
\begin{align*}
    e(i,\Hset) \triangleq \sum_{j \in \Hset} T A_{ij} +\Ttwo A_{ij}^2 .
\end{align*}
The definition of events $F_1$, $F_2$, $E_1$, and $E_2$ in Theorem~\ref{Theorem-SDP-oublv} are used to show that with high probability $\Prob (F_{1} )  \rightarrow 1$ and $\Prob (F_{2} )  \rightarrow 1$. Also, the definitions for $C_{1}^{*}$, $C_{2}^{*}$, and $\Tset$ remain valid for this part.
We prove that $\Prob (F_{1} )  \rightarrow 1$, while $\Prob (F_{2} )  \rightarrow 1$ is proved similarly.
To show that $\Prob (F_{1} )  \rightarrow 1$, we must have
$\Prob (E_{1} )  \rightarrow 1$ and $\Prob (E_{2} )  \rightarrow 1$. 
It can be shown that $\Prob (E_{1} ) \geq 1-n^{-1+o(1)}$ without difficulty. 
\begin{Lemma}
\label{Lemma_converse_prob_1-censored-t}
Let $E_{2}' \triangleq \big \{ e(i,C_{1}^{*} \setminus \Tset)-e(i,C_{2}^{*}) \leq -\delta \big\}$.
Then
\begin{align*}
    \Prob  ( E_{2}'   ) \geq& n^{-\eta_{2}(\gbf,\rho) -\eta_{2}(\hbf,\rho) +o(1)} \\
    &+ n^{-\eta_{2}(\gbf,1-\rho) -\eta_{2}(\hbf,1-\rho) +o(1)}. 
\end{align*}
\end{Lemma}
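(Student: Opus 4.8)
The plan is to mirror the argument for Lemma~\ref{Lemma_converse_prob_2}, the analogous converse bound for the stochastic block model, the one genuinely new feature being that each edge now carries a sign, so the decision statistic breaks into two \emph{independent} pieces. Writing $Z \triangleq e(i,C_2^*)-e(i,C_1^*\setminus\Tset)$, the event $E_2'$ is exactly $\{Z\ge\delta\}$. First I would expand each summand $T A_{ij}+\Ttwo A_{ij}^2$ through the indicators $\mathbbm{1}[A_{ij}=1]$ and $\mathbbm{1}[A_{ij}=-1]$, so that a positive edge is weighted by $T+\Ttwo$ and a negative edge by $\Ttwo-T$. Since positive and negative edges are generated independently, this gives a clean splitting $Z=P+N$, where $P$ collects the positive-edge contributions, $N$ the negative-edge ones, and $P\perp N$.

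Next I would condition on the auxiliary label $y_i$, which fixes the fraction of same-$y$ neighbours to be either $\rho$ or $1-\rho$; these two cases generate the two summands of the claimed bound, and I would carry out the $\rho$ case, the other being identical. In the regime $\bar{Q}=\frac{\log n}{n}Q$, the positive edges from $i$ into $C_1^*\setminus\Tset$ and into $C_2^*$ are, to leading order and by the additive property of the Poisson approximation, independent Poisson variables whose pooled means are proportional to $\rho(1-\xi)\qzero+(1-\rho)\xi\qone$ and $\rho\xi\qtwo+(1-\rho)\xi\qthree$ --- precisely the two radicands inside $\eta_2(\gbf,\rho)$ --- while the negative-edge counts have pooled means matching the radicands of $\eta_2(\hbf,\rho)$. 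This bookkeeping is what pins $\gbf$ to the positive-edge profile and $\hbf$ to the negative-edge profile.

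I would then lower-bound $\Prob(Z\ge\delta)$ using Lemma~\ref{BASBM-N-Lemma 5}. The moment generating function of $Z$ factorizes as $M(t)=M_P(t)M_N(t)$, and the elementary inequality $\min_t[M_P(t)M_N(t)]\ge[\min_t M_P(t)][\min_t M_N(t)]$ lets the positive- and negative-edge exponents add \emph{without} requiring a common optimal tilt for the two pieces. Choosing $\mu$ of order $\delta=\frac{\log n}{\log\log n}$ and $\nu$ large enough that the correction $1-\sigma_{\hat{Z}}^2/\nu^2$ stays bounded below while $e^{-|t|\nu}=n^{o(1)}$, the dominant factor is $\min_t M_P(t)M_N(t)$. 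Each single-piece minimization is the difference-of-Poissons Chernoff computation already carried out for Lemma~\ref{Lemma_converse_prob_2}; for the balanced communities it evaluates to the squared-Hellinger exponents $\eta_2(\gbf,\rho)\log n$ and $\eta_2(\hbf,\rho)\log n$, whose product yields $\Prob(E_2')\ge n^{-\eta_2(\gbf,\rho)-\eta_2(\hbf,\rho)+o(1)}$. Reinstating the $y_i=\pm1$ cases supplies the $1-\rho$ term.

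The main obstacle I anticipate is the rate bookkeeping of the third step: confirming that each single-piece reverse-Chernoff exponent is \emph{exactly} the Hellinger form defining $\eta_2$, including the fact that the constant weights $T+\Ttwo$ and $\Ttwo-T$ only rescale the threshold $\delta$ (which is $o(\log n)$ and hence invisible in the exponent) and that their signs merely select which tail of the relevant Poisson difference is taken, leaving the symmetric exponent unchanged. Controlling the Bahadur--Rao variance factor of Lemma~\ref{BASBM-N-Lemma 5} and passing from this per-node lower bound to $\Prob(E_2)\to1$ through the product over the $|\Tset|\approx n/\log^2 n$ nodes are routine and follow the corresponding steps in the proof of Theorem~\ref{Theorem-SDP-tublv}.
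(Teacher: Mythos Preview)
Your proposal is correct and follows the same approach as the paper, whose proof consists of the single sentence ``The proof is achieved by applying Lemma~\ref{BASBM-N-Lemma 5} and the Chernoff bound.'' Your decomposition of $Z$ into independent positive- and negative-edge pieces, followed by the product inequality $\min_t M_P(t)M_N(t)\ge(\min_t M_P(t))(\min_t M_N(t))$ and the identification of each factor with the already-computed SBM exponent from Lemma~\ref{Lemma_converse_prob_2}, is a natural implementation of that recipe; the only point to make airtight is that the minimizing tilt remains bounded (so that $e^{-t(\mu+\nu)}=n^{o(1)}$ at the minimizer), which follows because the tilted mean of the Poisson-difference pieces matches $\mu+\nu=o(\log n)$ at a $t$ close to the unconstrained minimizer of $M(t)$.
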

\begin{proof}
The proof is achieved by applying Lemma~\ref{BASBM-N-Lemma 5} and the Chernoff bound. 
\end{proof}
Applying Lemma~\ref{Lemma_converse_prob_1-censored-t} yields
\begin{align*}
\Prob  (E_{2} ) &=1-\prod_{i\in \Tset}   [ 1-\Prob  ( E_{2}'   )   ] \\
&\geq 1-e^{ -n^{-\eta_{2}(\gbf,\rho) -\eta_{2}(\hbf,\rho) +o(1)} -n^{-\eta_{2}(\gbf,1-\rho) -\eta_{2}(\hbf,1-\rho) +o(1)} } .
\end{align*}
If
\begin{align*}
    \min \big \{\eta_2(\gbf,\rhoy)+\eta_2(\hbf,\rhoy), \eta_2(\gbf,1-\rhoy)+\eta_2(\hbf,1-\rhoy) \big \} < 1 ,
\end{align*}
then $\Prob  (E_{2} )  \rightarrow 1$ and the second part of Theorem~\ref{Theorem-SDP-tublv-censored} follows.


\bibliographystyle{IEEEtran}
\bibliography{References}

\end{document}